\newtheorem{theorem}{Theorem}
\newtheorem{corollary}{Corollary}
\newtheorem{lemma}{Lemma}
\newtheorem{proposition}{Proposition}
\theoremstyle{definition}
\newtheorem{example}{Example}
\newtheorem{definition}{Definition}
\theoremstyle{remark}
\newtheorem{remark}{Remark}
\newcommand{\bR}{\mathbb R}
\newcommand{\bC}{\mathbb C}
\newcommand{\bZ}{\mathbb Z}
\newcommand{\bP}{\mathbb P}
\newcommand{\Spin}{\mathop{\rm Spin}}
\newcommand{\gen}[1]{\langle #1\rangle}
\newcommand{\cls}[1]{\overline{#1}}
\newcommand{\ph}{\varphi}
\newcommand{\ins}{\subseteq}
\DeclareMathOperator{\Gal}{Gal}
\tikzset{
  big arrow/.style={
    decoration={markings,mark=at position 1 with {\arrow[scale=2.5]{>}}},
    postaction={decorate},
    shorten >=0.4pt}}
\begin{document}
\title[]{Geometrization of $N$-extended 1-Dimensional Supersymmetry Algebras, I.}
\author[C. Doran, J. Kostiuk, K. Iga, G. Landweber, S. M\'{e}ndez-Diez]{Charles Doran, Kevin Iga, Jordan Kostiuk, Greg Landweber, Stefan M\'{e}ndez-Diez}

\begin{abstract}
The problem of classifying off-shell representations of the $N$-extended one-dimensional super Poincar\'{e} algebra is closely related to the study of a class of decorated $N$-regular, $N$-edge colored bipartite graphs known as {\em Adinkras}. In this paper we {\em canonically} realize these graphs as Grothendieck ``dessins d'enfants,'' or Belyi curves uniformized by certain normal torsion-free subgroups of the $(N,N,2)$-triangle group. We exhibit an explicit algebraic model over $\mathbb{Q}(\zeta_{2N})$, as a complete intersection of quadrics in projective space, and use Galois descent to prove that the curves are, in fact, definable over $\mathbb{Q}$ itself. The stage is thereby set for the geometric interpretation of the remaining Adinkra decorations in Part II.
\end{abstract}

\maketitle

\tableofcontents

\section{Introduction}

In mathematics, the term \emph{supersymmetry} is used to describe algebraic structures which possess a $\mathbb{Z}/2\mathbb{Z}$-grading and obey standard sign conventions related to that grading. These algebraic structures can be attached to other mathematical objects which are, say, topological or geometric in nature. As a result, many standard mathematical objects have well-studied ``super" variants, e.g., manifolds $\rightarrow$ super manifolds or Riemann surfaces $\rightarrow$ super Riemann surfaces.

In physics, \textit{supersymmetry} has a much more specific meaning, referring to structures which are equivariant with respect to extensions of the super Poincar\'{e} algebra.
The Lorentz group is the Lie group of isometries of Minkowski space, or more precisely its double cover, replacing ${\rm SO}(1, d-1)$ with ${\rm Spin}(1, d-1)$. The Poincar\'{e} group is the Lorentz group together with translations, $\Spin{(1,d)}\times\bR^{1,d-1}$. The super Poincar\'{e} group is the Lie supergroup obtained by extending the Poincar\'{e} group by infinitesimal odd elements, called supersymmetry generators, whose squares are spacetime derivatives, the infinitesimal generators of translations. At the Lie algebra level, the supersymmetry generators span the odd component of the super Poincar\'{e} algebra. While supersymmetry algebras can refer to extensions of the super Poincar\'{e} algebra, here we will be dealing with only the super Poincar\'{e} algebra.

The physical representations of the super Poincar\'{e} group and super Poincar\'{e} algebra come in two forms. Both are representations on spaces of fields, i.e., maps from Minkowski space to a finite-dimensional $\mathbb{Z}/2\mathbb{Z}$-graded representation of ${\rm Spin}(1, d-1)$. The $\mathbb{Z}/2\mathbb{Z}$-grading decomposes the fields into {\em bosons} and {\em fermions}, and the Lorentz action decomposes the fields into irreducible components, each corresponding to a different type of particle. The assembly of several such particles into a representation of supersymmetry is called a {\em supermultiplet}. The Poincar\'{e} group acts naturally on such spaces of fields, and the question which remains is how the supersymmetry generators in the super Poincar\'{e} algebra will act.

In off-shell representations, the super Poincar\'{e} algebra acts on dynamically unconstrained spaces of fields, while on-shell representations restrict the action to fields which satisfy the equations of motion, usually coming from a Lagrangian via the Euler-Lagrange equations. Although on-shell representations are more complicated physically, they are more natural from the point of view of representation theory. On the other hand, in off-shell representations the supersymmetry is manifest from the description of the particles in the supermultiplet, allowing us to separate the representation theory from the physics, i.e., the Lagrangian, and facilitating quantization.

Graphs known as \emph{Adinkras} were proposed by Faux and Gates in \cite{Faux:2004} as a fruitful way to investigate off-shell representations of the super Poincar\'{e} algebra. These combinatorial objects were rigorously defined, and their connections to Clifford algebras and coding theory explored, in a long series of works by the DFGHILM collaboration \cite{Doran:2007,Doran:2011,Gates:2011}. Adinkras are graphs whose vertices represent the particles in a supermultiplet and whose edges correspond to the supersymmetry generators. In combinatorial terms, Adinkras are $N$-regular, edge $N$-colored bipartite graphs with signs assigned to the edges and heights assigned to the vertices, subject to certain conditions. Details can be found in Section 2 below.

It is useful to think of an \textit{Adinkra} as consisting of a {\em chromotopology}, which captures the underlying bipartite graph with its $N$-coloring, together with two more compatible structures: an {\em odd dashing}, which marks each edge with a sign, and a {\em height assignment}, which labels each of the vertices with an integer. A complete characterization of chromotopologies was achieved in \cite{Doran:2011}. For each $N$, there is a natural chromotopology on the Hamming cube $[0,1]^N$, with vertices labeled by elements of ${\mathbb{F}}_2^N$. The one-skeleton of the Hamming cube serves as a ``universal cover'' for arbitrary chromotopologies, the covering map being realized by taking cosets with respect to doubly even binary linear error correcting codes $C \subseteq {\mathbb{F}}_2^N$.

The purpose of this paper is to show how to canonically associate a Riemann surface to a given chromotopology. The $N$-regular, edge $N$-coloring gives us a cyclic ordering of the edges at each vertex of the graph based on their color. We call such an ordering a \textit{rainbow}. As described in Section $3$, this provides the structure of a ribbon graph, and, following Grothendieck, we are led through the Belyi curve construction to a presentation of the associated Riemann surface as a covering space over $\mathbb{P}^1(\mathbb{C})$, branched over $\{0, 1, \infty\}$, with the graph embedded as the inverse image of the line segment $[0,1]$. One consequence of arriving at a Riemann surface from a chromotopology in this way is that the $2$-faces of this surface are precisely those bounded by $4$-edge cycles with edges colored by two adjacent colors from the rainbow (opposite edges having the same color). Although a change in the order of the colors in our rainbow will a priori yield a different Riemann surface, we show that in fact it results in a global conjugation of the monodromy group of the covering space, and hence in isomorphic Riemann surfaces. This fits well with the expectation from physics, as equivalence under permutation of the colors is a consequence of $R$-symmetry. We also reinterpret a purely combinatorial operation--- the exterior tensor product of Adinkras--- in geometric terms as a multi-point connected sum of the associated Riemann surfaces.

In Section $4$ we go on to give an algebraic presentation of the Riemann surface associated to a chromotopology. This leads us to a canonical Fuchsian uniformization of the curves by normal torsion-free subgroups of the $(N,N,2)$-triangle group. The observation that the $2$-faces are $4$-edge cycles drawn from adjacent rainbow colors is reinterpretted as saying that the uniformization factors through a particular orbicurve corresponding to an index $N$ subgroup of the triangle group. Using this, and a Galois descent, we show further that the Riemann surfaces associated to Adinkra chromotopologies are very special points in moduli, even among Belyi curves. A priori, Belyi curves are defined over some number field; those associated to Adinkras are defined over $\mathbb{Q}$.

Constructions involving Belyi curves have played an important role in several areas of supersymmetric physics in recent years. These include gauge-theoretic applications of dimer models (aka brane tilings) \cite{Hanany}, bipartite field theories and scattering amplitudes \cite{Franco}, and gauge-string duality \cite{Gopakumar}, to name just a few. Although dimer models will play an important role in part II of this paper \cite{Doran:2015}, specifically through the application of the work of Cimasoni and Reshetikhin \cite{Cimasoni:2007}, we make no direct connection here between the geometrization of $N$-extended supersymmetry algebras and these other appearances of Belyi curves in the recent literature.

While not the focus of this paper, we note that the chromotopology ignores two additional structures an Adinkra possesses: an odd dashing and height assignment. In a subsequent paper, \cite{Doran:2015}, we show that the odd dashing defines a spin structure on the associated Riemann surface, which allows us to define a canonical super Riemann surface structure with Ramond punctures following work by Donagi and Witten \cite{Donagi:2013}. We also show in \cite{Doran:2015} that the Adinkra height assignments define a discrete Morse function on the super Riemann surface in the sense of both Banchoff \cite{Banchoff:1970} and Forman \cite{Forman:1998a,Forman:1998b}. The height assignment can also be viewed as a divisor on the (super) Riemann surface. The purpose of this current paper is to give a complete and thorough description of the Riemann surfaces associated to Adinkra chromotopologies, leaving the additional structure these surfaces have to part II.

\section{Review of Adinkras}
\label{sec:RevAd}

As noted in the introduction, the first step in describing irreducible off-shell representations of the $N$-extended $1$-dimensional super Poincar\'{e} algebra is to present them as graphs called Adinkras. We are interested in the elementary $N$-extended Poincar\'{e} superalgebra in $1$-dimensional Minkowski space, also known as the $(1|N)$ superalgebra. Here \emph{elementary} means a classical Lie algebra with no central extensions and no other additional internal bosonic symmetries. We begin this section by reviewing the $(1|N)$ superalgebra. We will then review what an Adinkra is, as well as outlining the main features that will be needed later.

In $1$-dimensional Minkowski space, there is a single time-like direction $\tau$. Translations in this direction are generated by $\partial_\tau$. Therefore $(1|N)$ superalgebras are generated by $\partial_\tau$ and $N$ real supersymmetry generators $Q_I$. The supersymmetry generators commute with $\partial_\tau$ and satisfy the anticommutation relations
\begin{equation}
\label{eq:superalg}
\{Q_I,Q_J\}=2i\delta_{IJ}\partial_\tau,
\end{equation}
where $\delta_{IJ}$ is the Kronecker delta.

In the physics literature, this relation is often written in terms of parameter-dependent operators
\begin{equation}
\delta_Q(\epsilon)\equiv-i\epsilon^IQ_I,
\end{equation}
where $\epsilon^I$ is a set of $N$ Grassmann variables and Einstein's summation convention is being used. With this identification, equation \eqref{eq:superalg} takes the equivalent form
\begin{equation}
[\delta_Q(\epsilon_1),\delta_Q(\epsilon_2)]=2i\epsilon_1^I\epsilon_2^I\partial_\tau.
\end{equation}

Every representation of the $(1|N)$ superalgebra decomposes as a collection of irreducible representations of the $(1|1)$ superalgebra. The $(1|1)$ superalgebra has two irreducible representations, the scalar and spinor multiplets. The scalar multiplet consists of a real commuting bosonic field $\phi$ and a real anticommuting fermionic field $\psi$ with supersymmetry transformations
\begin{align}
Q\,\phi &= \pm\psi,\nonumber\\
Q\,\psi &=\pm i\dot{\phi}.
\label{eq:scaltran}
\end{align}
The spinor representation also consists of a real commuting field $B$ and a real anticommuting field $\eta$, but  with different transformation rules:
\begin{align}
Q\,\eta &=\pm iB,\nonumber\\
Q\, B &=\pm \dot{\eta}.
\label{eq:spintran}
\end{align}

Real, finite-dimensional linear representations of the $(1|N)$ superalgebra are spanned by a basis of real bosonic component fields $\phi_1(\tau),\ldots,\phi_m(\tau)$ and real fermionic component fields $\psi_1(\tau),\ldots,\psi_l(\tau)$. The super supersymmetry generators $Q_1,\ldots,Q_N$ act linearly on the representation and satisfy equation \eqref{eq:superalg}. Such representations are called real supermultiplets, which is why we referred to the scalar and spinor representations of the $(1|1)$ superalgebra as the real and spinor multiplets earlier. We will be interested in off-shell supermultiplets, i.e., supermultiplets whose fields do not satisfy any differential equations other than equation \eqref{eq:superalg}. We will assume all supermultiplets are off-shell unless otherwise stated. An off-shell supermultiplet has as many bosonic component fields as it does fermionic ones.

For off-shell supermultiplets, the supersymmetry transformation rules are
\begin{align}
Q_I\,\phi_A(\tau) &= c\partial_\tau^\lambda\psi_B(\tau),\nonumber\\
Q_I\,\psi_B(\tau) &= \frac{i}{c}\partial_\tau^{1-\lambda}\phi_A(\tau),
\label{eq:suptran}
\end{align}
where $c=\pm1$ and $\lambda=0$ or $1$.

Note that the time derivative has engineering dimension $[\partial_\tau]=1$. It can be seen from equation \eqref{eq:superalg} that $[Q_I]=\frac{1}{2}$. Note that $c$, $\lambda$, and $B$ occurring in equation \eqref{eq:suptran} generally depend on $A$ and $I$. For example, $c$ clearly differentiates between the $\pm$ options  in equations \eqref{eq:scaltran} and \eqref{eq:spintran}. On the other hand, $\lambda$ differentiates between the scalar and spinor multiplet transformations. In order for the component fields to have definite engineering weight, we must have
\begin{equation}
\lambda=[\phi_A]-[\psi_B]+\frac{1}{2},
\end{equation}
assuming the coefficients of equation \eqref{eq:suptran} are dimensionless.

An Adinkra is a graphical representation of a supermultiplet and its supersymmetry transformations, originally proposed in \cite{Faux:2004}. As noted in the introduction, there is a wealth of literature further studying Adinkras and establishing their precise mathematical formulation; see \cite{Doran:2007,Doran:2011,Gates:2011}, for example. A good overview of their mathematical aspects can be found in \cite{Zhang:2011}. An Adinkra is a special bipartite $N$-regular colored graph. The edges have a dashing and an orientation, which defines a height assignment on the vertices.

Consider a $(1|N)$ supermultiplet $\mathcal{M}$ spanned by component fields $\phi_1,\dots,\phi_m,$ $\psi_1,\ldots,\psi_m$. The supermultiplet $\mathcal{M}$ can be represented as an Adinkra if all of the supersymmetry generators send each component field to a single component field. The corresponding Adinkra has a white vertex for each bosonic field $\phi_A$ and a black vertex for each fermionic field $\psi_A$, $1\leq A\leq m$. The white vertex corresponding to $\phi_A$ is connected to the black vertex corresponding to $\psi_B$ by an edge of color $I$ if $Q_I$ sends $\phi_A$ to $\psi_B$ (or its time derivative) by equation \ref{eq:suptran}. The edge is oriented from the white vertex to the black vertex if $\lambda=0$ and the other way if $\lambda=1$. It is dashed if $c=-1$ and solid if $c=1$. This correspondence is depicted in Table \ref{Table:AdinkraAlgebraCorespond}. 
\begin{table}
\centering
\begin{tabular}{c c | c c}
Action of $Q_I$ & Adinkra & Action of $Q_I$ & Adinkra\\
\hline
$Q_I\left[\begin{array}{c} \psi_B\\
\phi_A\end{array}\right]=\left[\begin{array}{c} i\dot{\phi}_A\\
\psi_B\end{array}\right]$ & \begin{tikzpicture}[baseline]
\draw [big arrow] (0,-0.4375) -- (0,0.4375);
\draw[fill] (0,0.5) circle[radius=0.0625];
\draw (0,-0.5) circle[radius=0.0625];
\node[right] at (0,0.5) {\tiny{B}};
\node[right] at (0,-0.5) {\tiny{A}};
\node[left] at (0,0) {\tiny{I}};
\end{tikzpicture} & $Q_I\left[\begin{array}{c} \psi_B\\
\phi_A\end{array}\right]=\left[\begin{array}{c} -i\dot{\phi}_A\\
-\psi_B\end{array}\right]$ & \begin{tikzpicture}[baseline]
\draw[dashed,big arrow] (0,-0.4375) -- (0,0.4375);
\draw[fill] (0,0.5) circle[radius=0.0625];
\draw (0,-0.5) circle[radius=0.0625];
\node[right] at (0,0.5) {\tiny{B}};
\node[right] at (0,-0.5) {\tiny{A}};
\node[left] at (0,0) {\tiny{I}};
\end{tikzpicture}\\
\hline
$Q_I\left[\begin{array}{c} \phi_A\\
\psi_B\end{array}\right]=\left[\begin{array}{c} i\dot{\psi}_B\\
\phi_A\end{array}\right]$ & \begin{tikzpicture}[baseline]
\draw[big arrow] (0,-0.4375) -- (0,0.4375);
\draw (0,0.5) circle[radius=0.0625];
\draw[fill] (0,-0.5) circle[radius=0.0625];
\node[right] at (0,0.5) {\tiny{A}};
\node[right] at (0,-0.5) {\tiny{B}};
\node[left] at (0,0) {\tiny{I}};
\end{tikzpicture} & $Q_I\left[\begin{array}{c} \phi_A\\
\psi_B\end{array}\right]=\left[\begin{array}{c} -i\dot{\psi}_B\\
-\phi_A\end{array}\right]$ & \begin{tikzpicture}[baseline]
\draw[dashed,big arrow] (0,-0.4375) -- (0,0.4375);
\draw (0,0.5) circle[radius=0.0625];
\draw[fill] (0,-0.5) circle[radius=0.0625];
\node[right] at (0,0.5) {\tiny{A}};
\node[right] at (0,-0.5) {\tiny{B}};
\node[left] at (0,0) {\tiny{I}};
\end{tikzpicture}\\
\end{tabular}
\caption{The correspondence between Adinkras and the action of the supersymmetry generators on the component fields. Each white vertex of an Adinkra corresponds to a bosonic component field and its time derivatives. Similarly, each black vertex corresponds to a fermionic component field and its time derivatives. The edges will be colored by color $I$ corresponding to the index of the supersymmetry generator.}
\label{Table:AdinkraAlgebraCorespond}
\end{table}

We  now review some important features of Adinkras that we will need later.
First note that every vertex has exactly one edge of each color adjacent to it. This is because each supercharge acts on each component field, taking it to exactly one other component field (or its time derivative). For any component field $f(\tau)$, $\pm f(\tau)$ and all of its time derivatives are represented by the same vertex in the corresponding Adinkra. By equation \ref{eq:superalg}, $Q_IQ_J=-Q_JQ_I$ for $I\neq J$, so $Q_IQ_Jf$ and $-Q_JQ_If$ are represented by the same vertex in an Adinkra. Thus, traveling along an edge of color $I$ and then an edge of color $J \not= I$ is the same as traveling first along an edge of color $J$ and then along one of color $I$. Another way to say this is that starting at any vertex and taking an edge of color $I$, then one of color $J \not= I$, then $I$ again, and finally $J$ returns us to the same vertex \cite{Faux:2004}. We refer to such a closed loop as a \textit{$2$-colored loop}.

As we travel around each $2$-colored loop, there must be an odd number of dashed edges \cite{Faux:2004}. This follows from the anticommutativity of the supersymmetry generators and the fact that the dashedness of an edge corresponds to the sign of the supersymmetry tranformation; see Table \ref{Table:AdinkraAlgebraCorespond}.

Furthermore, as we travel along a $2$-colored loop counter-clockwise, we  will travel along an even number of edges in agreement with their orientations and and even number of edges against their orientations. Indeed, if we start at a fixed vertex and travel along an edge of color $I$ and then along an edge of color $J$, whether or not you go with the orientations of the edges determines how many factors of $\partial_\tau$ are picked up. Since we must have the same number of time derivatives if we had instead went clockwise (for the engineering dimension to match), we must have travelled with or against the orientations of the same number of edges. Therefore the number of edges we travelled that agreed with the orientations, and the number of edges we travelled against the orientations as we travel along the entire $2$-colored loop must both be even.

The orientation of the edges defines a height function on the vertices of the Adinkra that corresponds to the engineering dimension of the component fields \cite{Doran:2007}. If $V$ is the set of vertices of an Adinkra, a height function is a function $h:V\to\bZ$ such that $h(b)=h(a)+1$ if there is an edge going from $a$ to $b$. Adding a constant to any height function gives another height function. This allows us to choose a height function so that the heights of the bosons are even and the heights of the fermions are odd, which can be normalized so that the height of each vertex is twice the engineering dimension of the corresponding component field. The height of a vertex should be viewed as a minimum engineering dimension of the objects represented by the vertex, since the vertices of an Adinkra represent not only the component fields but also their time derivatives, and every application of $\partial_\tau$ increases the engineering dimension by $1$. See Figure \ref{Fig:N2Adinks} for an example of two different $N=2$ Adinkras with their height functions shown.

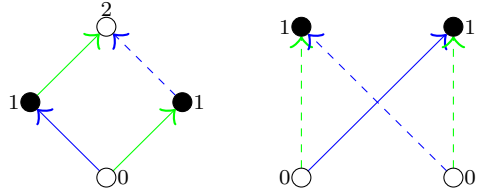
\begin{figure}
\centering
\begin{tabular}{c c c}
\begin{tikzpicture}[baseline,scale=2]
\draw[big arrow,green] (0.0441942,-0.4558058) -- (0.4558058,-0.0441942);
\draw[big arrow,blue] (-0.0441942,-0.4558058) -- (-0.4558058,-0.0441942);
\draw[big arrow,blue,dashed] (0.4558058,0.0441942) -- (0.0441942,0.4558058);
\draw[big arrow,green] (-0.4558058,0.0441942) -- (-0.0441942,0.4558058);
\draw (0,0.5) circle[radius=0.0625];
\draw (0,-0.5) circle[radius=0.0625];
\draw[fill] (0.5,0) circle[radius=0.0625];
\draw[fill] (-0.5,0) circle[radius=0.0625];
\node[right] at(0,-0.5) {\tiny{$0$}};
\node[above] at(0,0.5) {\tiny{$2$}};
\node[right] at(0.5,0) {\tiny{$1$}};
\node[left] at(-0.5,0) {\tiny{$1$}};
\end{tikzpicture} & & \begin{tikzpicture}[baseline, scale=2]
 \draw[big arrow,green,dashed] (-0.5,-0.4375) -- (-0.5,0.4375);
\draw[big arrow,blue] (-0.4558058,-0.4558058) -- (0.4558058,0.4558058);
\draw[big arrow,blue,dashed] (0.4558058,-0.4558058) -- (-0.4558058,0.4558058);
\draw[big arrow,green,dashed] (0.5,-0.4375) -- (0.5,0.4375);
\draw (-0.5,-0.5) circle[radius=0.0625];
\draw (0.5,-0.5) circle[radius=0.0625];
\draw[fill] (0.5,0.5) circle[radius=0.0625];
\draw[fill] (-0.5,0.5) circle[radius=0.0625];
\node[left] at(-0.5,-0.5) {\tiny{$0$}};
\node[right] at(0.5,0.5) {\tiny{$1$}};
\node[left] at(-0.5,0.5) {\tiny{$1$}};
\node[right] at(0.5,-0.5) {\tiny{$0$}};
\end{tikzpicture}
\end{tabular}
\caption{An example of two different $N=2$ Adinkras with the heights of their vertices labeled.}
\label{Fig:N2Adinks}
\end{figure}

Forgetting the dashing, orientation, and coloring of all of the edges in an Adinkra leaves what is called the \textit{topology} of the Adinkra. The topology of an Adinkra together with its edge coloring is called its \textit{chromotopology}. One of the most important Adinkra topologies is the topology of the $N$-cube, $[0,1]^N$. This is the Adinkra topology consisting of the vertices and edges of the $N$-cube. The two Adinkra topologies depicted in Figure \ref{Fig:N2Adinks} are those of the $2$-cube. In fact, they have the same chromotopologies. More generally, every $N$-cube has a unique chromotopology \cite{Doran:2011}. Even though the Adinkras in Figure \ref{Fig:N2Adinks} have the same chromotopologies, they are not the same Adinkras, since they have different orientations (hence height assignments) and dashings. For the remainder of the paper we will ignore Adinkra height assignments and odd dashings and focus on chromotopologies.

The colored $N$-cube, $[0,1]^N$, has $2^N$ vertices and $2^{N-1}N$ edges. We can embed the colored $N$-cube in $\bR^N$ so that the vertices are located at all $2^N$ possible points $(x_1,\ldots,x_N)$ with $x_i=0$ or $1$. In this way, we may associate the vertices of the colored $N$-cube with the elements of $\mathbb{F}_2^N$, where $\mathbb{F}_2$ is the field of two elements. The weight of a vertex is the number of nonzero entries in $(x_1,\ldots,x_N)$. The vertices with even weight are declared white, while the vertices with odd weight are declared black. Two vertices $(x_1,\ldots,x_N)$ and $(y_1,\ldots,y_N)$ are connected by an edge of color $I$ if they differ only in the $I$-th component, i.e., if $x_i=y_i$ for $i\neq I$ and $x_I=1-y_I$.

It was shown in \cite{Doran:2011} that the set of Adinkra chromotopologies is equivalent to the set of colored $N$-cubes mod doubly even codes. A \textit{code} is a linear subspace of $\mathbb{F}_2^N$. A code $\mathcal{C}$ is doubly even if every codeword (element of the code) has weight divisible by $4$. Every code has a basis since it is a linear subspace of $\mathbb{F}_2^N$, and any basis for $\mathcal{C}$ is called a generating set.  The dimension of the code is its dimension as $\mathbb{F}_2$-vector space. For a doubly even code $\mathcal{C}$, $\mathbb{F}_2/\mathcal{C}$ means that we identify vertices of the $N$-cube if they differ by codewords as elements of $\mathbb{F}_2^N$. Furthermore, if the vertex $v$ and the vertex $w$ are identified then, for all $I$,  the edge of color $I$ incident to $v$ is identified with the edge of color $I$ incident to $w$. That is, the vertices of an Adinkra can be viewed as cosets of a doubly even code in $\mathbb{F}_2^N$. In this way, the colored $N$-cube can be thought of as a universal cover for general Adinkra chromotopologies. It is in general quite difficult to find doubly even codes for a given $N$; see \cite{Miller:2013} for an extensive list.

An $R$-symmetry is a symmetry that transforms the supercharges $Q_I$. For real $N$-extended supersymmetry, the group of $R$-symmetries is $O(N)$. Permutation matrices in $O(N)$ permute the $Q_I$. We interpret the action of the permutation subgroup as a permutation action on the colors corresponding to the $Q_I$. As noted in \cite{Doran:2011}, it is not clear that for physical significance it is enough to only consider permutation equivalence. In the following section we will show that Adinkra chromotopologies describe Riemann surfaces and that these surfaces are invariant under the action of the permutation subgroup. The same issues of extending to the entire $R$-symmetry group still exist, but working in a higher-dimensional space may provide new methods for approaching the problem.
\section{The Belyi Curve Associated to an Adinkra Chromotopology}
\label{sec:chromRS}

In the previous section we reviewed how irreducible off-shell representations of $(1|N)$ superalgebras can be presented as graphs called Adinkras. In this section we will see how an Adinkra chromotopology canonically defines a Riemann surface as a covering space of $\bP^1(\bC)$. We do this by first showing that an Adinkra chromotopology has the structure of a ribbon graph and then using the Grothendieck correspondence to associate a Riemann surface to the ribbon graph. We will show that all of the surfaces  we will consider factor through a fixed orbifold, allowing us to study these surfaces in a uniform fashion. After explaining the construction of the Riemann surface, via Grothendieck's theory of ``dessins d'enfants'', we will show that if two Adinkras are related by the permutation subgroup of $R$-symmetry then their corresponding Riemann surfaces are equivalent. Lastly, we will use the result on $R$-symmetric Adinkras to describe how the tensor product of Adinkras can be extended to a well-defined operation on the associated surfaces. 


\subsection{From Chromotopologies to Riemann Surfaces}
\begin{definition}
A \textit{ribbon graph} (also known as a \textit{fat graph}) is a connected graph that assigns to each vertex of the graph a cyclic permutation of the half edges adjacent to the vertex. 

A \textit{rainbow} is a choice of a cyclic ordering of the colors of a colored graph.
\end{definition}

As we saw in the previous section, the chromotopology of an Adinkra is a connected, colored, $N$-regular bipartite graph. 
An Adinkra naturally determines a rainbow given by the order of the supersymmetry generators the colors represent. A chromotopology, together with a rainbow, therefore defines a ribbon graph. The ribbon structure is defined as follows: Given any vertex, the rainbow provides a cyclic permutation of the half-edges incident to that vertex, since there is exactly one half-edge of each color adjacent to it. To each white vertex, we assign precisely this cyclic permutation; to each black vertex, we assign the permutation of the half-edges adjacent to it in the opposite order of the rainbow\footnote{We could have let the elements of $S_{2d}$ at each black vertex have the same order as the rainbow instead, as discussed in \cite{Dessins:2009,Jones:1997}, but this would not take into account the bipartite structure.}.

\begin{definition}
A \emph{dessin d'enfant} (or just \emph{dessin}) is  a pair $(X,\mathcal{D})$ where $X$ is an oriented, compact topological surface and $\mathcal{D}$ is a finite, connected, bipartite graph forming the $1$-skeleton of $X$, i.e., $X-\mathcal{D}$ is the union of finitely many topological discs, called the \emph{faces} of $X$. 
\end{definition}

The Grothendieck correspondence states that a ribbon graph is equivalent to a dessin d'enfant. We will briefly explain the equivalence here, but the reader is encouraged to consult \cite{Jones:1997} or \cite{Dessins:2009} for a more rigorous introduction. As described in \cite{Jones:1997}, the Riemann surface $X$ is built by using the ribbon graph as a $1$-skeleton for $X$ and then ``filling in'' $X$ by attaching $2$-cells
 corresponding to certain closed loops in the graph. Which loops we attach $2$-cells to is determined by the ribbon structure as follows. Suppose we work with a rainbow $(C_1,\dots, C_N)$, and fix a white vertex $w_1$ and a color $C_i$. If we leave this vertex along the half-edge of color $C_i$, we will reach a black vertex $b_1$. Since the ordering at black vertices is opposite to that of the rainbow, we will leave along the half-edge of color $C_{i-1}$ and end up at a white vertex $w_2$ that must be different than $w_1$, since an Adinkra does not have any double edges. At this point, we leave $w_2$ along the half-edge of color $C_i$ to end up at a black vertex $b_2\neq b_1$. Finally, we leave along the half-edge of color $C_{i-1}$ and end up back at $w_1$, since we have now completed a $C_{i-1}/C_{i}$-colored loop. Varying the initial white vertex $w_1$ and color $C_i$, we see that we are attaching $2$-cells to every $C_{i}/C_{i+1}$ colored loop. In this way, we obtain a Riemann surface with the Adinkra as its $1$-skeleton. Note that the faces in this case will be $4$-gons. If we had chosen the order of the colors at the white and black vertices to be the same, then the faces would have been $2N$-gons \cite{Jones:1997}.

In the other direction, if we have a dessin $(X,\mathcal{D})$ then $\mathcal{D}$ is a ribbon graph if we take the cyclic ordering of the half-edges at the white vertices to be the order given by moving counter-clockwise around the white vertex relative to the orientation, and we take the opposite order at the black vertices.

In turn, a dessin d'enfant is equivalent to a Belyi pair, which we now define precisely. 

\begin{definition}
A \it{Belyi pair} $(X,\beta)$ is a closed Riemann surface $X$ equipped with a Belyi map, $\beta:X\to\bP^1(\bC)$ that is ramified at most over $\{0,1,\infty\}$. We refer to $\mathbb{P}^1(\mathbb{C})$ as the \emph{Belyi base}. 
\end{definition}

A dessin naturally defines a Belyi map, namely the map that sends the white vertices to $0$, the black vertices to $1$, the edges to the interval $(0,1)$, and each face to $\mathbb{C}\mathbb{P}^1- [0,1]$, with the center of each face being mapped to $\infty$. In the other direction, given a Belyi pair, we obtain a dessin by taking for the embedded graph $\mathcal{D}$ the pre-image of $[0,1]$. The white vertices are given by the fiber over $0$, the black vertices are given by the fiber over $1$, and the edges are given by the pre-images of the open interval $(0,1)$. This shows that every chromotopology with a rainbow  determines a Belyi pair. For more details, the reader is encouraged to consult \cite{Girondo:Text}.

Using the classification of Adinkra chromotopologies, given in \cite{Doran:2011} and discussed in Section \ref{sec:RevAd}, we refer to an an Adinkra chromotopology obtained from quotienting the colored $N$-cube by a $k$-dimensional doubly even code as an $(N,k)$ \emph{Adinkra chromotopology}. We denote the set of all such chromotopolgies by $\mathcal{A}_{(N,k)}$. Note that for a given $N$ and $k$ there may be more than one chromotopology. In particular, two elements of $\mathcal{A}_{(N,k)}$ will not be equivalent as chromotopologies if they are quotients of the $N$-cube by permutationally inequivalent codes \cite{Doran:2011}. 

\begin{definition}
If $A\in\mathcal{A}_{(N,k)}$ is an Adinkra chromotopology, then $X_A$ denotes the Riemann surface built from $A$ as described above. The set of all Riemann surfaces constructed in this manner will be denoted by $\mathcal{X}_{(N,k)}$.
\end{definition}

Let $A\in\mathcal{A}_{(N,k)}$ and $X=X_A$. The Grothendieck correspondence shows that all of the data of the Belyi pair $(X,\beta)$ can be encoded by its monodromy action. This data is completely determined by two elements $\sigma_0,\sigma_1\in S_d$, where $d$ is the degree of $\beta$. To define these elements, one chooses an unbranched value $z$ on the Belyi base $\mathbb{C}\mathbb{P}^1$ and considers the action of the fundamental group $\pi_1(\mathbb{C}-\{0,1\})$ on the fiber $\beta^{-1}(z)$; $\sigma_0$ describes the action of a simple loop around $0$, while $\sigma_1$ describes the action of a simple loop around $1$, both loops being followed counter-clockwise. We will describe the elements $\sigma_0$ and $\sigma_1$ as permutations of the edges of the embedded graph in $X$ since the Belyi map $\beta$ is unramified over the edges.

Since $A$ has $2^{N-k-1}$ white vertices, there are $d=2^{N-k-1}N$ edges in total, as there is a unique edge of a given color coming out of every white vertex. Since $\beta$ is unramified over the edges, it follows that $\beta$ has degree $d$.

Now let us describe $\sigma_0$ and $\sigma_1$.  Assume that we have labeled the edges of $A$ in some way. At each white vertex $w$, define an $N$-cycle $\sigma_0^{(w)}$ by listing the $N$ edges incident to $w$ in the order of the rainbow. Similarly, for each black vertex $b$, let $\sigma_1^{(b)}$ be the $N$-cycle obtained by listing the $N$ edges incident to $b$ in the opposite order of the rainbow. Then we have
$$\sigma_0=\prod_w \sigma_0^{(w)},\ \ 
\sigma_1=\prod_b \sigma_0^{(b)},$$
the products being taken over the $2^{N-k-1}$ white and black vertices, respectively. 

\begin{definition}
The pair of elements $(\sigma_0,\sigma_1)$ constructed above is called the \emph{permutation representation pair} for the Belyi curve $(X_{(N,k)},\beta)$. 
\end{definition}

We will now demonstrate how these elements describe the monodromy action. Let $\Sigma_0\subseteq\mathbb{C}\mathbb{P}^1$ denote the graph consisting of the closed interval $[0,1]$, and let $0$ and $1$ be the white vertex and black vertex respectively. By construction of $\beta$, we have $\beta(A)=\Sigma_0$. Consider a small loop $\gamma$ that travels counter-clockwise around $0\in\bP^1(\bC)$, taking the base point to be where $\gamma$ intersects the single edge $\Sigma_0$. There are $2^{N-k-1}N$ lifts of the base point, each lying on a unique edge. By construction, the lift of $\gamma$ with initial point lying on the edge of color $C_i$ incident to the white vertex $w$ has terminal point lying on the edge of color $C_{i+1}$ incident to the same white vertex. Therefore, the monodromy action at $0$ is described by sending the edge of color $C_i$ incident to $w$ to the edge of color $C_{i+1}$ incident to $w$. Such a lift has order $N$, which is why $\sigma_0$ is the disjoint product of $N$-cycles when presented as an element of $S_d$; a similar argument applies to $\sigma_1$. 
 
Note that the order of the map $\beta$ is $d$. This can be determined from $\sigma_i\in S_d$, $i\in\{0,1\}$. Since $\sigma_i$ contains $2^{N-k-1}$ disjoint $N$-cycles, the order of $\beta$ is the number of disjoint cycles in $\sigma_i$ times the length of the cycles themselves.  Further, the fact that $\sigma_i$ is the product of disjoint $N$-cycles corresponds to $X$ having order $N$ ramification over the Belyi base at each vertex. This could have also been seen directly, since the degree of $\beta$ is $2^{N-k-1}N$, but there are only $2^{N-k-1}$ vertices of each color. 

The monodromy over $\infty$ is given by
\begin{equation}
\sigma_\infty=\sigma_1\sigma_0.
\end{equation}
Note that it is common to see $\sigma_\infty$ defined as $(\sigma_0\sigma_1)^{-1}$. This gives an equivalent description of the faces, as in \cite{Girondo:Text}.
The element $\sigma_\infty$ is the product of $2^{N-k-2}N$ disjoint $2$-cycles, with each transposition consisting of two of the edges of the same color that make up a $2$-colored face. Note that these two edges uniquely determine the face. To see how $\sigma_\infty$ describes the monodromy, we consider a loop $\gamma$ that travels clockwise around $\infty\in\bP^1(\bC)$, relative to the orientation of $\bP^1(\bC)$ with base point lying on the edge $\Sigma_0$. By construction, the lift of $\gamma$ with initial point lying on the edge of color $C_i$ has terminal point lying on the edge of color $C_i$ that makes up the other edge in the $C_i/C_{i+1}$ $2$-colored face. This is exactly the data encoded in $\sigma_\infty$.

 We could also view $\sigma_\infty$ as listing the two edges that are incoming to the two white vertices that make up the face (or the two edges that are outgoing from the two black vertices in the face). Here, \emph{incoming} and \emph{outgoing} refer to movement along a lift of $\gamma$ in the clockwise direction relative to the orientation that $X$ inerhits from $\beta\colon X\to\mathbb{P}^1(\mathbb{C})$. We have chosen ``clockwise'' here so that the notion of ``incoming'' and ``outgoing matches'' the rainbow at the white vertices. If we instead chose ``counter-clockwise'', we would just need to exchange ``outgoing'' and ``incoming''. This is a result of choosing counter-clockwise as the orientation of the rainbow at the white vertices. 
 
The fact that $\sigma_\infty$ is a product of disjoint $2$-cycles corresponds to $X_{(N,k)}$ having order $2$ ramification over $\infty$, which can again be seen directly by noting that there are precisely $2^{N-k-2}N$ faces.

While $\sigma_\infty$ completely determines the $2$-cells of the Belyi pair, it is not always as convenient since the four edges making up each face cannot be determined at a glance. It is therefore sometimes convenient to look at the element $\pi_\infty$ that consists of $2^{N-k-2}N$ disjoint $4$-cycles, each cycle listing the edges that make up each face as we move around the face clockwise relative to the orientation. We obtain $\sigma_\infty$ from $\pi_\infty$ by dropping the edges incoming at the black vertices.

\begin{proposition}
\label{prop:genus}
For $N\geq 2$, the genus of $X\in \mathcal{X}_{(N,k)}$  is $g=1+2^{N-k-3}(N-4)$. Furthermore, for $1\leq N \leq 3$ there are no doubly even codes and any $X\in\mathcal{X}_{(N,0)}$ has genus $0$.
\end{proposition}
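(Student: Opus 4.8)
The plan is to read off the genus from the Euler characteristic of the closed surface $X_{(N,k)}$. The Grothendieck construction endows $X_{(N,k)}$ with a CW structure whose $0$- and $1$-cells are the vertices and edges of the embedded chromotopology $A_{(N,k)}$ and whose $2$-cells are the faces capped off by the construction; thus $\chi(X_{(N,k)}) = V - E + F = 2 - 2g$, and everything reduces to counting $V$, $E$, and $F$.

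The counts of $V$ and $E$ were already recorded: $A_{(N,k)}$ has $V = 2^{N-k}$ vertices and $E = d = N\,2^{N-k-1}$ edges. The only new ingredient is the face count. By the discussion preceding the proposition, the $2$-faces of $X_{(N,k)}$ are precisely the quadrilaterals bounded by the $C_i/C_{i+1}$ two-colored loops, where $(C_1,\dots,C_N)$ is the rainbow and the indices run cyclically; so there are exactly $N$ unordered pairs of cyclically adjacent colors, independently of the chosen rainbow. Fix such a pair $\{I,J\}$. Since $C$ is doubly even it contains no nonzero vector of weight $\le 2$, so each $I/J$ two-colored loop is a genuine $4$-cycle, using two color-$I$ edges and two color-$J$ edges; conversely each color-$I$ edge lies on exactly one such loop. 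As there are $2^{N-k-1}$ white vertices, hence $2^{N-k-1}$ edges of color $I$, there are $2^{N-k-1}/2 = 2^{N-k-2}$ loops for each adjacent pair, and therefore $F = N\,2^{N-k-2}$.

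Substituting,
\[
2 - 2g \;=\; 2^{N-k} - N\,2^{N-k-1} + N\,2^{N-k-2} \;=\; 2^{N-k-2}\bigl(4 - 2N + N\bigr) \;=\; 2^{N-k-2}(4 - N),
\]
which gives $g = 1 + 2^{N-k-3}(N-4)$. (The exponents make sense because a doubly even code is self-orthogonal, so $k \le N/2$ and $N - k \ge 2$ for $N \ge 2$.) For the remaining cases, if $1 \le N \le 3$ then the only vector of $\mathbb{F}_2^N$ with weight divisible by $4$ is $0$, so the only doubly even code is trivial and $k = 0$ is forced; for $N = 2, 3$ the displayed formula already yields $g = 1 + 2^{N-3}(N-4) = 0$, while for $N = 1$ the chromotopology $I^1$ is a single edge with one color and no two-colored loops, the ribbon graph has a single boundary walk, the construction caps it with one $2$-cell, and $V - E + F = 2 - 1 + 1 = 2$ gives $g = 0$.

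I expect the one point needing care to be the face count: confirming that distinct adjacent-color pairs and distinct loops within a pair bound $2$-cells with disjoint interiors that together exhaust the faces, and that no face traverses a loop more than once. This is essentially the content of the ``follow color $I$, then $J$, then $I$, then $J$'' argument in the paragraph before the proposition, so the proof is mostly a matter of combining that observation with the Euler-characteristic bookkeeping; the arithmetic itself is immediate.
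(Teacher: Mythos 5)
Your proof is correct and follows essentially the same route as the paper: build the cellular decomposition from the embedded dessin, count the $N\,2^{N-k-2}$ quadrilateral faces by pairing up the edges of each color within each adjacent-color class, and read off the genus from $V-E+F=2-2g$. The extra observations you add (that double evenness rules out degenerate $2$-colored loops, and the explicit $N=1$ case) are sound refinements of the same argument.
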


\begin{proof}
According to \cite[Prop 4.10]{Girondo:Text}, we have that
\begin{eqnarray}
2-2g&=&(\#\{\textrm{cycles of}\  \sigma_0\}+\#\{\textrm{cycles of} \ \sigma_1\})\nonumber\\
&&-\#\{\textrm{cycles of}\ \sigma_\infty\}.\nonumber
\end{eqnarray}

Since there are $2^{N-k-1}$ cycles in $\sigma_0$ and $\sigma_1$, and since there are $2^{N-k-2}N$ cycles in $\sigma_\infty$, we conclude that
$$2-2g=2^{N-k}-2^{N-k-2}N.$$

Note that this is just the Euler characteristic coming from the cellular decomposition of $X$ as a dessin. Solving for $g$ yields the formula. 

The maximum weight of an element of $\mathbb{F}_2^N$ is $N$. Therefore, for $N\leq 3$ the weight of a codeword cannot be divisible by $4$, so there can be no doubly even codes. For $N=2,3$, the formula for the genus with $k=0$ shows the genus is $0$. For $N=1$, the associated surface is the Belyi base, whose  genus is $0$.
\end{proof}

Since all the information about $X\in\mathcal{X}_{(N,k)}$ is encoded in its monodromy representation, we will generally use the monodromy representation to describe the Belyi pair $(X,\beta)$.


\subsection{Covering Space Theory for Adinkras}
\label{sec:GalCov}

Consider the Belyi pair $(B_N,\tilde{\beta})$, defined in \cite{Jones:1997}, given by $\mathbb{C}\mathbb{P}^1$ with a single white vertex at $0$, a single black vertex at $\infty$, and $N$ edges joining the two points given by lines with argument equal to $\frac{2\pi j}{N}$ for $j=1,\dots, N$. The Belyi map for $B_N$ is given by\footnote{There is a typo in \cite{Jones:1997} that incorrectly states $\tilde{\beta}=\left(\frac{x}{x-1}\right)^n$. This error is carried through to the computation of $\tilde{\beta}^{-1}$.}

$$\tilde{\beta}(x)=\frac{x^N}{x^N+1}.$$

Note that $(B_N,\tilde{\beta})$ is indeed a Belyi pair: $\tilde{\beta}$ is a degree $N$ covering that has order $N$ ramification over $0$ and $1$ (at $0$ and $\infty$ in $B_N)$, and is unramified everywhere else. 
In this section, we will first show that all of the Belyi pairs $(X,\beta)$ for $X\in\mathcal{X}_{(N,k)}$ factor through $B_N$. After establishing this fact, we will show if $X\in\mathcal{X}_{(N,0)}$ corresponds to the hypercube Adinkra $A$ and $X'\in\mathcal{X}_{(N,k)}$ corresponds to the Adinkra obtained by quotienting $A$ by a doubly even code $\mathcal{C}_k$, then the map $X\to B_N$ factors through $X'\to B_N$.

In order to proceed, we will need to describe the monodromy elements $\sigma_0$ and $\sigma_1$ more explicitly. We now fix a doubly even code $\mathcal{C}_k\subseteq C_N$, where $C_N$ is the maximal even code inside  $\mathbb{F}_2^N$, and consider the associated Adinkra $A$ with rainbow $(1,2,\dots,N)$; let $(X,\beta)$ be the Belyi pair associated to $A$. 
As described above, everything is determined once we fix a labeling of the edges. The white vertices of $A_{A}$ are the elements of the orbit space $C/\mathcal{C}_k$, while the black vertices are the elements of $D/\mathcal{C}_k$, where $D$ is the set of odd elements in $\mathbb{F}_2^N$. Each edge of color $i$ is incident to a unique white vertex $c$; let us call this edge $i_c$. Let $I=\{i_c| i=1,\dots, N, c\in C/\mathcal{C}_k\}$. We will describe the monodromies as elements of $S_I$. Since the rainbow is given by $(1,2,\dots, N)$, the way we have labeled the edges in this case makes writing down $\sigma_0$ quite simple. We see at once that
$$\sigma_0^{(c)}=(1_c,\dots,N_c),$$
and therefore 
$$\sigma_0=\prod_{c\in C/\mathcal{C}_k}(1_c,\dots, N_c).$$

We need to do a little more work to describe $\sigma_1$. The edge of color $i$ incident to a black vertex indexed by $d\in D/\mathcal{C}_k$ is incident to the white vertex $d+e_i\in C/\mathcal{C}_k$, where $e_i$ is the $i$-th standard basis vector of $\mathbb{F}_2^N$. Note that throughout we are working with equivalence classes of elements in $\mathbb{F}_2^N/\mathcal{C}_k$. It follows that
$$\sigma_1^{(d)}=(N_{d+e_N},\dots, 1_{d+e_1}),$$
and therefore
$$\sigma_1=\prod_{d\in D/\mathcal{C}_k}(N_{d+e_N},\dots, 1_{d+e_1}).$$

Before computing the product, let us introduce some notation. For each $i=1,\dots, N-1$, let $c_i$ be the element of $C$ that is zero everywhere except for the $i$-th and $(i+1)$-th positions. Note that $\{c_i\}$ is a generating set for $C$. Let $$c_N=\sum_{i=1}^{N-1}c_i.$$ The element $c_N$ has zero entries everywhere except for the first entry and last entry. 

We can compute $\sigma_\infty=\sigma_1\sigma_0$ explicitly from the above description. If we start with an edge $i_c$, then $\sigma_0$ takes $i_c$ to $(i+1)_{c}$ with the convention that we compute $i+1$ modulo $N$ using $\{1,\dots, N\}$ as a set of representatives. Applying $\sigma_1$, we obtain the element $i_{c+c_{i}}$;thus, $\sigma_1\sigma_0$ takes $i_c$ to $i_{c+c_i}$. 
Therefore, if we let $H_i$ be a set of orbit representatives for the action of $\gen{c_i}$ on  $C/\mathcal{C}_k$, we can write
$$\sigma_\infty=\prod_{i=1}^N\prod_{c\in H_i}(i_c,i_{c+c_i}).$$
Choosing a different set of orbit representative amounts to possibly  changing the order of the two elements in each transposition, which does not change the group element. Therefore, it does not matter how we choose such a set of representatives. 
The white vertex $c+c_i$ is obtained from $c$ by traveling along the edge of color $i$ and then the edge of color $i+1$, so that we could have predicted how $\sigma_\infty$ would look from our earlier description. Using this labeling, we remark that
$$\pi_\infty=\prod_{i=1}^N\prod_{c\in H_i}(i_c,(i+1)_c,i_{c+c_i},(i+1)_{c+c_i}).$$

\begin{theorem}
\label{Beachball}
The Belyi pair $(X,\beta)$ with rainbow $(1,2,\ldots,N)$ factors through the Belyi pair $(B_N,\tilde{\beta})$ with rainbow $(1,2,\dots, N)$ with the edge of color $j$ being given by ray having argument $2\pi j/N$.
\end{theorem}

\begin{proof} Jones proved the result in \cite{Jones:1997} for the Riemann surface  associated to the $N$-cube Adinkra, and we will now extend this result the Riemann surfaces in $\mathcal{X}_{(N,k)}$ for any $k$. First, we will recall the proof given in \cite{Jones:1997} for the $N$-cube. Let $A_N$ denote the $N$-cube Adinkra and let $X_N$ denote the associated Riemann surface. 
The Belyi pair for $(X_N,\beta)$ factors through $(B_N,\tilde{\beta})$ because the automorphism group of the $N$-cube contains a normal subgroup isomorphic to the direct sum of $2^{N-1}$ copies of $\mathbb{Z}/2\mathbb{Z}$ generated by half-turns of faces. Note that the generating set equates the two white vertices that make up a face with each other, as well as equating the two black vertices that are incident to the face with each other. 

In the language of \cite{Doran:2011}, the quotient $f_{X_N}\colon X_N\to B_N$ is the quotient of the cube by the maximal even subcode $C_N\ins\mathbb{F}_2^N$.  The quotient of $A_N$ by $C$ is clearly the embedded graph of $B_N$; let us call it $\Sigma_N$. All of the white vertices are in $C_N$, while all of the black vertices lie outside of $C_N$. Following \cite{Doran:2011}, we equate edges of the same color incident to equivalent points. Therefore all of the edges of a given color are equated, so that $f_{X_N}(A_N)=\Sigma_N$. Note that the generators of $C_N$ connect the two white vertices incident to the $2$-colored faces, so the generators of $C_N$ are equivalent to the generators of the subgroup $\textrm{Aut}(A_N)$. 

The desired factorization follows from the compatibility of the monodromy actions. Therefore, we have a commutative diagram
$$\xymatrix{\pi_1(\mathbb{C}-\{0,1\})\times\beta^{-1}(e)\ar[r]\ar_{\mathrm{id}\times f_{X_N}}[d]&\beta^{-1}(e)\ar[d]^{f_{X_N}}\\
\pi_1(\mathbb{C}-\{0,1\})\times\tilde{\beta}^{-1}(e)\ar[r]&\tilde{\beta}^{-1}(e)}$$
where $e$ denotes the single edge in the Belyi base and the horizontal arrows are given by the monodromy actions. 

The monodromy of $\tilde{\beta}$ is given by $\tilde{\sigma}_0=(1,2,\dots,N)$ and $\tilde{\sigma}_1=(N,\dots, 1)$, where we label the edge of color $i$ by $i$. If we start with an edge $i_c$ on the top-left and follow the diagram clockwise, we reach the edge 
$$f_{X_N}(\sigma_0(i_c))=f_{X_N}((i+1)_c)=i+1$$
by the construction of $f_{X_N}$. On the other hand, following the diagram counter-clockwise  produces the edge
$$\tilde{\sigma}_0(f_{X_N}(i_c))=\tilde{\sigma}_0(i)=i+1.$$

A similar argument applies to $\sigma_1$, showing that the monodromies are indeed compatible. It now follows from covering space theory that $\beta=\tilde{\beta}\circ f_{X_N}$ , see \cite{DanishNotes} for example.

Let us now consider the general case of $X\in\mathcal{X}_{(N,k)}$ for a doubly even code $\mathcal{C}_k$ and let $A$ denote the Adinkra out of which $X$ is constructed. Since $\mathcal{C}_k\subseteq C_N$, the quotient of $A$ by $C_N$ is well-defined and we have
$$A/C_N\cong A_N/C_N.$$

This induces a well-defined map $f_X\colon X\to B_N$. The map $f_X$ sends all the white vertices to $0$, sends all  the black vertices to $\infty$, and identifies all edges of a given color $i$ with the edge of color $i$ in $\Sigma_N$.

We can argue exactly as we did earlier to show that the monodromy actions are compatible. The factorization of $\beta$ follows.
\end{proof}

It would be nice to explicitly see the action of $f_X$ on the faces of $X$. The monodromy $\sigma_\infty$ is what describes this action. However, under the projection induced by $f_X$, all of the disjoint $2$-cycles in $\sigma_\infty$ are sent to the identity element since all edges of the same color are identified via $f_X$. Despite the appearance of losing information about the faces, this is actually an important observation. We remarked earlier that $\beta$ has order $N$ ramification over $0$ and $1$ and order $2$ ramification over $\infty$, while $\tilde{\beta}$ has order $N$ ramification over $0$ and $1$ and is unramified elsewhere. That $\tilde{\beta}$ is unramified over $\infty$ is seen from the fact that $\sigma_\infty$ projects trivially. Furthermore, we see that all of the ramification of $X$ over the Belyi basis is split, so that all of the order $N$ ramification (over $0$ and  $1$) occurs in $\tilde{\beta}\colon B_N\to\mathbb{C}\mathbb{P}^1$ and all of the order $2$ ramification occurs in the map $f_X\colon X\to B_N$ at the centers of the $2$-colored faces. 

The data of the faces of $B_N$ is better represented by a certain element $\tilde{\pi}_\infty$ that lists the edges of the faces going clockwise. For $B_N$ with rainbow $(1,2,\dots,N)$, we can label the edges with the numbers $1$ through $N$ in the obvious manner. Then
$$\tilde{\pi}_\infty=\prod_{i=1}^N(i,i+1).$$

All of the faces of $B_N$ are $2$-gons, and  the transpositions in $\tilde{\pi}_\infty$ simply list the edges that make up each bi-gon. Recall from the paragraph before Proposition 1 that the element $\pi_\infty\in S_I$ describe the faces of $X$, and that each $4$-cycle in $\pi_\infty$ lists the edges of the faces as we move around clockwise. The map $f_X$ identifies edges of the same color in $\pi_\infty$. Therefore, forgetting about the last two entries and applying the natural map $S_I\to S_N$ induced by $i_c\mapsto i$, we obtain $\tilde{\pi}_\infty$ from $\pi_\infty$. Therefore, we can view the action of $f_X$ on the faces in two parts. First, each $2$-colored $4$-gon in $X$ is  mapped to a face with two sides by identifying the points opposite to each other (accounting for the order $2$ ramification); then all of the $2$-gons with the same $2$-color boundary are identified. 

Note that from this point of view, we see that the information of the rainbow is completely contained in $B_N$. The rainbow is what determines which $2$-colored loops of $A$ are filled in to create $X$, but the faces can be viewed as the pre-images of the faces of $B_N$. Fixing the rainbow for $B_N$ fixes which $2$-colored $2$-gons of $\Sigma_N$ are filled in, and this determines which faces in $A$ are filled in. Lastly, $B_N$ is the first possible place the rainbow can be seen, since it is where the edge in the Belyi base first splits into $N$ colored edges.

If we consider the Riemann surfaces $X_N$ and $X\in\mathcal{X}_{(N,k)}$ with rainbow $(1,2,\dots, N)$, we have shown the existence of maps $f_{X_N}\colon X_N\to B_N$ and $f_X\colon X\to B_N$ that fit into the following commutative diagram:

$$\xymatrix{X_N\ar_{\beta_{X_N}}[ddr]\ar^{f_{X_N}}[dr]&&X\ar_{f_X}[dl]\ar^{\beta_X}[ddl]\\ 
& B_N\ar^{\tilde{\beta}}[d]&\\
&\mathbb{C}\mathbb{P}^1&}$$

We will now show that  the map $f_{X_N}\colon X_N\to B_N$ factors through $f_X$. In order to accomplish this, we will argue similarly as was done for the factorization through $B_N$. We will work with the monodromy groups of these Riemann surfaces over $B_N$ and show that they are compatible. In order to do this, we will first need to describe the monodromy groups of each of the maps $f_X$. Note that since the maps $f_X$ are unramified over the vertices, we can describe the monodromies as permutations of the white vertices; we could not define the monodromies of the Belyi map in terms of the vertices because the Belyi map was ramified there.

\begin{theorem}
\label{Thm:monBN}
The monodromy group of $(X,f_X)$ is described by the elements
$$\rho_i=\prod_{c\in H_i} (c,c+c_i)$$
of $S_{C/\mathcal{C}_k}$, $1 \leq i \leq N$,
where $H_i$ is a set of orbit representatives for the action of $\gen{c_i}$ on  $C/\mathcal{C}_k$.

The monodromy group can be generated by $N-k-1$ elements. 

\end{theorem}
\begin{proof}
The analysis above shows that $f_X\colon X\to B_N$ is an order $2^{N-k-1}$ covering map with order $2$ ramification at the centers of the $2$-faces. Note that the ramification is over the roots of $-1$ in $B_N$. Therefore, we can describe the monodromy group of $f_X$ by giving the generators that are the monodromies over the centers of the $N$ faces of $B_N$. Assign the label $i_{f_X}$ to the center of the $i/(i+1)$ $2$-colored face and consider the loop based at the white vertex that travels along the edge of color $i$ and returns along the edge of color $i+1$. This loop lifts via $f_X$ to a path starting at a white vertex $c$ and ending at the white vertex $c+c_i$. The lift of the loop with starting point $c+c_i$ returns us to $c$, whence the result.

If we look only at the monodromy group of $f\colon X_N\to B_N$, then we find that
$$\prod_{i=1}^N\rho_i=1$$ since $c_1+\cdots+c_N=0$. Therefore, the monodromy group for $f$ can be generated by $N-1$ elements. This can also be seen from the fact that the path obtained by traveling along all of the $i/(i+1)$ loops in succession is null-homotopic in $B_N$ punctured at the centers of the $N$ faces.

In the general case, let $v_1,\dots, v_k$ be  generators of $\mathcal{C}_k$. Writing each of the $v_i$ as a linear combination of the elements $c_1,\dots, c_{N-1}$ in $C$, we obtain $k+1$ relations among the $\rho_i$, showing that the monodromy group can be generated by $N-k-1$ elements. 
\end{proof}

\begin{theorem}
\label{Thm:AdinkFact}
The pair $(X_N,f_{X_N})$ factors through $(X,f_X)$ for any $X\in\mathcal{X}_{(N,k)}$.
\end{theorem}
\begin{proof}

Let $p\colon X_N\to X$ be the natural projection induced by the map $A_N\to A=A_N/\mathcal{C}_k$ for a doubly even code $\mathcal{C}_k$. We will argue that the monodromy actions are compatible, from which the desired factorization follows. We need to show that the following diagram commutes, where $w$ is a white vertex and  $B_N^*$ denotes $B_N$ with the centers of the faces removed:

$$\xymatrix{\pi_1(B_N^*)\times f_{X_N}^{-1}(w)\ar[r]\ar[d]_{\mathrm{id}\times p}&f_{X_N}^{-1}(w)\ar[d]^{p}\\
\pi_1(B_N^*)\times f_X^{-1}(w)\ar[r]&f_X^{-1}(w)}$$

If we start with a white vertex $c$ and a monodromy generator $\rho_i$ and follow the diagram clockwise, we are left with
$$p(\rho_i(c))=p(c+c_i)=[c+c_i]$$
where $[\cdot]$ is being used to emphasize the fact that the right-hand side is now an equivalence class in $C/\mathcal{C}_k$. Similarly, if we follow the diagram counter-clockwise, we are left with
$$\cls{\rho_i}(p(c))=\cls{\rho_i}([c])=[c]+[c_i]=[c+c_i],$$
where $\cls{\rho}_i$ is the $i$-th monodromy generator for $X_{(N,k)}$.
\end{proof}

Note that in the above proof we see a new reason  we must quotient the $N$-cube  by a doubly even code to get an Adinkra, instead of an arbitrary code. The code has to be even to preserve the bipartite structure. However, if the code is not doubly even, then it is possible that the faces would  not map to faces of the correct form. For example, quotienting by the maximal even code produces bi-gon faces instead of $4$-gon faces. That $p$ maps the faces in the desired fashion above is seen from the fact that the two elements appearing in each $2$-cycle, $c$ and $c+c_i$, are distinct as elements of $C/\mathcal{C}_k$. In summary,   evenness  ensures the  preservation of the bipartite structure, and the double evenness ensures the proper action on the faces of the induced Riemann surfaces. 

\begin{corollary}
The Belyi pair $(X_N,\beta)$ factors through the Belyi pair $(X,\beta_X)$ for $X\in \mathcal{X}_{(N,k)}$. 
\end{corollary}

We now have the following diagram of morphisms:

$$\xymatrix{X_N\ar^p[d]\\ X\ar^{f_X}[d]\\ B_N\ar^{\tilde{\beta}}[d]\\ \mathbb{P}^1(\mathbb{C})}$$

The significance of Theorem \ref{Beachball} is that it now makes sense to study the Riemann surfaces in $\mathcal{X}_{(N,k)}$ as branched covers of $B_N$ rather than covers of $\mathbb{C}\mathbb{P}^1$. One reason  we would prefer this situation is that there is less ramification to worry about if we are working over $B_N$. In turn, Theorem \ref{Thm:AdinkFact} asserts that the curves $X$ are intermediate covers of $X_N\to B_N$. Therefore, the study of the curves lying in $\mathcal{X}_{(N,k)}$ is closely related to the study of the cover $X_N\to B_N$, with the elements of $\mathcal{X}_{(N,k)}$ corresponding to subgroups of the deck transformation group. 

We conclude this section by observing that one can build the monodromy of $\beta_X$ with the monodromies of $f_X$ and the rainbow. Indeed, $\sigma_0(X)$ can be built from the rainbow alone by indexing the white vertices by elements $c\in C/\mathcal{C}_k$ and labeling the edge of color $i$ at that vertex with $i_c$. 
On the other hand, $\sigma_\infty$ is completely determined by the $\rho_i(X)$. Recall that $\sigma_\infty$ is the product of the transpositions that list the two edges opposite to each other in  each face in $X$. The element $\rho_i$ is the product of the transpositions that  list the two white vertices that are opposite each other in each face. Therefore, if we simply associate to each $2$-cycle in $\rho_i$ the two edges of the same color incident to the two white vertices, we recover $\sigma_\infty$.
That is, if we make the association 
$$\rho_i=\prod_{c\in H_i}(c,c+c_i)\mapsto\prod_{c\in C_i} (i_c,i_{c+c_i}),$$ 
then taking the product over all the $\rho_i$ will gives us $\sigma_\infty$. This determines $\sigma_1$. 

\subsection{Invariance Under $R$-Symmetry}

In this section we will show that Adinkras  related by the permutation subgroup of $R$-symmetry give rise to equivalent Belyi curves. As discussed in Section \ref{sec:RevAd}, the permutation subgroup of the full $R$-symmetry group $O(N)$ is the largest subgroup whose action on an Adinkra is well defined. Therefore, abusing terminology, we will refer to the permutation subgroup as an $R$-symmetry. Let us now recall what it means for Belyi curves to be equivalent.

\begin{definition}
Two Belyi pairs $(X_1,\beta_1)$ and $(X_2,\beta_2)$ are equivalent if there is an isomorphism $X_1\to X_2$ that commutes with the Belyi maps, that is, they are equivalent as branched covers.
\end{definition}

Branched covers are equivalent when their monodromy groups are conjugate; see for example \cite{Girondo:Text}. We will use this fact  to show that the Riemann surfaces obtained from $R$-symmetric Adinkras are equivalent. 

In Section \ref{sec:GalCov} we described the monodromy generators as elements of $S_I$, where $I=\{i_c| i=1,\dots, N, c\in C/C_k\}$. This was particularly useful because the focus was on quotients and which vertices (expressed as elements of $\mathbb{F}_2^N$) were identified. For many computational applications it necessary to view the monodromy generators as elements of $S_{2^{N-k-1}N}$. The Belyi map has degree $2^{N-k-1}N$. It is unramified on the edges of the Adinkra. Furthermore, $R$-symmetry affects the edges of an Adinkra while leaving the vertices unchanged, making it useful to describe the monodromy generators as permutations of the $2^{N-k-1}N$ edges. To that end, we will first describe an isomorphism from $S_I$ to $S_{2^{N-k-1}N}$. 

Let us start with the $N$-cube. Consider the map $\gamma\colon \mathbb{F}_2^N\to\mathbb{N}$ defined by 
\begin{equation}
(a_1,\dots, a_N)\mapsto 1+a_2 2^0+a_3 2^1+\cdots a_N 2^{N-2}.
\end{equation}
If we view the elements of $\mathbb{F}_2^N$ as binary numbers, the map $\gamma$ is almost exactly the standard map of the binary numbers into the natural numbers. The differences are the removal of the dependence on $a_1$ and the translation by $1$. We have chosen to remove the dependence on $a_1$ so that we can number the white vertices (even elements of $\mathbb{F}_2^N$) and black vertices (odd elements of $\mathbb{F}_2^N$) such that two vertices have the same label if they are connected by an edge of color $1$. We have shifted by $1$, so  no vertex is labeled $0$. Following the above discussion, it is easy to see that the restriction of $\gamma$ to either the odd or the even elements of $\mathbb{F}_2^N$ is a bijection onto the set $\{1,\dots, 2^{N-1}\}$.
\begin{proposition}
\label{prop:label}
Suppose we give every white vertex $c\in C$ the label $\gamma(c)_w$ and every black vertex $d\in D$ the label $\gamma(d)_b$. Then $i_w$ is joined to $i_b$ by color $1$, and it is joined to $(i+2^{k-2})_b$ by color $k$ if $1\leq i\  (\bmod\  2^{k-1})\leq 2^{k-2}$ and to $(i-2^{k-2})_b$ otherwise. 
\end{proposition}
\begin{proof}
It is easy to see that if $c$ and $d$ differ by $e_1$, that is, if $c$ is joined to $d$ by color $1$, then $\gamma(c)=\gamma(d)$. Further, for $k\geq 2$ we find that
$$|\gamma(c)-\gamma(d)|=2^{k-2}$$
if $c$ and $d$ differ by $e_k$. 
Lastly, after a quick check we find that $\gamma(d)>\gamma(c)$ precisely when
$$1\leq \gamma(c)\  (\bmod\  2^{k-1})\leq 2^{k-2}.$$
\end{proof}

Suppose that we have labeled the vertices of $A_N$  as above. We can then label the edge of color $k$ incident to $i_w$ with the number $(i-1) N+k$. Such a labeling gives rise to the following monodromies.

\begin{corollary}
\label{cor:cubemon}
The edges of the $N$-cube $A_N$ can be labeled so that the monodromy group corresponding to the Belyi map is generated by 
$$\sigma_0=(1\;2\cdots N)(N+1\;N+2\cdots 2N)\cdots((m-1)N+1\;(m-1)N+2\cdots mN)$$
and
$$\sigma_1=(a_N^{(1)}\;a_{N-1}^{(1)}\cdots a_1^{(1)})\cdots(a_N^{(m)}\;a_{N-1}^{(m)}\cdots a_1^{(m)}),$$
where $m=2^{k-1}$, $a_1^{(i)}=(i-1)N+1$, and for $k\neq1$
\begin{equation}
\label{eq:ak}
a_k^{(i)} = \left\{
        \begin{array}{ll}
          (2^{k-2}+i-1)N+k   & \text{ if }1 \leq i \ (\mathrm{mod} \ 2^{k-1}) \leq 2^{k-2}, \\
           \strut (i-1-2^{k-2})N+k   & \text{ otherwise }
          \end{array}\right.
\end{equation}
\end{corollary}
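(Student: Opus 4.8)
The plan is to exhibit an explicit labeling of the $2d = N2^{N}$ half-edges of $A_N$ that is compatible with the vertex labeling of Proposition \ref{prop:cubelabel}, and then to read off $\sigma_0 = \pi_w$ and $\sigma_1 = \lambda\pi_b\lambda$ directly from the monodromy construction recalled above. Concretely, at the white vertex $i_w$ I would label the half-edge of color $k$ (for $1 \le k \le N$) by $(i-1)N + k$, and give the black half of that same edge the label $(i-1)N + k + d$, where $d = mN = N2^{N-1}$. Since each white vertex carries exactly one half-edge of each color and the white vertices are enumerated $1_w,\dots,m_w$, this is a legitimate numbering of the white-incident half-edges by $1,\dots,d$, so the description of the monodromy preceding Proposition \ref{prop:cubelabel} applies verbatim to this choice.

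For $\sigma_0$: by definition $\sigma_0 = \pi_w$ is the product over white vertices of the $N$-cycle listing the incident half-edges in rainbow order, i.e.\ in order of colors $1,\dots,N$. At $i_w$ that cycle is $\bigl((i-1)N+1\;(i-1)N+2\cdots iN\bigr)$, and taking the product over $i=1,\dots,m$ gives exactly the claimed $\sigma_0$.

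For $\sigma_1$: since $\sigma_1 = \lambda\pi_b\lambda$ and $\lambda$ exchanges the two halves $j \leftrightarrow j+d$ of each edge, $\sigma_1$ is the product over black vertices of the $N$-cycle that lists, in reverse rainbow order (colors $N, N-1, \dots, 1$), the \emph{white-half} labels of the edges incident to that black vertex. So the task reduces to identifying, for each black vertex $i_b$ and each color $k$, the white endpoint $l_w$ of the color-$k$ edge at $i_b$, whose white-half label is then $(l-1)N+k$. For $k = 1$, Proposition \ref{prop:cubelabel} gives $i_b \sim_1 i_w$, so this label is $(i-1)N+1 = a_1^{(i)}$. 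For $k \ge 2$, the color-$k$ edge at $i_w$ runs to some $l_b$, and Claim \ref{clm:wbv} then forces $i_b \sim_k l_w$, so $l_w$ is the white endpoint sought; Proposition \ref{prop:cubelabel} identifies $l = i + 2^{k-2}$ when $i \bmod 2^{k-1} \in [1, 2^{k-2}]$ and $l = i - 2^{k-2}$ otherwise. Substituting into $(l-1)N+k$ yields $(2^{k-2}+i-1)N+k$ in the first case and $(i-1-2^{k-2})N+k$ in the second, which is precisely $a_k^{(i)}$ as in \eqref{eq:ak}. Assembling the $N$-cycles $\bigl(a_N^{(i)}\;a_{N-1}^{(i)}\cdots a_1^{(i)}\bigr)$ over $i = 1,\dots,m$ gives the stated $\sigma_1$.

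The argument is essentially bookkeeping once the half-edge labeling is fixed, so there is no serious obstacle; the two points that need care are (i) matching the orientation conventions so that $\pi_w$ runs through colors in increasing order while $\pi_b$ runs through them in decreasing order — this is what makes the $a_k^{(i)}$ appear with $k$ decreasing inside each cycle of $\sigma_1$ — and (ii) correctly invoking Claim \ref{clm:wbv} to pass from the color-$k$ neighbor of $i_w$, which Proposition \ref{prop:cubelabel} describes directly, to the color-$k$ neighbor of $i_b$, rather than conflating the two. Finally, since Proposition \ref{prop:cubelabel} permits any white vertex to be taken as $1_w$, the edge labeling, and hence this presentation of the monodromy group, is canonical only up to a relabeling of $\{1,\dots,d\}$, consistent with the monodromy group being well-defined only up to conjugation in $S_d$.
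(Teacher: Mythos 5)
Your proposal is correct and follows essentially the same route as the paper: label the color-$k$ half-edge at $i_w$ by $(i-1)N+k$, read $\sigma_0$ off the white vertices in rainbow order, and compute $a_k^{(i)}$ by combining the adjacency data of Proposition \ref{prop:cubelabel} with Claim \ref{clm:wbv} to pass from the color-$k$ neighbor of $i_w$ to that of $i_b$. The only difference is that you make the conjugation by $\lambda$ and the appeal to Claim \ref{clm:wbv} explicit where the paper leaves them implicit.
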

\begin{proof}
The formula for $\sigma_0$ is immediately clear from the labeling: if we choose a white vertex $\gamma(c)_w,$ $c\in C$, and list the edges incident to it in the order of the rainbow ,we get $((\gamma(c)-1)N+1,(\gamma(c)-1)N+2,\ldots,\gamma(c)N)$. The result for $\sigma_1$ follows immediately from the previous formula for $\sigma_1$ as an element of $S_I$ by noting $i_c\mapsto (\gamma(c)-1)N+i$ in the new labeling. 
\end{proof}
For the sake of completeness, we also note the monodromy generators for the map $f: X_N\to B_N$.
\begin{corollary}
The monodromy group of $(X_N,f_{X_N})$ is generated by $\rho_i\in S_{2^{N-1}}$, $1\leq i\leq N-1$, where
\begin{align}
\rho_1 &=(1,2)(3,4)\cdots(2^{N-1}-1,2^{N-1}),\\
\rho_i &=\prod_{\substack{j=1\\1\leq j\bmod{2^{i-1}}\leq 2^{i-2}}}^{2^{N-1}}(j,j+2^{i-2}+2^{i-1}) \text{ for }i\neq 1.
\end{align}
\end{corollary}
\begin{proof}
This follows immediately from the formula for $\rho_i$ in terms of elements of $C$ by applying the map $\gamma$.
\end{proof}

The monodromy generators for a general Adinkra (quotient of the $N$-cube) follow immediately from the above formulae, if we use the smallest representatives for the equivalence classes in the quotienting procedure. In this procedure, the formulae for $\sigma_i$ will be the same as for the cube but with $m=2^{N-k-1}$ instead of $2^{N-1}$. An example will help illustrate this.

\begin{example}[$N=4$]

Consider $A_4$ with rainbow $(1,2,3,4)$ and corresponding Belyi pair $(X_4,\beta)$. By Propostion \ref{prop:genus},  $X_4$ has genus $1$ and is therefore an elliptic curve (its complex structure is pulled back from $\bP^1(\bC)$ by $\beta$); see Figure \ref{Fig:N4embeds}. 
\begin{figure}
\begin{tikzpicture}[baseline, scale=2]
\draw[green] (-0.9375,-1) -- (-0.5625,-1);
\draw[orange] (-0.4375,-1) -- (-0.0625,-1);
\draw[green] (0.0625,-1) -- (0.4375,-1);
\draw[orange] (0.5625,-1) -- (0.9375,-1);
\draw[green] (-0.9375,-0.5) -- (-0.5625,-0.5);
\draw[orange] (-0.4375,-0.5) -- (-0.0625,-0.5);
\draw[green] (0.0625,-0.5) -- (0.4375,-0.5);
\draw[orange] (0.5625,-0.5) -- (0.9375,-0.5);
\draw[green] (-0.9375,0) -- (-0.5625,0);
\draw[orange] (-0.4375,0) -- (-0.0625,0);
\draw[green] (0.0625,0) -- (0.4375,0);
\draw[orange] (0.5625,0) -- (0.9375,0);
\draw[green] (-0.9375,0.5) -- (-0.5625,0.5);
\draw[orange] (-0.4375,0.5) -- (-0.0625,0.5);
\draw[green] (0.0625,0.5) -- (0.4375,0.5);
\draw[orange] (0.5625,0.5) -- (0.9375,0.5);
\draw[green,dashed] (-0.9375,1) -- (-0.5625,1);
\draw[orange,dashed] (-0.4375,1) -- (-0.0625,1);
\draw[green,dashed] (0.0625,1) -- (0.4375,1);
\draw[orange,dashed] (0.5625,1) -- (0.9375,1);
\draw[blue] (-1,-0.9375) -- (-1,-0.5625);
\draw[purple] (-1,-0.4375) -- (-1,-0.0625);
\draw[blue] (-1,0.0625) -- (-1,0.4375);
\draw[purple] (-1,0.5625) -- (-1,0.9375);
\draw[blue] (-0.5,-0.9375) -- (-0.5,-0.5625);
\draw[purple] (-0.5,-0.4375) -- (-0.5,-0.0625);
\draw[blue] (-0.5,0.0625) -- (-0.5,0.4375);
\draw[purple] (-0.5,0.5625) -- (-0.5,0.9375);
\draw[blue] (0,-0.9375) -- (0,-0.5625);
\draw[purple] (0,-0.4375) -- (0,-0.0625);
\draw[blue] (0,0.0625) -- (0,0.4375);
\draw[purple] (0,0.5625) -- (0,0.9375);
\draw[blue] (0.5,-0.9375) -- (0.5,-0.5625);
\draw[purple] (0.5,-0.4375) -- (0.5,-0.0625);
\draw[blue] (0.5,0.0625) -- (0.5,0.4375);
\draw[purple] (0.5,0.5625) -- (0.5,0.9375);
\draw[blue,dashed] (1,-0.9375) -- (1,-0.5625);
\draw[purple,dashed] (1,-0.4375) -- (1,-0.0625);
\draw[blue,dashed] (1,0.0625) -- (1,0.4375);
\draw[purple,dashed] (1,0.5625) -- (1,0.9375);
\draw (0,0) circle[radius=0.0625];
\draw (-1,-1) circle[radius=0.0625];
\draw (0.5,0.5) circle[radius=0.0625];
\draw[dashed] (1,1) circle[radius=0.0625];
\draw (0,-1) circle[radius=0.0625];
\draw (-1,0) circle[radius=0.0625];
\draw (-0.5,0.5) circle[radius=0.0625];
\draw[dashed] (-1,1) circle[radius=0.0625];
\draw[dashed] (0,1) circle[radius=0.0625];
\draw[dashed] (1,0) circle[radius=0.0625];
\draw[dashed] (1,-1) circle[radius=0.0625];
\draw (-0.5,-0.5) circle[radius=0.0625];
\draw (0.5,-0.5) circle[radius=0.0625];
\draw[fill] (-0.5,-1) circle[radius=0.0625];
\draw[fill] (0.5,-1) circle[radius=0.0625];
\draw[fill] (-1,-0.5) circle[radius=0.0625];
\draw[fill] (-0.5,0) circle[radius=0.0625];
\draw[fill] (0,-0.5) circle[radius=0.0625];
\draw[fill] (0.5,0) circle[radius=0.0625];
\draw[fill] (-1,0.5) circle[radius=0.0625];
\draw[fill] (0,0.5) circle[radius=0.0625];
\draw[dashed,fill=gray] (0.5,1) circle[radius=0.0625];
\draw[dashed,fill=gray] (-0.5,1) circle[radius=0.0625];
\draw[dashed,fill=gray] (1,0.5) circle[radius=0.0625];
\draw[dashed,fill=gray] (1,-0.5) circle[radius=0.0625];
\end{tikzpicture}
\caption{The $(4,0)$ Adinkra with rainbow (green,blue,orange,purple) embedded in a torus. The top row is identified with the bottom row and the right column is identified with the left column. The $(4,1)$ Adinkra embedding is obtained by taking only the left half of the $(4,0)$ embedding.}
\label{Fig:N4embeds}
\end{figure}
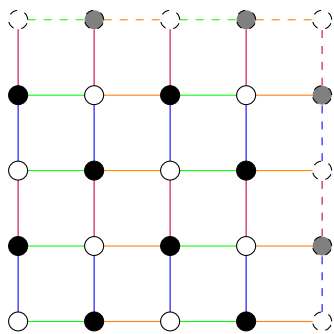
The monodromy group of $\beta$ is generated by
\begin{align}
\sigma_0 &=(1,2,3,4)(5,6,7,8)(9,10,11,12)(13,14,15,16)\\
&\cdot(17,18,19,20)(21,22,23,24)(25,26,27,28)(29,30,31,32)\nonumber
\end{align}
and
\begin{align}
\label{eq:4cubeSigma1}
\sigma_1 &=(20,11,6,1)(24,15,2,5)(28,3,14,9)(32,7,10,13)\\
&\cdot(4,27,22,17)(8,31,18,21)(12,19,30,25)(16,23,26,29).\nonumber
\end{align}
The faces are described by
\begin{align}
\label{eq:4cubePiInf}
\pi_\infty &=(1,2,5,6)(2,3,14,15)(3,4,27,28)(4,1,20,17)\\
&\cdot (6,7,10,11)(7,8,31,32)(8,5,24,21)(9,10,13,14)\nonumber\\
&\cdot (11,12,19,20)(12,9,28,25)(15,16,23,24)(16,13,32,29)\nonumber\\
&\cdot (17,18,21,22)(18,19,30,31)(22,23,26,27)(25,26,29,30).\nonumber
\end{align}

Now, assign the label $i$ to the edge of color $i$ incident to the white vertex in $B_4$. For $\tilde{\beta}:B_4\to \bP^1(\bC)$,
$$\tilde{\pi}_\infty=(1,2)(2,3)(3,4)(4,1).$$
The map $f_{X_4}:X_4\to B_4$ projects $\sigma_0$ onto $(1,2,3,4)$ and $\sigma_1$ onto $(4,3,2,1)$. More interesting is how we can see the action on the faces by looking at the action of $f$ on $\pi_\infty$. All of the $1/2$-colored faces, $(1,2,5,6)$, $(9,10,13,14)$, $(25,26,29,30)$, and $(17,18,21,22)$ in $X_4$ are projected onto the $1/2$ colored face $(1,2)$ in $B_4$ via $f_{X_4}$. Similarly the four $i/(i+1)$ faces in $X_4$ map onto the single $i/(i+1)$ face in $B_N$, what can be realized as the map that sends each number $j$ appearing in each $4$-cycle to $j\pmod{4}$. Here we represent $0\pmod{4}$ by $4$.

Finally, we consider the generators of the monodromy of $f$:
\begin{align}
\label{eq:rho14cube}
\rho_1 &=(1,2)(3,4)(5,6)(7,8),\\
\rho_2 &=(1,4)(2,3)(5,8)(6,7),\label{eq:rho24cube}\\
\rho_3 &=(1,7)(2,8)(3,5)(4,6),\label{eq:rho34cube}\\
\rho_4 &=(1,5)(2,6)(3,7)(4,8).
\label{eq:rho44cube}
\end{align}
The numbers appearing in the $\rho_i$ represent the eight white vertices in $X_4$. Each $2$-cycle represents a single face that the two vertices listed are incident to. The element $\rho_i$ represents the four $i/(i+1)$ faces that are equated by $f$. Note that $\rho_4$, which represents the $4/1$ faces, is given by $\rho_1\rho_2\rho_3$.

Now let us consider the Belyi curve $X$ obtained by quotienting $A_4$ by the unique doubly even code generated by $(1,1,1,1)$. We denote this projection by $p_A:A_4\to A=A_4/\langle(1,1,1,1)\rangle$. The code generated by $(1,1,1,1)$ has two elements, so it divides the white vertices of $A_4$ into four cosets, each containing the two elements $\gamma(c)$ and $\gamma(c+(1,1,1,1))$. Explicitly, the cosets in $A_{(4,1)}$ are 
\begin{equation}
\label{eq:cosets41}
\{1,8\},\{2,7\},\{3,6\},\{4,5\}. 
\end{equation}
We see immediately that the quotienting procedure identifies $i_w$ with $(9-i)_w$ in $A_4$, and similarly for the black vertices. Each face of $X$ is the image under $p_A$ of two faces in $X_4$. We can see this from the effect of $p_A$ on the monodromy generators over $B_N$ that describe the faces.

We obtain $\rho_i(X)$ from $\rho_i(X_4)$ by identifying the $2$-cycles according to the action of $p_A$. For example, consider $\rho_1(X_4)=(1,2)(3,4)(5,6)(7,8)$, as in equation (\ref{eq:rho14cube}). From this we see that $(1,2)$ and $(7,8)$ should be equated and that $(3,4)$ and $(5,6)$ should be equated. If we choose the smaller number in each coset as the representative, we find
\begin{equation}
\rho_1(X)=(1,2)(3,4).
\end{equation}
Now consider equation (\ref{eq:rho24cube}), $\rho_2(X_4)=(1,4)(2,3)(5,8)(6,7)$. From equation (\ref{eq:cosets41}) we see that $(1,4)$ and $(5,8)$ should be equated and that $(2,3)$ and $(6,7)$ should be equated. Furthermore, since we have already chosen the representatives of the cosets we find
\begin{equation}
\rho_2(X)=(1,4)(2,3).
\end{equation}
Similarly we find
\begin{align}
\rho_3(X)=(1,2)(3,4),\\
\rho_4(X)=(1,4)(2,3).
\end{align}
The monodromy group of $X_{(4,1)}$ over $B_N$ has only two generators, since quotienting out by a $1$-dimensional code defines a relation between the generators of the monodromy group for $X_4$. The element $\rho_1(X_{(4,1)})$ interchanges white vertices that are connected by $(1,1,0,0)$, while $\rho_3(X)$ interchanges white vertices connected by $(0,0,1,1)$. Therefore, since $(1,1,0,0)+(0,0,1,1)=(1,1,1,1)$, the generator of the code, we see that $\rho_1(X)=(\rho_3(X))^{-1}$.

By Proposition \ref{prop:genus}, $X_{(4,1)}$ has genus $1$ and therefore is also a torus. If we give $X_4$ as depicted in Figure \ref{Fig:N4embeds} standard $(x,y)$ coordinates, then $p:X_4\to X_{(4,1)}$ is given by $p(x,y)=(2x,x+y)$. That is, it is the map from the torus to the torus that wraps around the diagonal twice, as can be seen from the $2$-to-$1$ effect on the faces by $p_A$. The left half of Figure \ref{Fig:N4embeds} can be taken as a fundamental domain for $X$.

The choice of coset representatives for the vertices determines representatives for the equivalence classes of the edges and similarly allows us to obtain $\sigma_i(X)$. Each equivalance class of edges contains two elements,  $(i-1)\cdot 4+j$ and $(8-i)\cdot 4+j$. Following our choice of representatives for the vertices, we choose the smaller number as the representative. The monodromy representatives for $X$ coming from those for $X_4$ are then as follows:
\begin{equation}
\sigma_0{(X)}=(1,2,3,4)(5,6,7,8)(9,10,11,12)(13,14,15,16)
\end{equation}
and
\begin{equation}
\sigma_1{(X)}=(1,16,11,6)(2,5,12,15)(3,14,9,8)(4,7,10,13).
\end{equation}
As an example, the $4$-cycle $(1,16,11,6)$ appearing in $\sigma_1(X)$ is the chosen representative of the coset consisting of the two elements $(1,20,11,6)$ and $(29,16,23,26)$; see equation (\ref{eq:4cubeSigma1}).
\end{example}

Now let us turn our attention to effect of the $R$-symmetry group on the Belyi curves. Recall that $R$-symmetry permutes the action of the $Q_i$. As a concrete example, consider $X_N$ with rainbow $(1,2,\ldots,N)$ and the $R$-symmetry that interchanges the action of $Q_1$ and $Q_3$. 
The conditions $Q_1\phi_0=\psi_1$ and $Q_3\phi_0=\psi_3$ is represented in $A_N\ins X_N$ by joining the white vertex representing $\phi_0$ to to the black vertex representing $\psi_1$ by color $1$, and to the black vertex representing $\psi_3$ by color $3$. 
 In our labeling scheme, we can identify $\phi_0$ with the vertex $1_w$ and $\psi_i$ with the vertices $i_b$.

In the  normal interpretation of $R$-symmetry, we would represent the $R$-symmetry that interchanges the actions of $Q_1$ and $Q_3$ by swapping the colors of the edges of colors $1$ and $3$. This makes sense; indeed,  after the $R$-symmetry, $Q_1\phi_0=\psi_3$ and $Q_3\phi_0=\psi_1$. The $Q_i$ remain represented by color $i$ and therefore $1_w$ should now be connected to $3_b$ by color $1$ and to $1_b$ by color $3$. This yields a new chromotopology; call this chromotopology $\tilde{A}_N$. Note that the rainbow is unchanged by $R$-symmetry. Under the $R$-symmetry, we changed only the actions of two of the $Q_i$, not their ordering.

In fact we should really view the rainbow as labeling which $Q_i$ each color represents. Since the rainbow determines which $2$-colored loops we attach $2$-cells to in order to create $\tilde{X}_N$, we are still attaching $2$-cells based on the same adjacent colors. The $2$-cells that are attached to loops containing only colors that are not adjacent to $1$ or $3$ will remain unchanged. Let us look at the $1/N$ $2$-colored faces as an example. By construction, the $1/N$ $2$-colored loops in $\tilde{A}_N$ are equivalent to $3/N$ $2$-colored loops in $A_N$. Therefore, the Riemann surface $\tilde{X}_N$ is isomorphic to the surface obtained from the original $A_N$, leaving the chromotopology $A_N$ unchanged, but using the rainbow $(3,2,1,4,\dots,N)$. We interpret this as interchanging which $Q_i$ the different colors represent. We are able to do this because we have the extra information of the order of the $Q_i$ contained in the rainbow. In this way, we may choose to view the action of $R$-symmetry as one that leaves the chromotopology invariant, but permutes the rainbow. 

To explicitly see that the equivalent descriptions of $R$-symmetric surfaces described above are compatible with the Belyi map, consider $\tilde{A}_N$ with the same labeling as $A_N$ so that the edges labeled $kN+1$ in $A_N$ are now color $3$ in $\tilde{A}_N$ (as opposed to color $1$ in $A_N$) and the edges labeled $kN+3$ in $A_N$ are now color $1$ in $\tilde{A}_N$. Therefore the monodromy at each white vertex $i_w$ in $\tilde{A}_N$, which is given by the edges incident to $i_w$ in the order of the rainbow $(1,2,3,\ldots,N)$, is $((i-1)N+3,(i-1)N+2,(i-1)N+1,(i-1)N+4,\ldots,iN)$. This is the same as the monodromy at each white vertex for $A_N$ with rainbow $(3,2,1,4,\dots,N)$. A similar argument shows that the monodromies at the black vertices coincide, showing that the Belyi pairs are equivalent. This shows that we can view the action of the permutation subgroup of the $R$-symmetry group on an Adinkra chromotopology with a rainbow as a permutation of the rainbow.

Let $C$ be the subset of even elements of $\mathbb{F}_2^N$, and consider the subset of $C$ consisting of elements that have zero as their second component. These are the white vertices that  map to an odd number under $\gamma$. Let $$S_e=\{\gamma(a_1,a_2,\ldots,a_N)\in \gamma(C)|a_1=a_2=0\}$$ and $$S_o=\{\gamma(a_1,a_2,\ldots,a_N)\in \gamma(C)|a_1=1,a_2=0\}.$$ We now prove that $R$-symmetry leaves the Belyi curves invariant.

\begin{theorem}
\label{Thm:cubeRsym}
The Belyi pair $(X_N,\beta)$ associated to the $N$-cube $A_N$ is invariant under the action of the permutation subgroup of the $R$-symmetry group.  
\end{theorem}
\begin{proof}
Give $A_N$ the edge labeling from Corollary \ref{cor:cubemon} with rainbow $(1,2,\ldots,N)$. The Adinkra $\tilde{A}_N$ has the the same underlying graph; all that is different is the rainbow, i.e., the cyclic order of the $N$ colors.  The difference between $A_N$ and $\tilde{A}_N$ is  encoded in a permutation of the order of the $N$ colors, i.e., an element of $S_N$. This means that for any two $N$-cubes $A_N$ and $\tilde{A}_N$ with rainbows $r\in S_N$ and $\tilde{r}\in S_N$ respectively, there exists $g\in S_N$ such that $\tilde{r}=grg^{-1}$. Since $(1,2)$ and $(1,2,\ldots,N)$ generate all of $S_N$ it is enough to show that the Belyi pair for $A_N$ with rainbow $r=(1,2,\ldots,N)$ is equivalent to the Belyi pair for $\tilde{A}_N$ with rainbow $\tilde{r}=grg^{-1}$ for just $g=(1,2)$ and $(1,2,\ldots,N)$.

Let us first consider the  case  $g=(1,2,\ldots,N)$. Conjugation by this $g$ in $S_N$ leaves the rainbow $r$ invariant, and therefore leaves $\sigma_0$ and $\sigma_1$ invariant as elements of $S_d$. The Belyi pairs are therefore trivially equivalent.

Now let us consider the case where $\tilde{A}_N$ is obtained from $A_N$ by interchanging the order of colors $1$ and $2$ in the rainbow. Since we have labeled the edges $A_N$ as in Corollary \ref{cor:cubemon}, we have $\sigma_0=\sigma_0^{(1)}\sigma_0^{(2)}\cdots\sigma_0^{(2^{N-1})}$, where $\sigma_0^{(i)}=((i-1)N+1\;(i-1)N+2\cdots iN)$, and $\sigma_1=\sigma_1^{(1)}\sigma_1^{(2)}\cdots\sigma_1^{(2^{N-1})}$, where $\sigma_1^{(i)}=(a_N^{(i)}\;a_{N-1}^{(i)}\cdots a_1^{(i)})$ with $a_k^{(i)}$ as in Corollary \ref{cor:cubemon}. We can give $\tilde{A}_N$ the same labeling as $A_N$ since they have the same underlying graphs, but the rainbow is now $\tilde{r}=(2,1,3,4,\ldots,N)$. Therefore the difference between $\sigma_q$ and $\tilde{\sigma}_q$ will be in the order of the colors. Changing the rainbow will leave unchanged the numbers appearing in each disjoint $N$-cycle $\sigma_0^{(i)}=((i-1)N+1,\;(i-1)N+2,\cdots iN)$,  but the order of each one mod $N$ will change from $(1,\;2,\cdots N-1,\;N)$ in the same way.
Therefore,
\begin{equation}
\tilde{\sigma}_0^{(i)}=((i-1)N+2,\;\;(i-1)N+1,\;\;(i-1)N+3,\cdots iN)
\end{equation}
and
\begin{equation}
\tilde{\sigma}_1^{(i)}=(a_N^{(i)},\;a_{N-1}^{(i)},\cdots a_3^{(i)},\;a_1^{(i)},\; a_2^{(i)}).
\end{equation}

Now define
\begin{equation}
\alpha=\prod_{\substack{i=1\\
\text{odd}}}^{2^{N-1}}(a^{(i)}_1,a^{(i+1)}_2)(a^{(i+1)}_1,a_2^{(i)}),
\end{equation}
\begin{equation}
\beta_o=\prod_{i\in S_o}(a_1^{(i)},a^{(i+1)}_1)(a_2^{(i)},a^{(i+1)}_2),
\end{equation}
and
\begin{equation}
\beta_e=\prod_{i\in S_e}\prod_{k=3}^N(a^{(i)}_k,a^{(i+1)}_k).
\end{equation}
Finally, let
\begin{equation}
\delta=\alpha\beta_o\beta_e.
\end{equation}
It is immediately clear that
$$\widetilde{\sigma}_1=\delta^{-1}\sigma_1\delta.$$
It is a little harder to see this for $\sigma_0$ since it is not written in terms of the$a_k^{(i)}$. However, using the formula for $a_k^{(i)}$ we see that
$$\alpha=\prod_{m=1}^{2^{N-1}}((m-1)N+1,(m-1)N+2),$$
while
$$\beta=\beta_o\beta_e=\prod_{m\in S_o}\prod_{k=1}^N((m-1)N+k,mN+k).$$
It follows  that
$$\widetilde{\sigma}_0=\delta^{-1}\sigma_0\delta,$$
completing the proof.
\end{proof}

Now let us consider $R$-symmetry for a general Adinkra\footnote{Again, when we refer to an $R$-symmetry we really mean the permutation subgroup of the full $R$-symmetry group $O(N)$.}. As noted in our discussion of $R$-symmetry for the $N$-cube, we can view an $R$-symmetry as leaving the chromotopology alone and changing the rainbow. There was nothing specific to the $N$-cube in that argument, so we see the same is true for general Adinkras. Recall that from this viewpoint we consider the extra data of the rainbow as encoding which supersymmetry generator each color represents rather than considering each supersymmetry generator fixed to the same color, as is usually the case. Since the rainbow is entirely determined by $B_N$, we can view the relationship between $R$-symmetric Adinkras as the pullback of the relationship between the $R$-symmetric $B_N$'s, where we know allow the possibility of different rainbows for the curve $B_N$.

\begin{proposition}
\label{prop:BNRsymm}  
If $\Sigma_N$ is related to $\tilde{\Sigma}_N$, by an $R$-symmetry, then the corresponding Belyi pairs $(B_N,\tilde{\beta})$ and $(\tilde{B}_N,\tilde{\beta})$ are equivalent.
\end{proposition}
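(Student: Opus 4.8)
The plan is to follow the strategy of Theorem \ref{Thm:cubeRsym}, but to exploit the fact that $B_N$ has a \emph{single} white vertex and a \emph{single} black vertex, which collapses the combinatorics entirely. First I would fix, on both $\Sigma_N$ and $\tilde\Sigma_N$, the edge labeling in which the edge of color $i$ carries the label $i$, so that the edge set is $\{1,\ldots,N\}$ and the degree of $\tilde\beta$ is $d=N$. A rainbow on $B_N$ is by definition a cyclic ordering of the $N$ colors, i.e.\ an $N$-cycle $\rho\in S_N$, and with this labeling the monodromy generators of $(B_N,\tilde\beta)$ are read off exactly as in Corollary \ref{cor:cubemon}: the single white vertex contributes $\sigma_0=\rho$ (the half-edges around it listed in the order of the rainbow), and, after conjugating back to $\{1,\ldots,N\}$ by the element $\lambda$ of equation \ref{eq:tauhalf}, the single black vertex contributes $\sigma_1=\rho^{-1}$ (the same half-edges in the opposite order). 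In particular $\sigma_0\sigma_1=\mathrm{id}$, so $\sigma_\infty=\mathrm{id}$, consistent with $\tilde\beta(x)=x^N/(x^N+1)$ being unramified over $\infty$.

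Next I would invoke the observation, already recorded in the discussion following Theorem \ref{Thm:cubeRsym}, that an $R$-symmetry acts on $\Sigma_N$ purely by permuting the rainbow, and that two rainbows $\rho$ and $\tilde\rho$ are related by $\tilde\rho=g\rho g^{-1}$ for a single $g\in S_N$ (this is automatic here, since any two $N$-cycles in $S_N$ are conjugate). Applying the previous paragraph to $\tilde\Sigma_N\subset\tilde B_N$ with rainbow $\tilde\rho$ yields monodromy generators $\tilde\sigma_0=\tilde\rho=g\rho g^{-1}$ and $\tilde\sigma_1=\tilde\rho^{-1}=g\rho^{-1}g^{-1}$. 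Hence $\tilde\sigma_q=g\,\sigma_q\,g^{-1}$ for $q=0,1$ simultaneously, so the monodromy groups of $(B_N,\tilde\beta)$ and $(\tilde B_N,\tilde\beta)$ are related by a global conjugation by the single element $g\in S_N=S_d$. By the Grothendieck correspondence this means the two Belyi pairs are equivalent, which is the claim.

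Unlike in Theorem \ref{Thm:cubeRsym}, there is no need to reduce to the transpositions and long cycle generating $S_N$, and no auxiliary permutation $\gamma$ is required, because with one vertex of each color the monodromy is literally the rainbow and its inverse, and conjugating the rainbow conjugates both generators by the same element automatically. The only point demanding care — and the main, rather minor, obstacle — is to align the monodromy computation for $B_N$ with the conventions of Corollary \ref{cor:cubemon} and equation \ref{eq:tauhalf}, in particular to check that reversing the color order at the black vertex produces precisely $\rho^{-1}$ on the labels $\{1,\ldots,N\}$; once that bookkeeping is settled, the proposition is immediate.
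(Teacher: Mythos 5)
Your proof is correct and takes essentially the same approach as the paper: compute the monodromy generators of $(B_N,\tilde{\beta})$ directly from the rainbow (one is the rainbow as an $N$-cycle, the other its inverse) and exhibit a single element of $S_N$ conjugating both simultaneously. The only difference is a welcome streamlining — the paper reduces to the generators $(1,2)$ and $(1,2,\ldots,N)$ of $S_N$ and checks each case, whereas you observe that since $\sigma_1=\sigma_0^{-1}$, conjugating the rainbow by an arbitrary $g$ conjugates both generators by $g$ at once, making the reduction unnecessary.
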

\begin{proof}
We can consider the case where $B_N$ has the rainbow $(1,2,\dots,N)$. We just need to show that changing the rainbow by $(1,2)$ and $(1,2,\ldots,N)$ gives equivalent Belyi curves, since $(1,2)$ and $(1,2,\ldots,N)$ generate all of $S_N$. As with the case of the $N$-cube, changing the rainbow by $(1,2,\ldots,N)$ is trivial since the rainbow is invariant under conjugation by $(1,2,\ldots,N)$. The monodromy of $(B_N,\tilde{\beta})$ is 
\begin{align*}
\sigma_0=& (1,2,\ldots,N)
\end{align*}

and
\begin{align*} 
\sigma_1=&(N,N-1,\ldots,1).
\end{align*}

The Riemann surface $\tilde{B}_N$ has rainbow $(1,2)(1,2,\ldots,N)(1,2)=(2,1,3,\ldots,N)$. Therefore the monodromy of $(\tilde{B}_N,\tilde{\beta})$ is generated by
\begin{align*}
\tilde{\sigma}_0=& (2,1,3,\ldots,N)
\end{align*}
and
\begin{align*}
\tilde{\sigma}_1=&(N,N-1,\ldots,3,1,2).
\end{align*}
We see that $\tilde{\sigma}_q=(1,2)\sigma_q(1,2)$ for $q=0,1$. Therefore by the Grothendieck correspondence, $(B_N,\tilde{\beta})$ and $(\tilde{B}_N,\tilde{\beta})$ are equivalent Belyi pairs.
\end{proof}

Let us now suppose that $X\in\mathcal{X}_{(N,k)}$, corresponding to an Adinkra with rainbow $(1,2,\dots, N)$, and let $\tilde{X}$ be the Riemann surface associated to an $R$-symmetric Adinkra $\tilde{A}$.
By Theorems  \ref{Thm:AdinkFact} and \ref{Thm:cubeRsym} and Proposition \ref{prop:BNRsymm}, we have a commutative diagram
\begin{equation}
\xymatrix{  X_N \ar@{->}[d]_{p}\ar@{->}[r]^{\cong}& \tilde{X}_N \ar@{->}[d]^{\tilde{p}}\\ 
X\ar@{->}[d]_{g}& \tilde{X}\ar@{->}[d]^{\tilde{g}}\\
B_N\ar@{->}[r]^{\cong} & \tilde{B}_N  }
\end{equation}

\begin{proposition}
\label{prop:Rsymm}
If $A\in\mathcal{A}_{(N,k)}$ is an Adinkra and $\tilde{A}$ is related to $A$ by an $R$-symmetry, then the corresponding Belyi pairs $(X,\beta_X)$ and $(\tilde{X},\beta_{\tilde{X}})$ are equivalent.
\end{proposition}
\begin{proof}
As usual it is enough to consider the $R$-symmetry that interchanges the actions of $Q_1$ and $Q_2$. By assumption, $X$ and $\tilde{X}$ have the same chromotopology; they differ only in  their rainbows. Therefore we can label their vertices so that $\rho_i=\tilde{\rho}_i$. For example, if the white vertex $i$ is connected to vertex $j$ by an edge of color $2$ followed by an edge of color $3$ and is connected to vertex $l$ by an edge of color $1$ followed by an edge of color $3$, then the labels of $j$ and $l$ should be interchanged in $\tilde{X}$. Since the monodromies of $g$ and $\tilde{g}$ are the same and $(B_N,\tilde{\beta})$ and $(\tilde{B}_N,\tilde{\beta})$ are equivalent, the result follows.

In particular, there exists a pullback $g$ of $(1,2)$, such that $\tilde{\sigma}_q=g^{-1}\sigma_q g$. As an example, note that $\gamma$ in the proof of Theorem \ref{Thm:cubeRsym} is a pullback of $(1,2)$. We can see this by observing that $(1,2)=(1,2)(3)\cdots(N)$. For example, $(3)$ can lift to $(mN+3,(m+1)N+3)$, as is the case here.
\end{proof}


\subsection{The Exterior Tensor Product}
Now that we have shown $R$-symmetric Adinkra chromotoplogies give equivalent Belyi pairs, we can extende the exterior tensor product on Adinkras to an operation on the corresponding Riemann surfaces.

The tensor product of two Adinkras is defined in \cite{Hubsch:2011tx}. If $A_i\in\mathcal{A}_{(N_i,k_i)}$ is an Adinkra chromotopology obtained by quotienting an $N_i$-cube by a $k_i$-dimensional doubly even code, then
$$A_1\otimes A_2$$ is the Adinkra chromotopology obtained by quotienting an $(N_1+N_2)$-cube by a $(k_1+k_2)$-dimensional code.
More precisely, we consider the direct sum  $\mathbb{F}_2^{N_1}\oplus\mathbb{F}_2^{N_2}$, which is canonically isomorphic as a group to $\mathbb{F}_2^{N_1+N_2}$. Under this isomorphism, the image of the $(k_1+k_2)$-dimensional code $C_{k_1}\oplus C_{k_2}\ins\mathbb{F}_2^{N_1}\oplus \mathbb{F}_2^{N_2}$ is a doubly even code, and we define the tensor product Adinkra to be the quotient of $\mathbb{F}_2^{N_1+N_2}$ by this code. Although it may seem more appropriate to refer to this construction as a direct sum, the accepted term is tensor product, which stems from the graph-theoretic roots of the construction.

For convenience, let $N=N_1+N_2$ and $k=k_1+k_2$. The white vertices of $A_1\otimes A_2$ correspond to the orbits of the even elements of $\mathbb{F}_2^N$ under the action of $C_k$, and the black vertices correspond to the orbits of the odd elements. Therefore, there are two types of white vertex in $A_1\otimes A_2$: we can have a white vertex of the form $w_1\oplus  w_2$ where $w_i\in A_i$ are white vertices, or we can have a white vertex of the form $b_1\oplus  b_2$ where $b_i\in A_i$ are black vertices. The black vertices of $A_1\otimes A_2$  arise as the  sum of two vertices of different colors in the original Adinkras. 

Now let us describe the rainbow and edges of $A_1\otimes A_2$. Suppose that $A_i$ has rainbow $(1,2,\dots, N_i)$. We can give $A_1\otimes A_2$ the rainbow $(1,2,\dots,N_1,N_1+1,\dots, N_1+N_2)$. The edges are described as follows. Let $v\oplus w$ be a vertex; the edge of color $i$ incident to this vertex for $1\leq i\leq N_1$ is the edge that connects $v\oplus w$ to $v'\oplus w$, where $v'$ is the vertex in $A_1$ that is joined to $v$ by color $i$. Similarly, if $N_1+1\leq i\leq N_1+N_2$, the edge of color $i$ is the edge that joins $v\oplus w$ to $v\oplus w'$, where $w'$ is the vertex in $A_2$ that is joined to $w$ by color $i-N_1$. Observe that if we delete all of the edges of $A_1\otimes A_2$ of colors $N_1+1\leq i\leq N_1+N_2$, we are left with $2^{N_2-k_2}$ disjoint copies of $A_1$, one copy for each vertex of $A_2$. Similarly, if we delete all of the edges of colors $1\leq i\leq N_1$, then we obtain $2^{N_1-k_1}$ disjoint copies of $A_2$. 

Lastly, we remark that the dashing of the Adinkra $A_1\otimes A_2$ is obtained from the dashings of the two Adinkras as follows: the dashing at a vertex corresponding to two vertices of the same color is left alone, while the dashing at a vertex corresponding to two vertices of different colors is reversed. 

\begin{proposition}
The exterior tensor product on Adinkras extends to a well-defined operation on the associated Belyi curves. 
\end{proposition}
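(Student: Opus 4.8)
The plan is to reduce well-definedness to two facts already established: that the exterior tensor product is a well-defined operation on chromotopologies, with underlying code $C_1\oplus C_2\subset\mathbb{F}_2^{N_1}\oplus\mathbb{F}_2^{N_2}=\mathbb{F}_2^{N_1+N_2}$, and that (by Theorem \ref{Thm:cubeRsym} and Proposition \ref{prop:Rsymm}) the Belyi pair attached to a chromotopology is independent, up to equivalence, of the chosen rainbow. First I would record that $C_1\oplus C_2$ is doubly even whenever $C_1$ and $C_2$ are, since $\mathrm{wt}(c_1,c_2)=\mathrm{wt}(c_1)+\mathrm{wt}(c_2)$; hence $A_{(N,k)}:=A_{(N_1,k_1)}\otimes A_{(N_2,k_2)}$ with $N=N_1+N_2$, $k=k_1+k_2$ is a genuine Adinkra chromotopology, and the construction of Section \ref{sec:chromRS} applies to it.

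Next I would fix on $A_{(N,k)}$ the rainbow obtained by concatenating the rainbow of $A_{(N_1,k_1)}$ with that of $A_{(N_2,k_2)}$, and show that the resulting Belyi pair $(X_{(N,k)},\beta)$ depends only on the Belyi pairs of the two factors. The core point is that the monodromy generators $\sigma_0,\sigma_1\in S_d$ with $d=N2^{N-k-1}$ can be assembled from those of the factors: a white vertex of $A_{(N,k)}$ is (a quotient of) a pair $(v_1,v_2)$, and after a compatible relabelling of edges the $N$-cycle of $\sigma_0$ at $(v_1,v_2)$ is the $N_1$-cycle of $\sigma_0^{(1)}$ at $v_1$ followed by the $N_2$-cycle of $\sigma_0^{(2)}$ at $v_2$, and similarly for $\sigma_1$ using the reversed rainbows at black vertices. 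Conjugating $\sigma_q^{(i)}$ by a permutation of the $i$-th block of edge labels conjugates the assembled $\sigma_q$ by the product permutation, so equivalent monodromies for the factors produce an equivalent monodromy for the product; together with rainbow-independence this yields a well-defined map $(X_{(N_1,k_1)},X_{(N_2,k_2)})\mapsto X_{(N_1+N_2,k_1+k_2)}$ on equivalence classes of Belyi pairs.

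I expect the main obstacle to be the edge-labelling bookkeeping: one must choose labels on the $N2^{N-k-1}$ edges of $A_{(N,k)}$ compatibly with chosen labels on the factors, and then verify that the assembled cycles are exactly the Grothendieck monodromy generators for the concatenated rainbow, i.e. that the two ``seam'' adjacencies in the cyclic order are correctly filled in. A cleaner route that sidesteps most of this is to argue through the factorization of Figure \ref{Fig:covering}: each $X_{(N_i,k_i)}\to B_{N_i}$ and the product's $X_{(N,k)}\to B_N$ are determined by their vertex-labelled monodromies $\rho_j$, which behave additively under $\oplus$ of codes, and $B_N$ is built from $B_{N_1}$ and $B_{N_2}$ by gluing at the $N$th-roots-of-$-1$ punctures; establishing well-definedness at the level of the $B_N$'s and the $\rho_j$'s and then pulling back along $\tilde\beta$ reduces everything to the rainbow-independence already proven. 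A final remark would note that exchanging the two tensor factors conjugates the rainbow by a block transposition in $S_N$, which lies in the subgroup generated by $(1\,2)$ and $(1\,2\cdots N)$, so the operation is symmetric up to isomorphism of Belyi pairs, as expected.
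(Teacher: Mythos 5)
Your proposal is correct and its decisive step is the same as the paper's: the tensor product already determines the chromotopology (with $C_1\oplus C_2$ doubly even), so the only ambiguity is the choice of rainbow on the product, and any two choices differ by a permutation of colors, hence give equivalent Belyi pairs by Proposition \ref{prop:Rsymm}. The additional sketch of assembling the monodromy generators of the product from those of the factors is not needed for well-definedness (the paper defers that explicit bookkeeping to Theorem \ref{Thm:rhoTen}), and your own ``cleaner route'' correctly collapses back to the rainbow-independence argument the paper uses.
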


\begin{proof}
Since a Belyi curve is equivalent to a ribbon graph, it suffices to show that the tensor product extends to a well-defined operation on the ribbon structure, or rainbow, of the Adinkra chromotopology. The chromotopology of $A_1\otimes A_2$ is defined by the tensor product. The rainbow is a cyclic ordering of the $N=N_1+N_2$ colors, that ordering being determined by the rainbows of the $A_i$. If $r_1$ and $r_2$ are two rainbows obtained by combining the rainbows for the $A_i$ in two different ways, then they are related by an $R$-symmetry. Therefore, they are equivalent Belyi curves by Proposition \ref{prop:Rsymm}.

It follows that the tensor product on Adinkras extends to the associated Belyi pairs by assigning any cyclic ordering of the $N$ colors to the product as a rainbow. We will denote by $X_1\otimes X_2$ the Riemann surface constructed out of the Adinkra $A_1\otimes A_2$. 
\end{proof}

For concreteness, we will use the choice of rainbow for $A_1\otimes A_2$ that was given in the above discussion, namely $(1,2,\dots, N_1+N_2)$. 

Let us now examine what this operation looks like on the corresponding Riemann surfaces $X_1$ and $X_2$. It is useful to first look at the operation on the monodromies of $X_i$ over $B_{N_i}$. Before stating our theorem, we fix some notation. Let $C_i,D_i\ins\mathbb{F}_2^{N_i}$ denote the set of even and odd elements respectively, and let $X_i$ be the surface obtained by quotienting the $N_i$-cube by a $k_i$-dimensional doubly even code $\mathcal{C}_i$, so that the elements of $C_i/\mathcal{C}_{k_i}$ correspond to the white vertices of $A_i$ and the elements of $D_i/\mathcal{C}_{k_i}$ correspond to the black vertices. As usual, let $c_j$ be the element that has zero in every position except for the $j$ and $j+1$ positions. Then the monodromies of $X_{(N_i,k_i)}$ are given by
$$\rho_{j,i}=\prod_{c\in H_j}(c,c+c_j),$$ 
where $H_j$ is a set of orbit representatives for the action of $\gen{c_j}$ on $C_i/\mathcal{C}_{k_i}$. Note that we could have defined these monodromies in terms of the black vertices, in which case they would take the form
$$\rho_{j,i}'=\prod_{d\in K_j}(d,d+c_j),$$ 
where $K_j$ is a set of orbit representatives for the action of $\gen{c_j}$ on $D_i/\mathcal{C}_{k_i}$. We will use $\rho_{j,i}$ to denote the $j$-th monodromy generator for $X_i$ written in terms of white vertices, and we will use $\rho_{j,i}'$ to denote the $j$-th monodromy generator for $X_i$ written in terms of black vertices.

Finally, it will be convenient for us to adopt the following notational convention. If $v$ is a vertex of $A_2$ and $\tau$ is a permutation of the vertices of $A_1$ written as a cycle, then we define  $\tau\oplus v$ to be the cycle with each entry equal to the corresponding entry of $\tau$ tensored by $v$ on the right; we define $v\oplus \tau$ similarly. For example, if $(c,d)$ is a $2$-cycle with $c,d$ being white vertices in $X_1$, and $w$ is a white vertex in $X_2$, then $(c,d)\oplus w=(c\oplus w,d\oplus w)$, the right-hand side now being a permutation of the white vertices of $X_1\otimes X_2$. 

\begin{theorem}
\label{Thm:rhoTen}
The monodromy group of $X_1\otimes X_2$ over $B_N$ is generated by elements $\rho_j$, $1 \leq j \leq N_1 + N_2$, given as follows.

If $1\leq j\leq N_1-1$, then 

$$\rho_j=\prod_{v\in V_2}\rho_j^{(v)}$$
where $V_2$ is the set of all vertices in $A_2$ and

$$\rho_j^{(v)}=\left\{\begin{array}{ll}
\rho_{j,1}\oplus v&\textrm{ if $v$ is a white vertex}\\
\rho_{j,1}'\oplus v&\textrm{ if $v$ is a black vertex}.
\end{array}
\right.$$

If $j=N_1$, then 
$$\rho_{N_1}=\prod_{w_1\in W_1,w_2\in W_2}\left(w_1\oplus w_2,(w_1+e_{N_1})\oplus(w_2+e_1)\right)$$
where $W_i$ is the set of white vertices in $A_i$.

If $N_1+1\leq j\leq N_1+N_2-1$, then
$$\rho_j=\prod_{v\in V_1}\rho_j^{(v)}$$ where $V_1$ is the set of all vertices in $A_1$ and

$$\rho_j^{(v)}=\left\{\begin{array}{ll}
v\oplus \rho_{j,2}&\textrm{ if $v$ is a white vertex}\\
v\oplus \rho_{j,2}'&\textrm{ if $v$ is a black vertex}.
\end{array}
\right.$$

Finally,
$$\rho_N=\prod_{\substack{w_1 \in W_1 \\ w_2 \in W_2}}(w_1\oplus w_2,(w_1+e_1)\oplus(w_2+e_{N_2})).$$

\end{theorem}

\begin{proof}
As remarked earlier, if we delete the edges of color $j\geq N_1+1$, we are left with $2^{N_2-k_2}$ disjoint copies of $A_1$, one for each vertex $v$ of $A_2$. Let $V_2$ denote the set of all vertices in $A_2$. It follows that if $1\leq j\leq N_1-1$, each $j/(j+1)$ colored face is contained in a copy of $A_1$ at a vertex $v\in V_2$ and, conversely, each $j/(j+1)$ colored face in such a copy will be a $j/(j+1)$ colored face in $X_1\otimes X_2$. Let us fix a vertex $v\in V_2$ and assume it is white for the time being. The transposition that switches the two adjacent white vertices of a $j/(j+1)$ colored face is given by
$$(c\oplus v,(c+c_j)\oplus v)=(c,c+c_j)\oplus v,$$ where $c\in C_1/\mathcal{C}_{k_1}$.
The product of all such transpositions over a set of orbit representatives for the action of $c_j$ on $C_1/\mathcal{C}_{k_1}$ is simply 
$$\rho_{j,1}\oplus v.$$
This element is the product of all transpositions swapping each white vertex with its opposite in any $j/(j+1)$ colored face, all this taking place in the copy of $A_1$ associated to the white vertex $v$. 

If $v$ is a black vertex, then the white vertices of the $j/(j+1)$ colored faces are tensor products of the black vertices in the $j/(j+1)$ colored faces of $A_1$ with $v$. 
The element $\rho_{j,1}'\oplus v$ is the product of all transpositions swapping each black vertex with its opposite in any $j/(j+1)$ colored face, all this taking place in the copy of $A_1$ associated to the black vertex $v$.
It follows that each $\rho_j$ is the product of the $\rho_j^{(v)}$ as given in the theorem statement. 

The monodromy generator $\rho_{N_1}$ is the product of the transpositions that swap the white vertices of the $N_1/(N_1+1)$ colored faces. Each such face has a unique vertex of the form $w_1\oplus w_2$, where the $w_i$ are white vertices of $A_i$. The adjacent white vertex is given by $(w_1+e_{N_1})\oplus(w_2+e_1)$.

It follows that
$$\rho_{N_1}=\prod_{\substack{w_1 \in W_1 \\ w_2 \in W_2}}(w_1\oplus w_2,(w_1+e_{N_1})\oplus(w_2\oplus e_1)).$$

That $\rho_j$ can be described as claimed for $N_1+1\leq j\leq N_1+N_2-1$ is demonstrated by complete analogy with the argument at the beginning of the proof. 

The $N$-th generator $\rho_N$ can be described as
$$\prod_{\substack{w_1 \in W_1 \\ w_2 \in W_2}}(w_1\oplus w_2,(w_1+e_1)\oplus(w_2+e_{N_2})).$$
We arrive at this description in the same way we found $\rho_{N_1}$.
\end{proof}

\begin{remark}
We could also describe this result in the labeling defined by the map $\gamma$ in Proposition \ref{prop:label}. While we defined this labeling via a map from the labeling used in Theorem \ref{Thm:monBN}, we could have defined the new labeling first. The proof that such a labeling exists without relying on our original description in terms of elements of $\mathbb{F}_2^N$ uses induction on the dimension of the cube. This approach makes it clear that a given $N$-cube can be obtained by taking tensor products of lower dimensional cubes. This is one of the most useful aspects of such a labeling and helps our understanding of the geometric meaning of the tensor product of Adinkras. 
\end{remark}



\begin{remark}
If $N_2=1$ then there is no $j$ such that $N_1+1\leq j\leq N-1=N_1$. Therefore, the generators of the monodromy group for the tensor product are completely determined by $\rho_{j,1}$ for $1\leq j\leq N_1$. Similarly, if $N_1=1$ then there is no $j$ such that $1\leq j\leq N_1-1=0$. We can choose $\rho_2,\ldots,\rho_{N_2+1}$ as  generators of the monodromy group of the tensor product, and these elements depend only on $\rho_{j,2}$. Note that if $N_1 = 1$, then $\rho_N$ contains the information specifying which copy of $A_1$ is connected to which by the new color. If $N_2 = 1$, then the same statement holds with $\rho_{N_1}$ and $A_2$ in place of $\rho_N$ and $A_1$.

The case of $N_1=N_2=1$ is unique in that,  for both $X_1$ and the tensor product $X_1\otimes X_1$, $\rho_j$ is the identity for all $j$. We can see this immediately for $X_1$ since it is the Belyi base $\bP^1(\bC)$ with embedded graph $\Sigma_0=A_1$. We have $X_1\otimes X_1=\bP^1(\bC)$ with embedded graph $A_1\otimes A_1$. Therefore the tensor product has $2$ faces, both with boundary $A_1\otimes A_1$. For $\rho_j$ to be non-trivial, one would need to differentiate between the $1/2$ and $2/1$ faces.
\end{remark}

We can now look at the action of the tensor product on the Riemann surfaces themselves. 
Let us first consider the case in which $N_1=N_2=1$.
We can interpret the tensor product $A_1\otimes A_1=A_2$ of Adinkras as an operation on Riemann surfaces: $X_1^{(a)}\otimes X^{(b)}_1=X_2$. We can view this operation as first taking two  copies of $X_1^{(a)}$ and two copies of $X^{(b)}_1$, then removing the $2$-face of each copy of $\bP^1(\bC)$, and then connecting their $1$-skeletons by adding two $2$-cells. Each $2$-cell is bounded by the four copies of $A_1$, alternating $a$ and $b$. They are connected with opposite orientations to create $\bP^1(\bC)$. The $4$ copies of $A_1$ can together be thought of as the equator. The fact that the $2$-cells are added with opposite orientations can be related to the different copies of $A_1$ being associated to white and black vertices; we will make this clearer in the general case. 

Next, let us  consider the case $N_1\neq 1$, $N_2=1$.

\begin{corollary}
If $X\in\mathcal{X}_{(N_1,k_1)}$ and $X_1$ is the unique element of $\mathcal{X}_{(1,0)}$, then the Riemann surface $X\otimes X_1$ can be viewed as a $2^{N_1-k_1-2}$-point connected sum of $X$ with itself. 
\end{corollary}
\begin{proof}
Let $A\in\mathcal{A}_{(N_1,k_1)}$ be the Adinkra out of which $X$ is constructed, and let $A_1$ be the Adinkra associated to $X_1$. There are two copies of $A$ in $A\otimes A_1$, corresponding to $w$ and $b$, the unique white vertex and black vertex of $A_1$.  If we ignore the edges of color $N_1+1$, as well as the faces incident to them in $X\otimes X_1$, we are left with two copies of $X$ with the $N_1/1$ faces removed. For $1\leq j\leq N_1-1$, we find that
$\rho_j^{(w)}=\rho_{j,1}\oplus w$, showing that the subsurface of $X\otimes X_1$ corresponding to the first copy of $A$ has all of the same $j/(j+1)$ faces as $X$ for $1\leq j\leq  N_1-1$. Since $N_1$ and $1$ are no longer adjacent, there are no $N_1/1$ colored faces. Therefore, $X^{(w)}$ is given by $X$ with the $N_1/1$ colored faces removed. Similarly, $X^{(b)}$ is a disjoint copy of $X$ with the $N_1/1$ colored faces removed. 
In summary, we are left with two copies of $X$, each with $2^{N_1-k_1-2}$ punctures; the boundary of each puncture is an $N_1/1$ colored loop.

Consider  a fixed puncture in $X^{(w)}$. 
The two white vertices in $X^{(w)}$ that bound the puncture are $c\oplus w$ and $(c+c_{N_1})\oplus w$ for some $c\in C_1/\mathcal{C}_{k_1}$. The  $2$-cycles in $\rho_{N_1}$ corresponding to these two vertices, namely 
$$(c\oplus w,(c+e_{N_1})\oplus b)\ \ \textrm{and}\ \left((c+c_{N_1})\oplus w,(c+e_1)\oplus b\right),$$
describe the two $N_1/(N_1+1)$ colored faces in $X$ that connect this puncture to the puncture on $X^{(b)}$  corresponding to $c\in C_1/\mathcal{C}_{k_1}$. Similarly, the two $2$-cycles in $\rho_N$ that correspond to $c$ are the two $N/1$ colored faces connecting the punctures. 
Therefore, we see that each hole of $X^{(w)}$ is connected to the corresponding hole of $X^{(b)}$ by a tube made of two $N_1/N$ colored faces and two $N/1$ colored faces. If we  consider a single hole and ignore the others, we  obtain the connected sum of $X$ with itself. Repeating for every puncture yields a $2^{N_1-k_1-2}$-point connected sum of $X$ with itself. 
\end{proof}

Note that for a connected sum of oriented manifolds, the orientation on the boundary of the holes that are connected is reversed. This is incorporated by the tensor product structure on the Adinkra. The holes that need to have their orientations reversed correspond to the black vertices of $X_1$.

If both  $N_1\neq 1$ and $N_2\neq 1$, the tensor product  ceases to be a connected sum since there are more than two copies of $A_1$. The $N_1/(N_1+1)$ faces and $N/1$ faces connect a puncture to two different copies of $A_1$, that is, the holes are no longer connected by tubes. In particular, the $N_1+1$ edge adjacent to an $N_1/(N_1+1)$ face bounds a puncture in a copy of $A_2$, and similarly for the $N$-colored edge adjacent to an $N/1$ face. For lack of a better term, we will refer to this as a multi-point connected sum between $A_1$ and $A_2$. We coin this terminology by analogy with the previous case, which gave rise to being a multi-point connected sum of $A$ with itself.


\begin{corollary}
If $X_1\in\mathcal{X}_{(N_1,k_1)}$ and $X_2\in\mathcal{X}_{(N_2,k_2)}$, then
the Riemann surface $X=X_1\otimes X_2$ can be viewed as a multi-point connected sum of $2^{N_2-k_2}$ copies of $X_1$ and $2^{N_1-k_1}$ copies of $X_2$. 
\end{corollary}

\begin{proof}
Forgetting the edges of colors $N_1+1,\dots, N$ and the $2$-cells connected to them in $X_1\otimes X_2$ leaves $2^{N_2-k_2}$ copies of $X_1$ with the $2^{N_1-k_1-2}$ $N_1/1$ colored faces removed. We  let $X_1^{(v)}$ denote the copy of $X_1$ attached to the vertex $v\in A_2$, and similarly we  let $X_2^{(v)}$ denote the copy of $X_2$ attached to the vertex $v\in A_1$. For $1\leq j\leq N_1-1$, $\rho_j^{(v)}$ describes the $j/(j+1)$ colored faces in $X_1^{(v)}$, showing that $X_1^{(v)}$ has all of the same $j/(j+1)$ faces as $X_1$ since $\rho_j^{(v)}$ can be obtained from $\rho_{j,1}$ as described in Theorem \ref{Thm:rhoTen} for $j\leq N_1-1$. This accounts for all of the faces that do not have an edge of color greater than $N_1$ bounding it, so that $X_1^{(v)}$ does not have any $N_1/1$ colored faces. 

Similarly, forgetting the edges of colors $1,\dots, N_1$ and the $2$-cells connected to them in $X_1\otimes X_2$ leaves $2^{N_1-k_1}$ copies of $X_2$ with the $N_2/1$ colored faces removed. Note that  $X_2^{(v)}$
and $X_1^{(v')}$  intersect only at vertices (if at all). We separate them at the vertices so that we can see, in the next step, how the vertices are forced to be equated. 

Fix a puncture in $X_1^{(v')}$. The two adjacent white vertices in the $N_1/(N_1+1)$ colored loop are given by $v\oplus v'$ and $(v+c_{N_1})\oplus v'$ for some vertex $v$ of the same color as $v'$. For simplicity of  exposition, we assume $v$ and $v'$ are both white.  The two $2$-cycles in $\rho_{N_1}$ corresponding to these vertices, namely
$$(v\oplus v', (v+e_{N_1})\oplus (v'+e_1)),\ \ ((v+c_{N_1})\oplus v', (v+e_1)\oplus (v'+e_1)),$$
describe the two $N_1/(N_1+1)$ colored faces that connect the puncture on $X_1^{(v')}$ to the corresponding puncture on $X_1^{(v'+e_1)}$. The edges of color $N_1$ that bound these faces correspond to edges in the boundaries of the copies of $X_1^{(v')}$ that are being connected. The other two edges of these faces have color $N_1-1$ and are in the boundaries of the copies of $X_2^{(v)}$. It follows  that the $N_1/(N_1+1)$ faces of the tensor product connect two copies of $X_1$ along part of the boundary of corresponding punctures in one direction and connects two copies of $X_2$ in the other direction. This identifies the vertices along the punctures of $X_1^{(v')}$ with the vertices along the attached holes of $X_2^{(v)}$. 

After the $N_1/(N_1+1)$ colored faces are added, the resulting manifold still has boundary. We must add the $N/1$ colored faces along that boundary according to $\rho_N$. Note that the $N/1$ faces connect holes of $X_1^{(v)}$ in one direction and holes of $X_2^{(v')}$ in the other, but will in general connect different holes than the $N_1/(N_1+1)$ colored faces. More precisely, we mean that if a fixed puncture in $X_1^{(v)}$ is connected to a hole in $X_1^{(v')}$, then the $N/1$ colored face will, in general, connect the original puncture to a puncture sitting inside yet another copy of $X_1$. 
\end{proof}

Theorem \ref{Thm:rhoTen} and its corollaries provide a new approach to the question of whether or  an Adinkra is factorizable, i.e., whether it can be written as the tensor product of two other Adinkras.

\section{The Belyi Curves Viewed Algebraically}
\label{sec:Algebraic}
In this section, we will describe the curves lying in $\mathcal{X}_{(N,k)}$ algebro-geometrically. First, we will demonstrate that these curves have Fuchsian uniformizations and that the uniformizing groups can be determined explicitly. Next, we will present the  curve $X_N$, associated to a hypercube, as a complete intersection of $N-3$ quadrics in $\mathbb{P}^{N-1}(\bC)$, and the curves $X\in\mathcal{X}_{(N,k)}$ as  quotients of $X_N$ via  groups of fixed-point-free automorphisms. The model for $X_N$ that we will introduce initially is visibly defined over the field $\mathbb{Q}(\zeta)$, where $\zeta$ is a primitive $2N$-th root of unity. This is not surprising because a Belyi curve is always definable over the algebraic closure $\overline{\mathbb{Q}}$ of $\mathbb{Q}$.  However, we will show that $X_N$ and the quotients $X\in\mathcal{X}_{(N,k)}$ are in fact definable over $\mathbb{Q}$. 

\subsection{Fuchsian Uniformizations}
It is a fact from covering space theory that every compact Riemann surface can be \emph{uniformized}, i.e.,  described as a quotient of  $\mathbb{C}$, $\mathbb{P}^1(\mathbb{C})$, or the upper half-plane $\mathbb{H}$ by a discrete subgroup of the corresponding automorphism group. In this section, we will describe how to uniformize the surfaces $X_{(N,k)}$. As we have seen, the theory behind the curves for $N\leq 3$ is not very interesting, so we will  discuss the story only for $N\geq 4$. For $N=4$, we will see that $X_4$ and its quotient by the unique doubly even code in $\mathbb{F}_2^4$  can be described as quotients of $\mathbb{P}^1(\mathbb{C})$, while for $N\geq 5$, the curves in $\mathcal{X}_{(N,k)}$ will be described as quotients of $\mathbb{H}$. More precisely, for any $N\geq 4$, we will determine the Fuchsian group $\Gamma$ that uniformizes $X_N$ and show that all of the curves in $\mathcal{X}_{(N,k)}$ correspond to subgroups of $\Gamma$. Thus, we can study the curves in $\mathcal{X}_{(N,k)}$ in a uniform way by studying subgroups of a fixed Fuchsian group.

In Section 3, we described the curves in $\mathcal{X}_{(N,k)}$ and their quotients explicitly in terms of monodromy data. This data, together with results in \cite{Carocca}, will give us the uniformization picture we are after. 
Let us begin by recalling how monodromy data gives rise to uniformizations. A good reference for the details of this discussion is \cite{Girondo:Text}. Suppose we have a branched covering map $f\colon X\to S$. The monodromy representation of $f$ is the map 
$$M_f\colon \pi_1(S-\{y_1,\dots, y_n\})\to S_d,$$ where $\{y_1,\dots,y_n\}$ is the branch locus of $f$, $d$ is the degree, and $S_d$ is the symmetric group on $d$ letters. The map $M_f$ describes how the fundamental group acts on the fiber at an unbranched value and how it depends on a choice of bijection between such a fiber and the set of $d$ elements; different choices result in corresponding conjugacies inside the symmetric group or the fundamental group. Now suppose that $S$ is uniformized by a Fuchsian group $\Gamma$. The monodromy data can be given in terms of a map $M_\Gamma\colon \Gamma\to S_d$ that fits into the following commutative diagram:

$$\xymatrix{\pi_1(S-\{y_1,\dots, y_N\})\ar^{M_f}[dr]\ar[d]_\rho &\\
\Gamma\ar_{M_\Gamma}[r]&S_d}$$

Here, the map $\rho$ sends a loop $\gamma$ to the unique transformation of the universal cover (the upper-half plane if $N\geq 5$) that sends a fixed choice of initial point for a lift of $\gamma$ to the then determined endpoint. The map $\rho$ is surjective, and since the kernel of $\rho$ is contained in the kernel of $M_f$, we can use $\rho$ to define $M_\Gamma$ in a well-defined manner. In this situation, the curve $X$ is uniformized by the group $M_\Gamma^{-1}(I(1))$, where $I(1)$ denotes the stabilizer of $1$. 

For all of the curves in $\mathcal{X}_{(N,k)}$, we have seen that the corresponding monodromy elements $\sigma_0$, $\sigma_1$, and $\sigma_\infty$ have orders $N$, $N$, and $2$, respectively. Moreover, in the language of \cite{Girondo:Text}, the Adinkras are \emph{uniform}, in the sense that each vertex has the same number of incident edges. It follows from results in \cite{Girondo:Text} that all of the curves in $\mathcal{X}_{(N,k)}$ are uniformized by normal torsion-free subgroups of the $(N,N,2)$-triangle group 

$$\Gamma_{N,N,2}=\gen{x_1,x_2,x_3| x_1^N=x_2^N=x_3^2=x_1x_2x_3=1}.$$

The group $\Gamma_{N,N,2}$ acts on the upper half-plane for $N\geq 5$,  on the standard plane for $N=4$, and  on the sphere for $N=1,2,3$. In all cases, we can realize $\Gamma_{N,N,2}$ as a group of transformations as follows. Let $R$ be a triangle with vertices $v_1,v_2,v_3$ and  angles $\pi/N,\pi/N,\pi/2$ at the respective vertices. Let $x_i$ be the transformation obtained by performing the two reflections in the edges containing $v_i$ in ascending order modulo $3$. Then $x_1$ is just rotation through an angle of $2\pi/N$ at $v_1$, $x_2$ is rotation through an angle of $2\pi/N$ at $v_2$, and $x_3$ is rotation by an angle of $\pi$ at $v_3$. If $D=R\cup R'$, where $R'$ is the image of $R$ under reflection in one of the edges, then $D$ is a fundamental domain for the action of the triangle group on the ambient space and the quotient is always the Riemann sphere.

 Our goal is to describe explicitly the groups uniformizing these curves. We will do this by first explicitly uniformizing the map from $B_N$ to the Belyi base and then using this uniformization to contstruct the other uniformizations. 

\begin{proposition}
The curve $B_N$ is uniformized by a Fuchsian group $\Gamma_N$ with presentation 
$$\Gamma_N=\gen{y_1,\dots,y_N|y_1^2=\cdots=y_N^2=y_1\cdots y_N=1}.$$
\end{proposition}
\begin{proof}

Let $\tilde{\beta}\colon B_N\to\mathbb{P}^1(\bC)$ be the usual map. From the above discussion, we see that the Belyi base is uniformized by $\Gamma_{N,N,2}$. Therefore, we need to consider the following commutative diagram:

$$\xymatrix{\pi_1(\mathbb{P}^1(\bC)-\{0,1, \infty\})\ar^{M_{\tilde{\beta}}}[dr]\ar[d]_\rho &\\
\Gamma_{N,N,2}\ar_{M_{\Gamma_{N,N,2}}}[r]&S_N}$$

The fundamental group $\pi_1(\mathbb{C}\mathbb{P}^1-\{0,1, \infty\})$ has a presentation given by $\gen{w_1,w_2,w_3|w_1w_2w_3=1}$ with $w_1$ corresponding to a loop around $0$, $w_2$ to a loop around $1$, and $w_3$ to a loop around $\infty$. By definition of the monodromy map, we have
$$M_{\tilde{\beta}}(w_i)=\sigma_i^{-1},$$ so that
$$M_{\Gamma_{N,N,2}}(x_i)=\sigma_i^{-1},$$
where $\sigma_i$ are the permutation representation pair for $\tilde{\beta}$. Note that the orders of the $\sigma_i$ are what make the map $M_{\Gamma_{N,N,2}}$ well defined. 
Since the action of the subgroup of $S_N$ generated by the $\sigma_i$ is simply transitive, it follows that the stabilizer is trivial. Therefore, by results found in \cite{Girondo:Text}, $B_N$ is uniformized by $\Gamma_N=\ker{M_{\Gamma_{N,N,2}}}\subseteq\Gamma_{N,N,2}$ and the map $\tilde{\beta}$ is simply the map 
$$\Delta/\Gamma_N\to \Delta/\Gamma_{N,N,2}$$ induced by inclusion, where $\Delta$ is the upper half-plane for $N\geq 5$ and the Euclidean plane for $N=4$. It is easy to see that the elements $y_i=x_1^ix_3x_1^{N-i}$ all lie in  $\ker{M_{\Gamma_{N,N,2}}}$. This can be seen group theoretically, but also by noting that these elements correspond to the $N$ lifts of the loop around $\infty$ which are the $N$ loops around the centers of the faces of $B_N$. Let $K$ be the subgroup of $\Gamma_N$ generated by the $y_i$. One can check that $K$ is a normal subgroup of $\Gamma_{N,N,2}$ and that $\Gamma_{N,N,2}/K\cong \mathbb{Z}/N\mathbb{Z}$. Since we know that $\Gamma_N$ has index $N$ in $\Gamma_{N,N,2}$, it follows that $K=\Gamma_N$, so that $\Gamma_N$ has the desired presentation. 

Note that $B_N$ has $N$ orbifold points of order $2$, corresponding to the centers of the faces. This is seen in the uniformization picture by noting that $\Gamma_N$ is \emph{not} a torsion-free subgroup. 
\end{proof}

\begin{proposition}
\label{uniform}
The curve $X_N$ is uniformized by the torsion-free normal subgroup $\Gamma_{(N,0)}=\Gamma_N'$, the commutator subgroup of $\Gamma_N$. More concretely, $\Gamma_{(N,0)}$ is the normal closure of the set
$$\{(y_iy_j)^2|1\leq i,j\leq N\}.$$
\end{proposition}

\begin{proof}
The proof of this theorem is a slight modification of the proof of Theorem 2.3 in \cite{Carocca} and we review it here in a bit more detail than is given in the reference. We consider the  commutative diagram

$$\xymatrix{\pi_1(B_N-\{\zeta_j \})\ar^{M_{f}}[dr]\ar[d]_\rho &\\
\Gamma_N\ar_{M_{\Gamma_N}}[r]&S_{2^{N-1}}}$$
where $\zeta_1=e^\frac{\pi i}{N}$,$\zeta_j=\zeta_1^{2j-1}$, and $f$ is the usual map $f\colon X_N\to B_N$. Note that the degree of this map is $2^{N-1}$.

We labeled the monodromy generators of  $f$,  $\rho_j$, by the white vertices in $X_N$. 
The fundamental group $\pi_1(B_N-\{\zeta_j \})$ is equal to $\gen{z_1,\dots, z_N|z_1\cdots z_N=1}$, with $z_j$ corresponding to the loop around  $\zeta_j$. We have
$$M_f(z_j)=\rho_j$$ and
$$M_{\Gamma_N}(y_j)=\rho_j.$$

The stabilizer of a white vertex is trivial, so the kernel of $M_{\Gamma_N}$ uniformizes $X_N$ and the map $f\colon X_N\to B_N$ is given in the uniformization picture as 
$$\Delta/\Gamma_{(N,0)}\to \Delta/\Gamma_N,$$
where $\Delta$ is as above and we have wrote $\Gamma_{(N,0)}$ now instead of $\ker{M_{\Gamma_N}}$. Let $F$ be the normal closure in $\Gamma_N$ of the set 
$$\{(y_iy_j)^2|1\leq i,j\leq N\}.$$
Then it is easy to see that $\Gamma_N/F\cong(\mathbb{Z}/2\mathbb{Z})^{N-1}$, from which it follows by index considerations that $F=\Gamma_{(N,0)}$, so that $\Gamma_{(N,0)}$ has the claimed presentation.

The Riemann surface $X_N$ has no orbifold points, which shows that the  group  $\Gamma_{(N,0)}$ is torsion-free.

\end{proof}

Let us consider once more the commutative diagram used in the proof of Proposition \ref{uniform}. The loop $z_i$ corresponds under $\rho$ to the transformation that sends a white vertex $w$ to the other white vertex that makes up an $i/(i+1)$ colored face. Motivated by this, let us define a group homomorphism 
\begin{eqnarray}
\varphi\colon \Gamma_N &\to &\mathbb{F}_2^N\nonumber\\
y_i&\mapsto& c_i\nonumber
\end{eqnarray}
where, as usual, $c_i$ is the $i$-th generator of the maximal even code. The map $\ph$ is surjective onto the maximal even code $C_N$, so 
$$\Gamma_N/\ker{\varphi}\cong C_N\cong(\mathbb{Z}/2\mathbb{Z})^{N-1}.$$

Moreover, it is clear that $\ker{\ph}=\ker{M_{\Gamma_N}}=\Gamma_{(N,0)}$. Therefore, the group uniformizing $X_N$ is described group theoretically as the kernel of a map from $\Gamma_N$ to $C_N$, namely the map $\ph$. Now let $\mathcal{C}_k$ be a $k$-dimensional doubly even subcode. Then $$\Gamma_N/\varphi^{-1}(\mathcal{C}_k)\cong C_N/\mathcal{C}_k$$
With a little more work, we conclude in the proposition below, that $\varphi^{-1}(\mathcal{C}_k)$ uniformizes the curve in $\mathcal{X}_{(N,k)}$ corresponding to $\mathcal{C}_k$.  

\begin{proposition}
Let $\ph\colon \Gamma_N\to C_N$ be as above. Then $\Gamma_{(N,k)}=\ph^{-1}(\mathcal{C}_k)$ uniformizes the curve $X\in\mathcal{X}_{(N,k)}$ associated to $\mathcal{C}_k$. Suppose further that $\mathcal{C}_k$ is generated by $\{v_1,\dots, v_k\}$, with each $v_i$ written as $\sum_{j=1}^{N-1}a_{ij}c_j$ for uniquely determined $a_{ij}\in\{0,1\}$. If we set  $$y_{v_i}=\prod_{j=1}^{N-1}y_i^{a_{ij}}$$
then $\ph^{-1}(\mathcal{C}_k)$ can be described as the normal closure in $\Gamma_N$ of the set
$$\{y_{v_i}|i=1,\dots, k\}\cup\{(y_iy_j)^2|1\leq i,j\leq N\}.$$
\end{proposition}

\begin{proof}
Consider the commutative diagram
$$\xymatrix{\pi_1(B_N-\{\zeta_j \})\ar^{M_{X}}[dr]\ar[d]_\rho &\\
\Gamma_N\ar_{M_{\Gamma_N}}[r]&S_{2^{N-k-1}}}$$
where $M_{X}$ is defined in terms of the monodromy data for $f_X\colon X\to B_N$ and $M_{\Gamma_N}$ is defined to make the diagram commute. Then, as usual, $X$ is uniformized by the kernel of $M_{\Gamma_N}$. Since two white vertices are identified exactly when they differ by an element of $\mathcal{C}_k$, it follows at once that this kernel agrees with $\ph^{-1}(\mathcal{C}_k)$, whence $\ph^{-1}(\mathcal{C}_k)$ uniformizes $X$ as desired. 

Now let $K$ be the normal closure of 
$$\{y_{v_i}|i=1,\dots, k\}\cup\{(y_iy_j)^2|1\leq i,j\leq N\}.$$
The $y_{v_i}$ were chosen such that $\ph(y_{v_i})=v_i$. Therefore, each element of $\mathcal{C}_k$ can be written  as the image of  a product of the elements $y_{v_i}$. So, if  $z\in\varphi^{-1}(\mathcal{C}_k)$, then the element $\varphi(z)$ can also be written as $\varphi(y)$ where $y$ is a product of elements in the set $\{y_{v_i}\}$. It follows that $zy^{-1}\in\ker{\varphi}$, from which we deduce that $z=y w$ for some element $w\in\ker{\varphi}$. Since the kernel of $\varphi$ has been determined to be the normal closure of $\{(y_iy_j)^2|1\leq i,j\leq N\}$, it follows that $z$ can be written as desired. 

\end{proof}

\subsection{An Algebraic Model for the Hypercube Surfaces}
We demonstrated above that $X_N$ contains a subgroup of automorphisms $H\cong(\mathbb{Z}/2\mathbb{Z})^{N-1}$ and that the quotient $X_N/H\cong B_N$. The curve $B_N$ is the orbifold of signature $(0,N;2,\dots, 2)$; indeed, it has $N$ order $2$ points at the $N$-roots of $-1$. Such a curve is called a \emph{generalized Humbert curve} in the language of \cite{Carocca}. It is shown in \cite{Carocca} that any generalized Humbert curve $S$ such that $S/H$ has signature $(0,N;2,\dots, 2)$ has a model of the form $C(\lambda_1,\dots,\lambda_{N-3})$ where $C(\lambda_1,\dots,\lambda_{N-3})$ is given by the zero-locus of the following equations
$$\left\{\begin{array}{ccc} x_1^2+x_2^2+x_3^2&=&0\\
\lambda_1 x_1^2+x_2^2+x_4^2&=&0\\
\vdots &\vdots &\vdots\\
\lambda_{N-3}x_1^2+x_2^2+x_{N}^2&=&0
\end{array}\right.$$
and $\lambda_i\in\mathbb{C}-\{0,1\}$. The curve $S$ comes equipped with the degree $2^{N-1}$ map 
\begin{eqnarray}
\pi\colon S&\to &\mathbb{C}\nonumber\\
x&\mapsto & \frac{x_2^2}{x_1^2},\nonumber
\end{eqnarray}
whose branch locus is the set $\{0,-1,\infty,-\lambda_1,\dots,-\lambda_{N-3}\}$, as well as with the group $H\cong(\mathbb{Z}/2\mathbb{Z})^{N-1}$ of deck transformations of $\pi$ generated by the maps
\begin{eqnarray}
a_j\colon S&\to &S\nonumber\\
\left[x_1\colon \cdots \colon x_j \colon \cdots \colon x_N\right]&\mapsto &\left[x_1\colon \cdots \colon -x_j \colon \cdots \colon x_N\right].\nonumber
\end{eqnarray}

The map $X_N\to B_N$ has the $N$-th roots of $-1$ as branch locus. If we let $f\colon B_N\to\mathbb{P}^1(\mathbb{C})$ be a M\"{o}bius transformation that maps the $N$-th roots of $-1$ into $\{0,-1,\infty,-\lambda_1,\dots,-\lambda_N\}$, then Theorem 4.3 of \cite{Carocca} states that $X_N$ is conformally equivalent to $C(\lambda_1,\dots,\lambda_N)$. 

Let $\zeta=e^\frac{\pi i}{N}$ and $\xi=\frac{\zeta^3-\zeta^{-1}}{\zeta-\zeta^3}$. The M\"{o}bius transformation 
$$f(z)=\frac{z-\zeta}{z-\zeta^{-1}}\xi$$
maps the points $\zeta,\zeta^3,\zeta^{2N-1}$ onto $0,-1,\infty$, respectively.  The image under $f$ of the other roots of unity will necessarily lie on the negative real axis. 
Let us order the $N$-th roots of $-1$ by setting
$$\zeta_i=\zeta^{2i-1}.$$
If we set $\mu_i=f(\zeta_i)$ for $i=1,\dots, N$, then $X_N$ is conformally equivalent to $C(-\mu_2,\dots,-\mu_{N-1})$. Note that $\mu_1=0,\mu_2=-1$, and $\mu_N=\infty$. From now on, we will use $X_N^{\mathrm{alg}}$ to denote this model and $B_N^{\mathrm{alg}}$ to denote the target of the map $\pi$.  

Recall that the belyi map for $B_N$ is given by $\tilde{\beta}\colon B_N\to\mathbb{C}$, where  
$$\tilde{\beta}(x)=\frac{x^N}{x^N+1}.$$
Therefore, the Belyi map for $X_N^{\mathrm{alg}}$ is given by $\beta^{\mathrm{alg}}=\tilde{\beta}\circ f^{-1}\circ \pi\colon X_N\to\mathbb{C}$. 

Let us describe the vertices of the $N$-cube sitting inside $X_N^{\mathrm{alg}}$. If we set $\zeta_0=f(0)$ and $\zeta_\infty=f(\infty)$, then the white vertices are the $2^{N-1}$ points of $\pi^{-1}(\zeta_0)$ and the black vertices are the $2^{N-1}$ points of $\pi^{-1}(\zeta_\infty)$. Explicitly, we have
$$\pi^{-1}(\zeta_0)=[1\colon \pm\sqrt{\zeta_0}\colon\pm\sqrt{-1-\zeta_0}\colon\pm\sqrt{\mu_3-\zeta_0}\colon\cdots\colon\pm\sqrt{\mu_{N-1}-\zeta_0}]$$

and

$$\pi^{-1}(\zeta_\infty)=[1\colon\pm\sqrt{\zeta_\infty}\colon\pm\sqrt{-1-\zeta_\infty}\colon\pm\sqrt{\mu_3-\zeta_0}\colon\cdots\colon\pm\sqrt{\mu_{N-1}-\zeta_\infty}].$$

Next, let us describe the edges of the $N$-cube in $X_N^{\mathrm{alg}}$. Suppose that $B_N$ has the usual rainbow $(1,\dots, N)$, where the $i$-th color corresponds to the ray with argument $2\pi/N$. Then, $B_N^{\mathrm{alg}}$ inherits the rainbow $(1,\dots, N)$ with the edge of color $i$ corresponding to a circular arc which joins $\zeta_0$ to $\zeta_\infty$ and which crosses the negative real axis between $\mu_i$ and $\mu_{i+1}$ if $1\leq i\leq N-1$ and  crosses the positive real axis if $i=N$. The curve $X_N^{\mathrm{alg}}$ is given the same rainbow with the edges of color $i$ being given by the pre-images of the edge of color $i$ on $B_N^{\mathrm{alg}}$. 

The centers of the faces of $X_N^{\mathrm{alg}}$ are given by the vanishing of coordinate functions. Indeed, they are the pre-images of the branch points. Therefore, the centers of the faces are given by
$$\pi^{-1}(\mu_i)=\{x_{i+1}=0\}\cap X_{N}^{\mathrm{alg}},$$
where $i+1$ is computed modulo $N$, with the set of representatives taken as $\{1,\dots,N\}$.
In light of our description of the edges, we see that the points of $\pi^{-1}(\mu_{i+1})$ are the centers of the $i/(i+1)$ faces. In summary, we have proved the following. 

\begin{proposition}
Let $\zeta=e^\frac{\pi i}{N}$, $\xi=\frac{\zeta^3-\zeta^{-1}}{\zeta-\zeta^3}$, and $f(z)=\frac{z-\zeta}{z-\zeta^{-1}}\xi$. 
Set $\mu_i=f(\zeta^{2i-1})$ for $1\leq i\leq N$, and let $X_N^{\mathrm{alg}}=C(-\mu_3,\dots,-\mu_{N-1})\subseteq \mathbb{P}^{N-1}(\mathbb{C})$. Further, let $\beta^{\mathrm{alg}}=\tilde{\beta}\circ f^{-1}\circ \pi$. Then the Belyi pair $(X_N,\beta)$ is equivalent to the Belyi pair $(X_N^{\mathrm{alg}},\beta^{\mathrm{alg}})$. If $\zeta_0=f(0)$ and $\zeta_\infty=f(\infty)$, then the white vertices of $X_N^{\mathrm{alg}}$ are the points in $\pi^{-1}(\zeta_0)$ and the black vertices are the points in $\pi^{-1}(\zeta_\infty)$. If we give $X_N^{\mathrm{alg}}$ the rainbow that comes naturally from $B_N$, then the centers of the $i/(i+1)$ colored faces are given by $\pi^{-1}(\mu_{i+1})=\{x_{i+2}=0\}\cap X_N^{\mathrm{alg}}$.
\end{proposition}

The deck transformation group of $\pi$ is the group $H\cong(\mathbb{Z}/2\mathbb{Z})^{N-1}$ generated by the maps $a_j$ that switch the sign of the $j$-th coordinate. We claim that the group of deck transformations of $\beta^{\mathrm{alg}}$ is the semi-direct product $H\rtimes \mathbb{Z}/N\mathbb{Z}$. Indeed, we have the following proposition, of which the claim is a corollary. 

\begin{proposition}\label{RotProp}
Let $r\colon B_N^{\mathrm{alg}}\to B_N^{\mathrm{alg}}$ be the M\"{o}bius transformation that corresponds to the rotation of $B_N$ through an angle of $2\pi /N$ in the positive direction. Let $s\colon\mathbb{C}\mathbb{P}^{N-1}\to\mathbb{C}\mathbb{P}^{N-1}$ be the automorphism given by 
$$s[x_1\colon\cdots\colon x_N]=[x_N\colon x_1\colon\cdots\colon x_{N-1}]$$
and let $d\colon \mathbb{C}\mathbb{P}^{N-1}\to\mathbb{C}\mathbb{P}^{N-1}$ be the diagonal automorphism given by
$$d[x_1\colon\cdots\colon x_N]=[x_1,c_2x_2,\dots,c_Nx_N],$$
where $c_2=\sqrt{-\mu_{N-1}}$, $c_3=1$, and $c_i=\sqrt{\mu_{i-1}}$ for $i=4,\dots,N$. 
Set $\Theta=d\circ s|_{X_N^{\mathrm{alg}}}$. Then $\Theta$ is an automorphism of $X_N^{\mathrm{alg}}$ that is a lift of $r$. In particular, $\Theta$ is a deck transformation of $\beta^{\mathrm{alg}}$. 

\end{proposition}

The proof of this proposition is  in  the Appendix. 

\begin{corollary}\label{DeckTrafo}
The  $G$ of deck transformations of $\beta^{\mathrm{alg}}$ is generated by $H$ and $\Theta$. Moreover, there is an isomorphism of groups $$G\cong H\rtimes\mathbb{Z}/N\mathbb{Z}.$$ 
\end{corollary}
\begin{proof}
By inspection, the automorphisms $h\Theta^i$, where $h\in H$ and $0\leq i\leq i-1$, are all distinct. Therefore, the group generated by $\Theta$ and $H$ has order at least $2^{N-1}N$. On the other hand, the degree of $\beta^{\mathrm{alg}}$ is $2^{N-1}N$, so the full deck transformation group must be generated by $\Theta$ and $H$. 

Consider the natural map 
$$\gen{\Theta}\to G/H.$$
It is injective because all non-trivial powers of $\Theta$ permute the coordinates non-trivially. Further, it is an isomorphism, since both groups have order $N$. It then follows that there is a group isomorphism
$$G\cong H\rtimes\gen{\Theta}\cong H\rtimes \mathbb{Z}/N\mathbb{Z}.$$
\end{proof}

\subsection{The Quotients of Hypercube surfaces}
Our next task is to describe the curves in $\mathcal{X}_{(N,k)}$ algebraically. In order to do this, we will describe how a doubly even code gives rise to a subgroup of $H$ that acts fixed-point free on $X_N^{\mathrm{alg}}$. The quotient of $X_{N}^{\mathrm{alg}}$ by this subgroup is the algebraic description we are after. 

The curve $X\in\mathcal{X}_{(N,k)}$ is obtained from $X_N$ by identifying certain vertices and edges determined by a doubly even code $\mathcal{C}_k\subseteq\mathbb{F}_2^N$. We determine here how this identification translates to the algebraic picture. Recall that the white vertices of the $N$-cube sitting inside $X_N^\mathrm{\mathrm{alg}}$ are given by

$$\pi^{-1}(\zeta_0)=[1\colon \pm\sqrt{\zeta_0}\colon\pm\sqrt{-1-\zeta_0}\colon\cdots\colon\pm\sqrt{\mu_{N-1}-\zeta_0}]$$
and the black vertices by 
$$\pi^{-1}(\zeta_\infty)=[1\colon\pm\sqrt{\zeta_\infty}\colon\pm\sqrt{-1-\zeta_\infty}\colon\cdots\colon\pm\sqrt{\mu_{N-1}-\zeta_\infty}].$$

Each edge of $X_N^{\mathrm{alg}}$ arises from the analytic continuation of the function
$$z\mapsto [1\colon \sqrt{z}\colon\sqrt{-1-z}\colon\cdots\colon \sqrt{-\lambda_{N-3}-z}]$$
along an edge of $B_N^{\mathrm{alg}}$ that starts at $\zeta_0$ and ends at $\zeta_\infty$. Let us label the white vertices $w(+,\pm,\cdots,\pm)$ and the black vertices $b(+,\pm,\cdots,\pm),$ where the $i$-th sign is the sign of the $i$-th coordinate of the vertex. Note that since we are currently working in the affine chart $\{x_1=1\}$, the first sign will always be $+$. If we start at a given white vertex $w$ and travel along the edge of color $i$, we will end up at some black vertex $b(+,\pm,\dots,\pm)$. Which vertex we arrive at can be encoded by a sequence of signs that indicates whether or not the $i$-th coordinate of $b$ will have the same sign as that of $w$. We can determine whether or not the sign will change by examining the branch cuts that are needed to define $\sqrt{z}$. 

Let us now be precise about how we define the square root function. Let $\log(z)$ denote the principal branch of the logarithm obtained by making a branch cut along the negative real axis, with $\arg(z)\in (-\pi,\pi]$. We will take $\sqrt{z}$ to mean $e^{\frac{1}{2}\log(z)}$; crossing the branch cut corresponds to choosing the other branch of square root, which amounts to a switching of signs. The maps
\begin{eqnarray}
z&\mapsto &\mu_i-z,\ i\geq 2,\nonumber
\end{eqnarray}
send the negative real axis onto the ray $[\mu_i,\infty)$. We start at a white vertex $w$ and travel along the edge of color $i$. As we do so, we observe which branch cuts the edge of color $i$ on $B_N^{\mathrm{alg}}$ crosses; each branch cut it crosses corresponds to a sign switch. For example, the edge of color $1$ on $B_N^{\mathrm{alg}}$ crosses the negative real axis between $0$ and $-1$. Therefore, every branch cut is crossed and all of the signs will change. In particular, the vertex $w(+,\dots,+)$ is connected to $b(+,-,\dots,-)$ by the edge of color $1$. The edge of color $2$ will cross every branch cut except the one corresponding to $[-1,\infty)$, so every coordinate switches its sign except for the third coordinate. Each time we move down the rainbow one more color, we gain a coordinate whose sign does not change until we hit color $N$. The edge of color $N$ crosses the \emph{positive} real axis, so that all of the coordinates switch sign except for the second. We can summarize the above in the following table.

\begin{center}
\begin{tabular}{|c|l|}
\hline
{Color }
&
{Sign change}
\\
\hline
$1$&$(+,-,-,\cdots,-)$\\
\hline
$2$& $(+,-,+,-,\cdots,-)$\\
\hline
$3$& $(+,-,+,+,-,\cdots,-)$\\
\hline
$4$& $(+,-,+,+,+,-,\cdots,-)$\\
\hline
\vdots & \vdots\\
\hline
$N-1$& $(+,-,+,\cdots,+)$\\
\hline
$N$&$(+,+,-,-,\cdots,-)$\\
\hline
\end{tabular}
\end{center} 

Recall that in the standard picture of the $N$-cube, the white vertices correspond to even elements of $\mathbb{F}_2^N$ and the black vertices to odd elements. Further, the $i$-th color corresponds to $e_i$, the $i$-th standard basis vector, and two vertices are adjacent via color $i$ if they differ by $e_i$. Therefore, we would like to associate $e_i$ to the $N$-tuple of sign changes corresponding to the $i$-th color in the table above. Rather than continue working with these $N$-tuples of sign changes, we will associate to each of these sign changes the element of $H$ that acts as the corresponding sign changes. Putting everything together, we may define a map of $\mathbb{F}_2$-vector spaces 

\begin{eqnarray}
\psi\colon \mathbb{F}_2^N&\to & H\nonumber
\end{eqnarray}
by the rule 
$$\psi(e_j)=\left\{\begin{array}{lll}a_2\prod_{i=j+2}^Na_i&\textrm{if}&1\leq j\leq N-2\\
a_2&\textrm{if}&j=N-1\\
\prod_{i=3}^Na_i&\textrm{if}&j=N\end{array}\right.$$

We have therefore established a way to think of the colors of the $N$-cube as elements of $H$. Note that since each color takes us from a white vertex to a black vertex, it does not make sense geometrically to view the colors themselves as elements of $H$. However, if we restrict ourselves to only the even elements, then it \emph{does} make sense to view them as elements of $H$. 

\begin{theorem}\label{Dictionary}Let $C_N\subseteq\mathbb{F}_2^N$ be the maximal even sub code. Then the restriction of $\psi$ to $C_N$ is an isomorphism onto $H$. 
If $\mathcal{C}_k\subseteq C_N$ is a doubly even code, then $\psi(\mathcal{C}_k)\subseteq H$ is a subgroup of fixed-point-free automorphisms. If $X\in\mathcal{X}_{(N,k)}$ is the curve associated to the code $\mathcal{C}_k$, then $X\cong X_{N}^{\mathrm{alg}}/\psi(\mathcal{C}_k)$. 
\end{theorem}

\begin{proof}
First, let us show that $\psi$ is surjective. We note that
$$\psi(e_{j}+e_{j+1})=a_{j+2},\ 1\leq j\leq N-2.$$
It follows that $a_3,\dots, a_N$ are in the image of $\psi$. Since $\psi(e_{N-1})=a_2$ and $H$ is generated by $a_2,\dots, a_N$, we conclude that $\psi$ is surjective. By counting dimensions, we see that the $\ker\psi$ is $1$-dimensional. Since $e_1+e_{N-1}+e_N\in\ker\psi$ by inspection, we conclude that
$$\ker\psi=\gen{e_1+e_{N-1}+e_N}.$$

We now argue that the restriction of $\psi$ to $C_N$ is an isomorphism. First, since $a_2=\psi(e_1+e_N)$ and the other $a_i$ are the images of even codewords, we conclude that the restriction of $\psi$ to $C_N$ is surjective. Since $C_N$ and $H$ both have dimension $N-1$ as $\mathbb{F}_2$-vector spaces, it follows that the restriction of $\psi$ to $C_N$ is in fact an isomorphism. 

Not all of the automorphisms in $H$ act fixed-point free on $H$. In fact, any point of $X_N^{\mathrm{alg}}$ that has zero as its $i$-th coordinate will be a fixed point of $a_i$ and, conversely, any fixed point of an element of $H$ must have  zero as some coordinate. However, we have already demonstrated that such a point is the center of some face of $X_N^{\mathrm{alg}}$. It follows that all of the $a_i$ for $i=2,\dots,N$ have fixed points but  any element of $H$ that involves at least $2$ of the $a_i$ must act fixed-point free. We have already shown that the pre-image under $\psi$ of each $e_i$ is a codeword of weight $2$. Since the restriction of $\psi$ to $C_N$ is an isomorphism, it follows that no doubly-even code word can map to any $a_i$ for $i=1,\dots, N$. Therefore, the image of a doubly even codeword will act fixed-point free on $X_N^{\mathrm{alg}}$. 

By design, the automorphisms of $H=\psi(C_N)$ identify vertices and edges in the same way that the vertices and 
edges of the $N$-cube are identified. Since $\mathcal{C}_k\subseteq C_N$ is a subcode, $\psi(\mathcal{C}_k)$ makes the same identifications as $\mathcal{C}_k$ does on the $N$-cube, so
$$X\cong X_N^{\mathrm{alg}}/\psi(\mathcal{C}_k).$$
\end{proof}

\subsection{Monodromies} Theorem \ref{Dictionary} gives us an explicit way to view the action of the colors on the vertices and edges of the Adinkra sitting inside of $X_N^{\mathrm{alg}}$. We can use this to help us label the monodromies of the map $\beta^{\mathrm{alg}}\colon X_N^{\mathrm{alg}}\to\mathbb{C}$, as well as the monodromies of the maps $X^{\mathrm{alg}}\to B_N^{\mathrm{alg}}$ for $X\in\mathcal{X}_{(N,k)}$.  

We begin by describing the permutation pair associated to the Belyi map $\beta^{\mathrm{alg}}$. 
For $h \in H$, let $w_h$ be the white vertex described by the sign changes encoded by $h$. For example, 
$$w_1=[1\colon \sqrt{\zeta_0}\colon\sqrt{-1-\zeta_0}\colon\cdots\colon\sqrt{\mu_{N-1}-\zeta_0}]$$
and 
$$b_1=[1\colon \sqrt{\zeta_\infty}\colon\sqrt{-1-\zeta_\infty}\colon\cdots\colon\sqrt{\mu_{N-1}-\zeta_\infty}]$$
The map $\psi$ appearing in Theorem \ref{Dictionary} was designed in such a way that
$w_h$ will be connected to $b_{h\psi(e_i)}$ by the edge of color $i$. Let $\Theta$ be the rotational automorphism of $X_N^{\mathrm{alg}}$ that was described explicitly in Proposition \ref{RotProp}. One can check that $\Theta(w_1)=w_1$. Let $e_1$ denote the edge of color $i$ incident to $w_1$. Then
$$(e_1,\Theta\cdot e_1,\dots,\Theta^{N-1}\cdot e_1)$$
lists all the edges incident to $w_1$ in the order of the rainbow. 
If $w_h=hw_1$ is another white vertex, then the incident edges in the order of the rainbow are given by
$$(h\cdot e_1,h\Theta\cdot e_1,\dots,h\Theta^{N-1}\cdot e_1).$$

Therefore, let us label the edges of the Adinkra by elements of $G$, the full deck transformation group. Every $g\in G$ can be written uniquely as $h\Theta^i$ for some $i$;  by the above discussion, the element $h\Theta^i$ corresponds to the edge of color $i-1$ incident to $w_h$. 

We can therefore describe $\sigma_0$ as an element of $S_G$, the symmetric group on the elements of $G$:

$$\sigma_0=\prod_{h\in H}(h,h\Theta,h\Theta^2,\dots,h\Theta^{N-1})\in S_{G}.$$

Now let us describe $\sigma_1$. The black vertex $b_h$ is connected to $w_{\psi(e_i)h}$ by the color $i$. On the other hand, we have already agreed to assign the label $h\psi(e_i)\Theta^{i-1} \in G$ to the edge of color $i$ incident to $w_{h\psi(e_i)}$.  

It follows that 
$$\sigma_1=\prod_{h\in H}(\psi(e_N)h\Theta^{N-1},\psi(e_{N-1})h\Theta^{N-2},\dots,\psi(e_2)h\Theta,\psi(e_1)h)\in S_G.$$

The product $\sigma_1\sigma_0$ is
$$\sigma_{\infty}=\prod_{i=1}^N\prod_{h\in T_i}(h\Theta^{i-1},\psi(e_{i+1})\psi(e_{i})h\Theta^{i-1}),$$
where $T_i$ is a set of coset representatives for $H/\gen{\psi(e_{i+1})\psi(e_{i})}$. Notice that $h\Theta^{i-1}$ is the edge of color $i$ incident to $w_h$ and $\psi(e_{i+1})\psi(e_{i})h\Theta^i$ is the edge of color $i$ incident to $w_{\psi(e_{i+1})\psi(e_{i})h},$
so that the transposition lists the opposite edges of the $i/(i+1)$ colored face.

Lastly, the element $\pi_\infty$  is the product of the $4$-cycles that list the edges of each face as we move clockwise around the center. Using our labeling, we find that
$$\pi_\infty=\prod_{i=1}^N \prod_{h\in T_i}(h\Theta^{i-1},h\Theta^{i},\psi(e_{i})\psi(e_{i+1})h\Theta^{i-1},\psi(e_{i})\psi(e_{i+1})h\Theta^{i}).$$

Notice that we can obtain $\sigma_\infty$ from $\pi_\infty$ by omitting the second and fourth entries above, which corresponds to  looking only at the two edges of the same color in each face. This is in line with what we observed earlier when we were discussing monodromies. In summary, we have used our map $\psi$ and our explicit determination of the deck transformation group $G$ to label the monodromies $\sigma_0,\sigma_1,\sigma_\infty,$ and $\pi_\infty$ as elements of $S_G$.

Now, we  would like to describe the monodromies over each face on the curve $B_N^{\mathrm{alg}}$. Consider the usual map $\pi : X_N^\mathrm{\mathrm{alg}} \to B_N^\mathrm{\mathrm{alg}}$. Since $\pi$ is unramified at the vertices, we can label these monodromies by the vertices. Sticking to the notation  above, we  let $w_h$ be the white vertex corresponding to the element $h\in H$. Let $\rho_i$ be the monodromy corresponding to the $i/(i+1)$ colored face. Then $\rho_i$ is the product of the transpositions that interchange opposite white vertices in each face. If $w_h$ is one of these vertices, then $w_{\psi(e_i)\psi(e_{i+1})h}$ is the other. Therefore, if we identify $w_h$ with $h\in H$, then the  elements 
$$\rho_i=\prod_{h\in H_i}(h,\psi(e_i)\psi(e_{i+1})h)$$
describe the monodromies over the $i/(i+1)$ colored face, where $H_i$ is a set of coset representatives for $H/\gen{\psi(e_i)\psi(e_{i+1})}$. Therefore, one way to view $\rho_i$ is as follows. Write $H$ as a disjoint union of the cosets of $\gen{\psi(e_i)\psi(e_{i+1})}$ in $H$. Then $\rho_i$ is the product of the transpositions that interchange the two elements of each coset. 

The fact that
$\rho_1\cdots\rho_N=1$ is easily seen from this point of view since 
$$\prod_{i=1}^N\psi(e_i)\psi(e_{i+1})=\prod_{i=1}^{N+1}\psi(e_i)=1.$$

Now let $\mathcal{C}_k$ be a $k$-dimensional doubly even code, so that $\psi(\mathcal{C}_k)\subseteq H$ is a subgroup of fixed-point-free automorphisms of $X_N^{\mathrm{alg}}$, the quotient by which is $X^{\mathrm{alg}}$. Two white vertices $w_{h_1}$ and $w_{h_2}$ in $X_N^{\mathrm{alg}}$ are identified in $X^{\mathrm{alg}}$ exactly when $h_1=ch_2$ for some $c\in \mathcal{C}_k$. Equivalently, they are identified if $h_1$ and $h_2$ are in the same coset of $H/\mathcal{C}_k$, where we are identifying $\mathcal{C}_k$ with its image in $H$ via $\psi$. We may therefore label the white vertices of $X^{\mathrm{alg}}$ with elements of $H/\mathcal{C}_K$. If $\rho_i(X^{\mathrm{alg}})$ is the monodromy at the $i/(i+1)$ colored face, then 
$$\rho_i(X^{\mathrm{alg}})=\prod_{h\in H_i}(h,\psi(e_i)\psi(e_{i+1})),$$
where $H_i$ is a set of coset representatives for $\gen{\psi(e_{i+1})(\psi(e_{i})}$ in $H/\mathcal{C}_k$. In fact, since $C_k\cap\gen{\psi(e_i)\psi(e_{i+1})}=\{1\}$, we could also define $\rho_i$ by

$$\rho_i(X^{\mathrm{alg}})=\prod_{t\in T_i}(t,\psi(e_i)\psi(e_{i+1})),$$
where $T_i$ is a set of representatives for the $(\gen{\psi(e_i)\psi(e_{i+1})},\mathcal{C}_k)$ double cosets in $H$. 

We have therefore expressed the monodromies of $\pi$ as elements of $S_H$, the symmetric group on $H$, and the monodromies of $X^{\mathrm{alg}}\to B_N^{\mathrm{alg}}$ as elements of $S_{H/C_k}$. The natural projection map between the monodromies is now just induced by the natural map on symmetric groups 
$$S_H\to S_{H/\mathcal{C}_k}.$$

Finally, it is clear from this point of view that the monodromy group of $X^{\mathrm{alg}}\to B_N^{\mathrm{alg}}$ is generated by $N-k-1$ elements. Indeed, we see at once that the monodromy group for $\pi$ is generated by $\rho_1,\dots,\rho_{N-1}$. By design, the map to the monodromy group of $X^{\mathrm{alg}}\to B_N^{\mathrm{alg}}$ is surjective. The quotienting by $\mathcal{C}_k$ will introduce an additional $k$ relations among the $\rho_i$, so that the group is generated by $N-k-1$ elements.

\subsection{Field of Definition}
As noted at the beginning of the section, Belyi curves are always definable over $\overline{\mathbb{Q}}$. The model that we provided for $X_N^{\mathrm{alg}}$ was visibly defined over $\mathbb{Q}(\zeta)$, where $\zeta$ is a primitive $2N$-th root of unity. In fact, the M\"{o}bius transformation that was used to map the $N$-th roots of $-1$ to the branch locus of $X_N^{\mathrm{alg}}$ necessarily maps each root of unity to the real line, so that the model is in fact defined over the maximal real subfield $\mathbb{Q}(\zeta)^+\subseteq\mathbb{Q}(\zeta)$. It turns out that we can do much better than this. We will show that  $X_N$ is definable over $\mathbb{Q}$. In order to do this, we will need to introduce some preliminaries. 

If $f\colon X\to Y$ is a morphism of varieties defined over $L$ and $\sigma$ is a field automorphism, then $f^\sigma\colon X^\sigma\to Y^\sigma$ is the map defined by $\sigma f \sigma^{-1}$. It fits into the following commutative diagram:
$$\xymatrix{X\ar_\sigma[d]\ar^f[r]&Y\ar^\sigma[d]\\
X^\sigma\ar^{f^\sigma}[r]&Y^\sigma}$$

\begin{definition}
Let $X$ be an algebraic variety defined over a number field $L$, and let $K\subseteq L$ be a subfield over which $L$ is Galois with Galois group $\Gamma$. A \emph{Galois descent datum} for $X$ with respect to $L/K$ is a family of birational isomorphisms $\{f_{\sigma}\colon X\to X^\sigma\}_{\sigma\in\Gamma}$ that satisfy the following cocycle condition:
$$f_{\sigma\tau}=f_\tau^\sigma f_\sigma$$
for every $\sigma,\tau\in\Gamma$.
\end{definition}

According to Theorem 1 of \cite{Hidalgo}, $X$ is definable over the smaller field $K$ if and only if $X$ admits a Galois descent datum with respect to $L/K$. This is a very strong condition: even if $X$ and $X^\sigma$ are birationally equivalent for each $\sigma\in\Gamma$, it may not be the case that $X$ is defined over $K$ if the birational maps are not compatible. It turns out that the additional structure gained from the fact that $X_N$ factors through $B_N$, together with the fact that we can translate the Galois action on $B_N$ into a geometric action via M\"{o}bius transformations, allows us to build the morphisms necessary for a Galois descent datum. 

Let $\zeta\in\overline{\mathbb{Q}}$ be a fixed primitive $N$-th root of unity, and let $B=\mathbb{P}^1(\mathbb{Q}(\zeta))$. Now set
$$f(z)=\frac{z-\zeta}{z-\zeta^{-1}}\xi$$ where $\xi=\frac{\zeta^3-\zeta^{-1}}{\zeta-\zeta^3}$. Then $f$ is a biregular isomorphism from $B$ to itself that maps the $N$-th roots of $-1$ into $\mathbb{Q}(\zeta)^+$ with $\zeta,\zeta^3,\zeta^{2N-1}$ mapping to $0,-1,\infty$ respectively. In order to make the following exposition clearer, it will be necessary to modify the notation used for the algebraic model of $X_N$. First, let us fix an ordering of the $N$-th roots of unity by setting
$$\zeta_i=\zeta^{2i-1},\ i=1,\dots, N.$$
Similarly, let us set
$$\mu_i=f(\zeta_i)$$
In particular, $\mu_1=0,\mu_2=-1,$ and $\mu_N=\infty$. If we let $X=C(-\mu_2,\dots,-\mu_{N-1})$ be the complete intersection of quadrics defined earlier, then $X$ is a model for $X_N$ defined over $\mathbb{Q}(\zeta)$. Then, by design, there is a map $\pi\colon X\to f(B)$ with branch locus $\{\mu_1,\dots,\mu_N\}$.

If we set $\zeta_0=f(0)$ and $\zeta_\infty=f(\infty)$, then the white vertices are the points of $\pi^{-1}(\zeta_0)$, the black vertices are the points of $\pi^{-1}(\zeta_\infty)$, and the faces of $X$ are the points of $\pi^{-1}(\mu_i)$ for $i=1,\dots, N$. From the description of $X$ as an intersection of quadrics, it is clear that all of these points have coordinates in $\overline{\mathbb{Q}}$. 

We will prove that $X$ has a model defined over $\mathbb{Q}$ by finding a Galois descent datum for an appropriate field. The first step in doing so is to show that $X$ and $X^\sigma$ are birationally equivalent for any field automorphism of $\overline{\mathbb{Q}}$. 

The idea is as follows. We know that $X^\sigma=C(-\mu_2^\sigma,\dots,-\mu_{N-1}^\sigma)$. On the other hand, if we set
$$f_\sigma(z)=f^\sigma f^{-1}(z),$$
then $f_\sigma$ is a biregular morphism from $f(B)$ to $f^\sigma(B)$ such that

$$\{\mu_1,\dots,\mu_N\}^\sigma=\{\mu_1^\sigma,\dots,\mu_N^\sigma\}.$$
Therefore, $X^\sigma$ is a model for $X_N$ that corresponds to choosing a different biregular morphism that identifies the $N$-th roots of $-1$ with an appropriate branch locus than $f$. By \cite{Carocca}, this  shows that they must be equivalent as Riemann surfaces. By making the isomorphism explicit, we will show that it can be used to build a Galois descent datum. 

\begin{proposition}\label{biregprop}
Let $N$ be the Galois closure (over $\mathbb{Q}$) of the smallest field containing $\mathbb{Q}(\zeta)$, and the coordinates of the faces of $X$. Let $\sigma \in\Gamma=\Gal(N/\mathbb{Q})$. There exists a biregular morphism $\eta\colon X\to X^\sigma$, defined over $N$, that fits into the following commutative diagram:

$$\xymatrix{X\ar^\eta[r]\ar_\pi[d]&X^\sigma\ar^{\pi_\sigma}[d]\\
B\ar^{f_\sigma}[r]&B_\sigma}$$
\end{proposition}

The proof of this proposition is located in the Appendix. We now use this result to prove the following theorem.

\begin{theorem}
\label{theorem:overQ}
The Riemann surface $X_N$ is defined over $\mathbb{Q}$. More precisely, let $M$ denote the Galois closure (over $\mathbb{Q}$) of the smallest field containing $\mathbb{Q}(\zeta)$,  the coordinates of a fixed white vertex $w\in\pi^{-1}(\zeta_0)$, and the coordinates of all of the faces of $X_N$. Then there exists a Galois descent datum for $X$ with respect to $M/\mathbb{Q}$. 
\end{theorem}

\begin{proof}
Fix $\sigma\in\Gamma$, where $\Gamma=\Gal(M/\mathbb{Q})$. According to Proposition \ref{biregprop}, there exists a biregular morphism $\eta\colon X\to X^\sigma$, defined over $N\subseteq M$, that fits into the appropriate commutative diagram. Using the deck transformation group $H\cong (\mathbb{Z}/2\mathbb{Z})^{N-1}$ of $\pi_\sigma$, we see that there are precisely $2^{N-1}$ such morphisms, all obtained by applying deck transformations to $\eta$. Both $\eta(w)$ and $w^\sigma$ are elements of $\pi_\sigma^{-1}(\zeta_0^\sigma)$. Since $H$ acts simply transitively on this fiber, there is a unique deck transformation $h_\sigma\in H$ such that $h_\sigma\eta(w)=w^\sigma$. That is, there is a unique biregular morphism that satisfies the appropriate commutative diagram and takes $w$ to $w^\sigma$; we will call this morphism $\eta_{\sigma}$. 

The collection $\{\eta_\sigma\}_{\sigma\in\Gamma}$ is a Galois descent datum. Indeed, one can check that
$$f_{\sigma\tau}=f_\tau^\sigma f_\sigma,$$ from which it follows that $\eta_{\sigma\tau}$ and $\eta_{\tau}^\sigma \eta_\sigma$ satisfy the same commutative diagram. Since both maps take $w$ to $w^{\sigma\tau}$, the maps must be equal by uniqueness. 
\end{proof}

Theorem \ref{theorem:overQ} implies that the curve $X_N$ can be defined over $\mathbb{Q}$. There is in \cite{Hidalgo} a formulation of a Galois descent datum for a group of automorphisms on a variety $X$, and a corresponding theorem  that says that the existence of a Galois descent datum implies that $X$ and the group of automorphisms are defined over a smaller field. The consequence for our situation is that each automorphism of the rational model for $X_N$ is defined over $\mathbb{Q}$. Indeed, $\{\eta_\sigma\}$ forms a Galois descent datum, and each automorphism in $H$ is already defined over $\mathbb{Q}$.

\begin{proposition}
The curves in  $\mathcal{X}_{(N,k)}$ are  definable over $\mathbb{Q}$. 
\end{proposition}
\begin{proof}
By the above discussion, $X_N$ and its automorphism group $H$ are defined over $\mathbb{Q}$. This readily implies that the quotients by fixed-point free subgroups will also be defined over $\mathbb{Q}$. 
\end{proof}

An interesting consequence of this result is that for any of the Fuchsian groups $\Gamma_{(N,k)}$ that uniformized the curves in $\mathcal{X}_{(N,k)}$,  $\Delta/\Gamma_{(N,k)}$ is defined over $\mathbb{Q}$. This provides us with an interesting family of ``modular'' curves that have integral models. 

\section{Conclusion}

Let us summarize the results of this paper. We have shown that every Adinkra chromotopology canonically defines a Riemann surface as a covering space over $\bP^1(\bC)$ branched over $\{0,1,\infty\}$. The study of Adinkras is of interest because they are graphical presentations of off-shell representations of the $(1|N)$ superalgebra. Describing such representations as Riemann surfaces provides a new approach to unanswered problems in supersymmetry. Many of the structures of an Adinkra are described in terms of the $2$-colored loops. This makes their description in terms of surfaces very natural, as statements about $2$-colored loops become properties of the $2$-cells in the natural CW-decomposition. We have given a complete description of these surfaces in multiple forms. The different descriptions of the surfaces associated to Adinkra chromotopologies, again, provide varied approaches to solving problems of interest. We list some of the salient features of the different descriptions here.
\begin{enumerate}
	\item The description of these Riemann surfaces as covering spaces of $\bP^1(\bC)$ allowed us to illustrate the relationship between Adinkra chromotopologies and quotients of the Hamming cube using the Galois theory of covering spaces.
	\item Covering space theory allowed us to give a Fuchsian uniformization of the surfaces in terms of torsion-free normal subgroups of the $(N,N,2)$-triangle group.
	\item Finally, we gave an explicit algebraic description of the surfaces as complete intersections of quadrics in projective space.
	\item The algebraic description allowed us to see the quotienting of Adinkra chromotopologies as reflections on affine coordinates in projective space.
	\item Properties and results about Adinkras can now be recast geometrically. We gave geometric interpretations of some of the important features of Adinkras.
\begin{enumerate}
\item We have shown that Adinkra chromotoplogies related by $R$-symmetry describe isomorphic Riemann surfaces.
\item We gave a description of the tensor product of Adinkras in terms of a multi-point connected sum of the associated surfaces.
\end{enumerate}
\end{enumerate}

An Adinkra is defined through stripping the supersymmetry algebra of its spatial dimensions. After showing that Adinkras naturally gives rise to very special Riemann surfaces, it is reasonable to ask whether this emergent ``extra'' dimension is physically meaningful (e.g., are there settings in which the Riemann surface can be naturally identified with a string worldsheet?).  Setting this sort of speculation aside, there is a completely natural way to ``remove'' this emergent geometric dimension; we may integrate along $1$-cycles to produce periods.  In other words, we can study the Jacobian abelian variety of our geometrized Adinkra.

Studying so-called regular dessins, Wolfart was able to show that there is a connection between the Jacobians of Riemann surfaces of the sort we have produced with the Jacobians of Fermat curves \cite{Wolfart}. More precisely, it is shown that any $1$-dimensional factor of the natural representation of the automorphism group of the dessin on its Jacobian corresponds to a complex multiplication factor which appears in the Jacobian of a Fermat curve. A complete analysis of the arithmetic properties of the Jacobians of our geometrized Adinkra chromotopologies constitutes work in progress.

The geometric interpretation of Adinkra chromotopologies ignores two additional structures an Adinkra possesses that must be included in order to give a full geometric description of off-shell representations of the $(1|N)$ superalgebra: an odd dashing and height assignment. In a subsequent paper, \cite{Doran:2015}, we show that the odd dashing defines a spin structure on the associated Riemann surface. Following work by Donagi and Witten \cite{Donagi:2013}, the addition of a spin structure allows us to define a canonical super Riemann surface structure with Ramond punctures. We also show in \cite{Doran:2015} that the Adinkra height assignments define a discrete Morse function on the super Riemann surface in the sense of both Banchoff \cite{Banchoff:1970} and Forman \cite{Forman:1998a,Forman:1998b}. The height assignment simultaneously admits an interpretation as a divisor on the (super) Riemann surface. Operations such as raising and lowering of nodes \cite{Gates:1995}, which play a key role in the physical application of Adinkras, are now geometrically meaningful operations on these ``Morse divisors".

Geometrized chromotopologies are very special as Riemann surfaces, and they remain so even when viewed as Belyi curves. While not the focus of this current paper, spin structures that correspond to odd dashings on an Adinkra are likewise distinguished. The same is true for the Morse divisors coming from height assignments.  The fact that Adinkras correspond to very special points in a moduli space of well-studied geometric objects may provide a key new tool for understanding supersymmetric representation theory.  It is our hope that the category of spin curves with Morse divisor, which has emerged through geometrizing Adinkrizable supermultiplets, may also be naturally broadened to include geometric incarnations of both non-Adinkrizable supermultiplets \cite{Hubsch:2013,Doran:2013}, and worldline reductions (``shadows'') of on-shell supermultiplets of physical interest.

\section*{Appendix: Proofs of Propositions \ref{RotProp} and \ref{biregprop}}
\begin{proof}[Proof of Proposition \ref{RotProp}]
We now supply the details for the proof of Proposition \ref{RotProp}. Let $\rho$ denote the automorphism of $\mathbb{P}^{N-1}(\mathbb{C})$ given by 
\begin{eqnarray}
\rho[x_1\colon x_2\colon\cdots\colon x_N]&=&[x_N\colon c_2x_1\colon\cdots\colon c_N x_{N-1}],\nonumber
\end{eqnarray}
where $c_2=\sqrt{-\mu_{N-1}}$, $c_3=1$, and $c_i=\sqrt{\mu_{i-1}}$ for $i=4,\dots,N$. For brevity, let $x=[x_1,\dots, x_N]$; we need to verify that $\rho(x)$ satisfies the $N-3$ equations that cut out $X_N^{alg}$. Plugging $\rho(x)$ into the first defining equation, we are reduced to showing that

\begin{equation}\label{rhocheck}x_N^2-\mu_{N-1}x_1^2+x_2^2=0\end{equation}
But this equation holds, being the defining equation of $X_N^{alg}$ that corresponds to the the variable $x_N$. 

The other defining equations of $X_N$ are
$$-\mu_{i}x_1^2+x_2^2+x_{i+1}^2=0,\ i=2,\dots,N-1.$$
Plugging in $\rho(x)$, we must therefore verify that
$$-\mu_i x_N^2-\mu_{N-1}x_1^2+\mu_i x_i^2=0$$ for $3\leq i\leq N-1$. Using \eqref{rhocheck} to replace $x_N^2$ with $\mu_{N-1}x_1^2-x_2^2$, we are reduced to checking that
\begin{equation}\label{35a}
\mu_{N-1}\frac{-\mu_i-1}{\mu_i}x_1^2+x_2^2+x_i^2=0.
\end{equation}
We claim that 
\begin{equation}\label{35a}\mu_{N-1}\frac{-\mu_i-1}{\mu_i}=-\mu_{i-1}.\end{equation}
This will complete the proof of our claim about $\rho$, since equation \eqref{35a} will then be true, being the same as the following defining equation of $X_N^\mathrm{alg}$:
$$ -\mu_{i-1} x_1^2 + x_2^2 + x_i^2 = 0. $$

Recall that the cross-ratio of any four distinct elements of a field is defined as
$$(z_1,z_2;z_3,z_4)=\frac{z_1-z_3}{z_1-z_4}\cdot\frac{z_2-z_4}{z_2-z_3}.$$
In terms of cross-ratios, 
$$-\mu_i=(\zeta_i,\zeta_1;\zeta_3,\zeta_N)$$ for all $1\leq i\leq N$. On the other hand, the action of the symmetric group on  cross-ratio is well known. In particular, we have
$$\frac{-\mu_i-1}{-\mu_i}=(\zeta_i,\zeta_N;\zeta_2,\zeta_1),$$

Writing \eqref{35a} in terms of cross-ratios and canceling out terms on both sides,  we are left with checking that the following equality holds:
$$\frac{\zeta_{N-1}-\zeta_1}{\zeta_{N-1}-\zeta_2}\cdot\frac{\zeta_2-\zeta_N}{\zeta_2-\zeta_1}\cdot\frac{\zeta_i-\zeta_2}{\zeta_i-\zeta_1}\cdot\frac{\zeta_N-\zeta_1}{\zeta_N-\zeta_2}=\frac{\zeta_{i-1}-\zeta_1}{\zeta_{i-1}-\zeta_2}\cdot\frac{\zeta_2-\zeta_N}{\zeta_2-\zeta_1}.$$
Observe that  each side in the above expression is a positive real number. Therefore, it suffices to show that both sides have the same absolute value. Using the fact that the $N$-th roots of $-1$ form a regular $N$-gon inscribed in the unit-circle, it can be checked that the two sides of the equation are indeed equal. For example, we see that
$$|\zeta_{i-1}-\zeta_1|=|\zeta_{i}-\zeta_2|,$$
so we can cancel out the corresponding quantities on each side. We have proved our claim about $\rho$.

Finally, one can check that the rotation on $B_N^{alg}$ is given by
$$r(z)=\frac{\mu_{N-1}}{z-\mu_{N-1}}.$$
Therefore, 
$$r(\pi(x))=-\frac{\mu_{N-1}x_1^2}{x_N^2}=\pi(\rho(x)),$$
so that $\rho$ is indeed a deck transformation of the Belyi map. 
\end{proof}
 
\begin{proof}[Proof of Proposition \ref{biregprop}]

We will describe the map $\eta$ explicitly as the composition of a permutation of the coordinates followed by a diagonal morphism on the ambient space.
The action of $\sigma$ on $\mathbb{Q}(\zeta)$ can be encoded by a permutation of the set $\{1,\dots,N\}$ determined by
$$\zeta_{\sigma(i)}=\zeta_i^\sigma.$$
Moreover, 
we have 
\begin{equation}\label{GalEq}
f_\sigma (\mu_{\sigma(i)})=\mu_i^\sigma
\end{equation}

Notice that the face corresponding to $\pi^{-1}(\mu_i)$ is given by $x_{i+1}=0$, where we compute $i+1$ modulo $N$ using $N$ as our representative for $0$.

We will consider three cases. First,  suppose that $\sigma|_{\mathbb{Q}(\zeta)}=\mathrm{id}$. Then $X^\sigma=X$, 
$f_\sigma=\mathrm{id}$, and it suffices to take $\eta=\mathrm{id}$.

Next, let us suppose that $\sigma|_{\mathbb{Q}(\zeta)}\colon\zeta\mapsto\zeta^{-1}$. Note that in this case, we still have $X^\sigma=X$, but since $f_\sigma$ is non-trivial, we cannot simply take a deck transformation for the requisite diagram to commute. As a permutation of the set $\{1,\dots,N\}$, $\sigma$ is the product of the transpositions $(i,N-i+1)$ for $1\leq i\leq \lfloor N/2\rfloor$. Let $s$ and $d$ be the automorphisms of $\mathbb{P}^{N-1}(\mathbb{C})$ defined by

\begin{eqnarray}s\colon\mathbb{P}^{N-1}(\mathbb{C})&\to &\mathbb{P}^{N-1}(\mathbb{C})\nonumber\\
\left[x_1\colon x_2\colon\cdots\colon x_N\right]&\mapsto &\left[x_{\sigma(N)+1}\colon x_{\sigma(1)+1}\colon x_{\sigma(2)+1}\colon\cdots \colon x_{\sigma(N-1)+1}\right]\nonumber\\
&=&\left[x_2\colon x_1\colon x_{N-3}\colon\cdots\colon x_3\right]\nonumber\\
d\colon\mathbb{P}^{N-1}(\mathbb{C})&\to &\mathbb{P}^{N-1}(\mathbb{C})\nonumber\\
\left[x_1\colon x_2\colon\cdots\colon x_N\right]&\mapsto &\left[x_1\colon c_2 x_2\colon\cdots\colon c_Nx_N\right],\nonumber
\end{eqnarray}

where $c_3=1$, $c_i=\sqrt{-\mu_{i-1}}$ for $4\leq i\leq N$, and $c_2=c_N$. Here, the notation $\sqrt{-\mu_{i-1}}$ means either choice of square root of $-\mu_{i-1}$ in $\overline{\mathbb{Q}}$.  Then $\eta=d\circ s$ restricts to an isomorphism from $X$ to $X^\sigma$ that makes the requisite diagram commute. The details for why $\eta$ respects the equations for $X$ and $X^\sigma$ are contained in the lemma below. One can check that the face corresponding to $\mu_1$ is given by
$$\pi^{-1}(0)=[1\colon 0\colon\pm\sqrt{-1}\colon\pm\sqrt{\mu_3}\colon\cdots\colon\pm\sqrt{\mu_N}],$$
from which it is clear that $\eta$ will be defined over $M$. 

Lastly, suppose that $\sigma$ restricts to some other automorphism of $\mathbb{Q}(\zeta)$. Consider the face of $X$ given by
$$\pi^{-1}(\mu_{\sigma(1)})=[a_1\colon a_2\colon\cdots\colon a_N],$$ 
where the $a_i$ are given as follows:

 $$a_i=\left\{\begin{array}{lll}1&\textrm{if }& i=1\\
\sqrt{\mu_{\sigma(1)}}&\textrm{if }& i=2\\
\sqrt{\mu_{i-1}-\mu_{\sigma(1)}}&\textrm{if }&i\geq 3.

\end{array}\right.$$

Notice that  by \eqref{GalEq}, this  face of $X$  maps to a face of $X^\sigma$ corresponding to $0$.  
Let $s$ and $d$ be the automorphisms of $\mathbb{P}^{N-1}(\mathbb{C})$ given by
\begin{eqnarray}s\colon\mathbb{P}^{N-1}(\mathbb{C})&\to &\mathbb{P}^{N-1}(\mathbb{C})\nonumber\\
\left[x_1\colon x_2\colon\cdots\colon x_N\right]&\mapsto &\left[x_{\sigma(N)+1}\colon x_{\sigma(1)+1}\colon x_{\sigma(2)+1}\colon\cdots \colon x_{\sigma(N-1)+1}\right]\nonumber\\
d\colon\mathbb{P}^{N-1}(\mathbb{C})&\to &\mathbb{P}^{N-1}(\mathbb{C})\nonumber\\
\left[x_1\colon x_2\colon\cdots\colon x_N\right]&\mapsto &\left[x_1\colon c_2 x_2\colon\cdots\colon c_Nx_N\right],\nonumber
\end{eqnarray}
where
\begin{eqnarray}
c_2&=& \sqrt{\mu_{\sigma^{-1}(N)}^\sigma}\nonumber\\
c_i&=&\sqrt{\mu_{i-1}^\sigma}\cdot\frac{a_{\sigma(N)+1}}{a_{\sigma(i-1)+1}},\ i\geq 3.\nonumber
\end{eqnarray}
Then $\eta=d\circ s$ restricts to an isomorphism from $X$ to $X^\sigma$ that makes the requisite diagram commute. Once more, the details for why $\eta$ respects the equations for $X$ and $X^\sigma$ are contained in the lemma below.  Since the $c_i$ are defined in terms of the coordinates of a face, it follows that $\eta$ will be defined over $M$. 

\end{proof}
\begin{lemma}\label{Lemma}
The morphisms $\eta$ defined above restrict to morphisms $X\to X^\sigma$.
\end{lemma}
\begin{proof}
We  start with the second case above (the first being trivial), where we claimed that we could take 
$$\eta=[x_2\colon c_2 x_1\colon c_3x_{N-3}\cdots\colon c_Nx_3]$$
with $c_3=1$, $c_i=\sqrt{-\mu_{i-1}}$ for $4\leq i\leq N$, and $c_2=c_N$. Before showing that $\eta$ defines an isomorphism, we remark that 
\begin{equation}\label{c2transp}-c_2^2=c_i^2\mu_{N-i+2},\ i\geq 3.\end{equation}
The proof  uses the same cross-ratio trick we employed earlier and is omitted. 

The verification that $\eta$ is an isomorphism boils down to verifying that
\begin{equation}\label{37a}
-\mu_{i-1}x_2^2+c_2^2x_1^2+c_i^2x_{N-i+3}^2=0
\end{equation}
for all $i\geq 3$. Dividing through by $c_i^2 = -\mu_{i-1}$ and using equation \eqref{c2transp}, we see that \eqref{37a} has the equivalent form
$$-\mu_{N-i+2}x_1^2+x_2^2+x_{N-i+3}^2=0$$
and this equation holds because it is a defining equation for $X_N^{alg}$.

For the remaining case, we again start by showing that $c_2^2$ can be expressed in terms of each $\mu_i$ for $i\geq 3$. In fact, with  notation as in Proposition \ref{biregprop}, we have
$$c_2^2=\mu_{i-1}^\sigma\left(\frac{b_{\sigma(i-1)+1}}{a_{\sigma(i-1)+1}}\right)^2=\mu_{i-1}^\sigma\frac{\mu_{\sigma(N)}-\mu_{\sigma(i-1)}}{\mu_{\sigma(1)}-\mu_{\sigma(i-1)}}.$$
To see why, note that if $i-1=\sigma^{-1}(N)$, then the above just says
$$c_2^2=\mu_{\sigma^{-1}(N)}^\sigma=f_\sigma(\mu_N),$$
which is true just by how we defined $c_2$. 
We will now show that for all other $i\geq 3$ for which $i-1\neq\sigma^{-1}(N)$, we have
$$\mu_{i-1}^\sigma\cdot\frac{\mu_{\sigma(N)}-\mu_{\sigma(i-1)}}{\mu_{\sigma(1)}-\mu_{\sigma(i-1)}}=f_\sigma(\mu_N).$$
We know that $\mu_{i-1}^\sigma=f^\sigma(\zeta_{\sigma(i-1)})$ and $f_\sigma(\mu_N)=f^\sigma(\zeta_N)$.
After canceling the factor of $\xi^\sigma$ coming from both of these identities, we are reduced to showing that
$$\frac{\zeta_{\sigma(i-1)}-\zeta_{\sigma(1)}}{\zeta_{\sigma(i-1)}-\zeta_{\sigma(N)}}\cdot\frac{\mu_{\sigma(N)}-\mu_{\sigma(i-1)}}{\mu_{\sigma(1)}-\mu_{\sigma(i-1)}}=\frac{\zeta_N-\zeta_{\sigma(1)}}{\zeta_N-\zeta_{\sigma(N)}}.$$

In terms of cross-ratios,
$$\frac{\mu_{\sigma(N)}-\mu_{\sigma(i-1)}}{\mu_{\sigma(1)}-\mu_{\sigma(i-1)}}=(\mu_{\sigma(N)},\mu_{\sigma(1)};\mu_{\sigma(i-1)},\infty).$$
Since M\"{o}bius transformations preserve cross-ratios, the above cross-ratio is equal to
$$(\zeta_{\sigma(N)},\zeta_{\sigma(1)};\zeta_{\sigma(i-1)},\zeta_N)=\frac{\zeta_{\sigma(N)}-\zeta_{\sigma(i-1)}}{\zeta_{\sigma(1)}-\zeta_{\sigma(i-1)}}\cdot\frac{\zeta_{\sigma(1)}-\zeta_N}{\zeta_{\sigma(N)}-\zeta_N}.$$

Therefore, we are left to verify 
$$\frac{\zeta_{\sigma(i-1)}-\zeta_{\sigma(1)}}{\zeta_{\sigma(i-1)}-\zeta_{\sigma(N)}}\cdot
\frac{\zeta_{\sigma(N)}-\zeta_{\sigma(i-1)}}{\zeta_{\sigma(1)}-\zeta_{\sigma(i-1)}}\cdot\frac{\zeta_{\sigma(1)}-\zeta_N}{\zeta_{\sigma(N)}-\zeta_N}=
\frac{\zeta_N-\zeta_{\sigma(1)}}{\zeta_N-\zeta_{\sigma(N)}}.$$
We immediately see that everything cancels out nicely, so that 

$$c_2^2=\mu_{i-1}^\sigma\cdot\frac{\mu_{\sigma(i-1)}-\mu_{\sigma(N)}}{\mu_{\sigma(i-1)}-\mu_{\sigma(1)}},$$
as desired.

We now show that these choices of $c_i$ give rise to an isomorphism $\eta$. 
This amounts to verifying 
\begin{equation}\label{etacheck}-\mu_{i-1}^\sigma x_{\sigma(N)+1}^2+c_2^2x_{\sigma(1)+1}^2+c_{i}^2x_{\sigma(i-1)+1}^2=0,\ i\geq 3.
\end{equation}

There are three cases. First, suppose that $i-1=\sigma^{-1}(N)$, so that $x_{\sigma(i-1)+1}=x_1$. In this case, 
$$c_{i}^2=\mu_{i-1}^\sigma(\mu_{\sigma(N)}-\mu_{\sigma(1)})$$ and
$$c_2^2=\mu_{i-1}^\sigma.$$

Using the equations for $X_N$, we may use the identity
\begin{equation}\label{38a}
x_{\sigma(N)+1}^2=\mu_{\sigma(N)}x_1^2-x_2^2
\end{equation}
in equation \eqref{etacheck} and divide by $c_2^2$, at which point it suffices to show that
$$x_{\sigma(1)+1}^2+x_2^2-\mu_{\sigma(1)}x_1^2=0.$$
But this is exactly the defining equation for $X_N$ corresponding to $x_{\sigma(1)+1}$, so it is indeed true. 

The next case is  $\sigma(i-1)=1$, so that $x_{\sigma(i-1)+1}=x_2$. In this case, we have $$c_i^2=\mu_{i-1}^\sigma\frac{\mu_{\sigma(N)}-\mu_{\sigma(1)}}{\mu_{\sigma(1)}}$$ and
$$c_2^2=\mu_{i-1}^\sigma\frac{\mu_{\sigma(N)}}{\mu_{\sigma(1)}}.$$
Again using the identity in \eqref{38a}, we are left with verifying that
$$x_{\sigma(1)+1}^2+x_2^2-\mu_{\sigma(1)}x_1^2=0,$$ which is valid because, as before, it is one of the defining equations for $X_N$.  

Finally, for all other values of $i$, we use both of the identities
\begin{eqnarray}
x_{\sigma(N)+1}^2&=&\mu_{\sigma(N)}x_1^2-x_2^2,\nonumber\\
x_{\sigma(1)+1}^2&=&\mu_{\sigma(1)}x_1^2-x_2^2.\nonumber
\end{eqnarray}
Plugging them into \eqref{etacheck} and arguing similarly as above, we are reduced to checking that
$$\frac{(-\mu_{i-1}^\sigma\mu_{\sigma(N)}+c_2^2\mu_{\sigma(1)})}{c_i^2}=-\mu_{\sigma(i-1)},$$
which follows easily  from our work above. 
\end{proof}

\section*{Acknowledgements}

We would like to thank S.J. Gates, Jr., and T. H\"{u}bsch for extended discussions and useful suggestions while writing this paper. We would also like to thank J. Iverson for computing the generators of the monodromy groups for some of the examples in Sage. CD and SMD acknowledge the support from the Natural Sciences and Engineering Resource Council of Canada, the Pacific Institute for the Mathematical Sciences, and the McCalla Professorship at the University of Alberta. GL acknowledges the support by a grant from the Simons Foundation (Award Number 245784).

\bibliographystyle{hplain}

\address[Charles Doran, Jordan Kostiuk]{Department of Mathematical and Statistical Sciences\\
University of Alberta\\
Edmonton, AB T6G 2G1, Canada}
\email{doran@math.ualberta.ca, jkostiuk@ualberta.ca}
  
\author[K. Iga]{Kevin Iga, test}
\address[Kevin Iga]{Natural Science Division\\
Pepperdine University\\
Malibu, CA 90263, USA} 
\email{kiga@pepperdine.edu}

\author[G. Landweber]{Greg Landweber}
\address[Greg Landweber]{Mathematics Program\\
Bard College\\
Annandale-on-Hudson, NY 12504-5000, USA} 
\email{gregland@bard.edu}

\author[S. M\'{e}ndez-Diez]{Stefan M\'{e}ndez-Diez}
\address[Stefan M\'{e}ndez-Diez]{Department of Mathematics \& Statistics\\
Utah State University\\
Logan, UT 84322-3900, USA}
\email{stefan.md@usu.edu}

\end{document}